\renewcommand{\algorithmiccomment}[1]{\bgroup\hfill//~#1\egroup}
\algnewcommand{\LineComment}[1]{\Statex \hfill //~#1}
\newcommand{\multiline}[1]{%
  \begin{tabularx}{\dimexpr\linewidth-\ALG@thistlm}[t]{@{}X@{}}
    #1
  \end{tabularx}
}
\newtheorem{thm}{Theorem}
\newcommand{\RN}[1]{%
  \textup{\uppercase\expandafter{\romannumeral#1}}%
}
\newtheorem{theorem}{Theorem}[section]
\newtheorem{proposition}[theorem]{Proposition}
\newtheorem{corollary}[theorem]{Corollary}
\date{}
\DeclareMathOperator*{\argmin}{arg\,min}
\algrenewcommand\algorithmicrequire{\textbf{Input:}}
\algrenewcommand\algorithmicensure{\textbf{Output:}}
\DeclareMathSymbol{\mh}{\mathord}{operators}{`\-}
\begin{document}
\title{Control Variate Polynomial Chaos: Optimal Fusion of Sampling and Surrogates for Multifidelity Uncertainty Quantification
\date{}}
\author{Hang Yang, Yuji Fujii, K. W. Wang, and Alex A. Gorodetsky}

\maketitle
We present a hybrid sampling-surrogate approach for reducing the computational expense of uncertainty quantification in nonlinear dynamical systems. Our motivation is to enable rapid uncertainty quantification in complex mechanical systems such as automotive propulsion systems. Our approach is to build upon ideas from multifidelity uncertainty quantification to leverage the benefits of both sampling and surrogate modeling, while mitigating their downsides. In particular, the surrogate model is selected to exploit problem structure, such as smoothness, and offers a highly correlated information source to the original nonlinear dynamical system. We utilize an intrusive generalized Polynomial Chaos surrogate because it avoids any statistical errors in its construction and provides analytic estimates of output statistics. We then leverage a Monte Carlo-based Control Variate technique to correct the bias caused by the surrogate approximation error. The primary theoretical contribution of this work is the analysis and solution of an estimator design strategy that optimally balances the computational effort needed to adapt a surrogate compared with sampling the original expensive nonlinear system. While previous works have similarly combined surrogates and sampling, to our best knowledge this work is the first to provide rigorous analysis of estimator design. We deploy our approach on multiple examples stemming from the simulation of mechanical automotive propulsion system models. We show that the estimator is able to achieve orders of magnitude reduction in mean squared error of statistics estimation in some cases under comparable costs of purely sampling or purely surrogate approaches.

\section{Introduction}
\label{sec:intro}

Quantifying uncertainty in complex simulation models has emerged as a critical aspect for gaining confidence in simulated predictions. Indeed, the use of uncertainty quantification (UQ) has seen a widespread adoption across domains such as automotive \cite{yang2020uncertainty}, aerospace \cite{pettit2004uncertainty, kenny2011role, geraci2019recent}, nuclear \cite{gilli2013uncertainty, kochunas2021digital}, civil \cite{sudret2008global, blatman2010adaptive}, and chemical engineering \cite{paulson2017arbitrary, makrygiorgos2020surrogate}, just to name a few. 
For complex nonlinear dynamical systems, the task of forward UQ can pose major challenges as closed-form solutions to the problems often do not exist, necessitating the need for expensive numerical schemes to approximate the stochastic processes.

In our motivating example of automotive propulsion systems, the growing need for UQ capabilities is fueled by increasingly stringent performance requirement and the rapid consideration of complex system configurations. The ability to conduct rapid UQ analysis of new nonlinear vehicle propulsion system models is in great demand as the auto industry focuses on propulsion system efficiency \cite{yang2021multifidelity, yang2022a}. However, computational expenses are constrained in real-world applications, especially so on-board a vehicle, limiting the feasibility of UQ analyses. Similar requirements are emerging across the above mentioned industries. Novel numerical methods are needed to address such computational bottlenecks in emerging systems. Our proposed method works towards this goal through the combination the two primary approaches to UQ: sampling methods and surrogate modeling. 

The most common and flexible approach to UQ is the Monte Carlo (MC) sampling method \citep{metropolis1949monte, robert2013monte}. 
In MC, one extracts statistical information from an ensemble of model runs executed at varying realizations of the uncertain variables. Due to its non-intrusive nature, MC is very flexible and straightforward to implement. However, the convergence is very slow, typically at the rate of $1/\sqrt{N}$, where $N$ is the number ensemble members. For example, to reduce the estimation error by one order of magnitude, one would need to run $100$ times greater number of samples. For systems with limited computational resources, the cost of MC is often considered impractically high. In some cases, slight improvements can be obtained by exploiting system structure through the uses of importance sampling~\cite{robert2013monte}, Latin Hypercube sampling~\citep{stein1987large, helton2003latin} or Quasi-Monte Carlo~\cite{caflisch1998monte, dick2013high, l2002recent}. However these methods are still slow compared to the second approach (when it is possible), surrogate modeling.

Surrogate-based approaches seek to reduce the cost of UQ by exploiting system structure. Specifically, fast-to-evaluate surrogates that approximate the input-output response of expensive-to-evaluate models can be utilized to alleviate the computational expense of having to evaluate a complex model for a prohibitively large number of times. These surrogates generally exploit structure such as smoothness or decomposability and are ideally computed via a far smaller computational expense than a full sampling approach.  Some of the most common surrogate methods in UQ include including generalized Polynomial chaos (gPC) \citep{ghanem2003stochastic, sudret2008global, xiu2002wiener, yang2020uncertainty, arnst2014reduced, ghanem2017handbook, bavdekar2016polynomial, ernst2012convergence, eldred2009recent, paulson2017arbitrary}, Gaussian process models \citep{williams2006gaussian, sacks1989design, bilionis2016bayesian, bilionis2013multi}, low-rank decomposition methods \citep{doostan2013non, gorodetsky2018gradient, oseledets2011tensor, gorodetsky2019continuous}, sparse grid interpolation methods \citep{nobile2008sparse, jakeman2012local, xiu2005high, agarwal2010data}, reduced basis approximations \citep{chen2014comparison, elman2013reduced, manzoni2016accurate, rozza2008reduced}, and neural networks \citep{zhu2018bayesian, qin2021deep, tripathy2018deep}. The main downside of a pure surrogate-modeling approach is that it introduces approximation error, which will cause biases in the estimated statistics. Moreover, this approximation error tends to increase in high dimensions. 

Ideally, one should be able to combine the unbiasedness of sampling with the structure exploitation of surrogates to obtain a method with the advantages of both. In this paper, we do so through a multifidelity lens. In many applications, there exist multiple models with various computational complexities that can describe the same system of interest, and it has become clear that simultaneously employing these models can yield much more significant computational improvements by leveraging correlations and other relationships between them \cite{peherstorfer2018survey}. There are three ingredients of the multifidelity approach: 1) the construction/selection of models of various fidelity levels that provide useful approximations of the target input-output relationship of a system; 2) an information fusion framework that combines these models; and 3) an estimator design strategy that allocates work between information sources to optimize estimation performance. 

With regard to the second ingredient, multifidelity modeling can be done with both sampling and surrogate-based methods. For example, MC sampling at different model resolutions is used in the Multi-level Monte Carlo \cite{giles2008multilevel, giles2015multilevel} and Weighted Recursive Difference estimators \cite{gorodetsky2020generalized}. The uses of Gaussian process models \cite{parussini2017multi, xiao2018extended, le2014recursive}, reduced-basis models \cite{boyaval2010variance, boyaval2012fast, narayan2014stochastic}, radial basis function models \cite{song2019radial, piazzola2020uncertainty, piazzola2021comparing}, linear regression models \cite{schaden2020multilevel}, and machine learning models \cite{tracey2013using, motamed2020multi} as surrogates in multifidelity UQ have also shown success. One recent MC sampling framework that has been extensively developed and used is that based on control variates (CV)~ \cite{lavenberg1981perspective, nelson1987control, hesterberg1996control, emsermann2002improving, robert2013monte, geraci2015multifidelity, geraci2019recent}. In a sampling-based CV strategy, one seeks to reduce the MC estimator variance of a random variable, arising from the high-fidelity model, by exploiting the correlation with an auxiliary random variable that arises from low-fidelity models approximating the same input-output relationship. In the classic CV theory, the mean of the auxiliary random variable is assumed to be known. Unfortunately, in many cases such assumption is not valid. This creates the need to use another estimator for the auxiliary random variable \cite{schmeiser2001biased, pasupathy2012control, gorodetsky2020generalized}, which incurs additional computational expenses. Finally, certain realizations of the control variate strategy come with the need if an estimator design strategy to optimize the computational resources allocation to reduce error \cite{pasupathy2012control, giles2015multilevel, peherstorfer2016optimal}.  

As we mentioned above, we seek to leverage surrogates as the low-fidelity models within a sampling-based multifidelity framework. The benefit in doing so is a two-fold: 1) the adoption of surrogates as low-fidelity models enables better exploitation of system structure; and 2) surrogates may provide a highly correlated information source that can be exploited by the CV sampling-based high-fidelity information source to correct any incurred approximation error for very little expense. In this paper we design such an estimator based on the principle of intrusive gPC, termed control variate polynomial chaos (CVPC). An illustration of this concept is presented in Figure \ref{f:CVPC_concept}. The adoption of  gPC as low-fidelity models in multifidelity UQ has been explored in \cite{garg2015uncertainty, fox2020applications, gu2015multi, yang2021multifidelity, yang2022a}. The benefit of this approach over non-intrusive gPC or other surrogate methods is two-fold: (1) the construction of the surrogate is not randomized (without requiring sampling and regression), and so we do not require theory for statistical error in regression; and (2) gPC provides analytic estimates of statistics, which are needed by the control variate method. If the latter benefit did not exist, one would require the usage of either an approximate control variate approach~\cite{gorodetsky2020generalized,pham2021ensemble} or other approaches~\cite{schaden2020multilevel, xu2021bandit}.

\begin{figure}
\centering
\includegraphics[width=9cm]{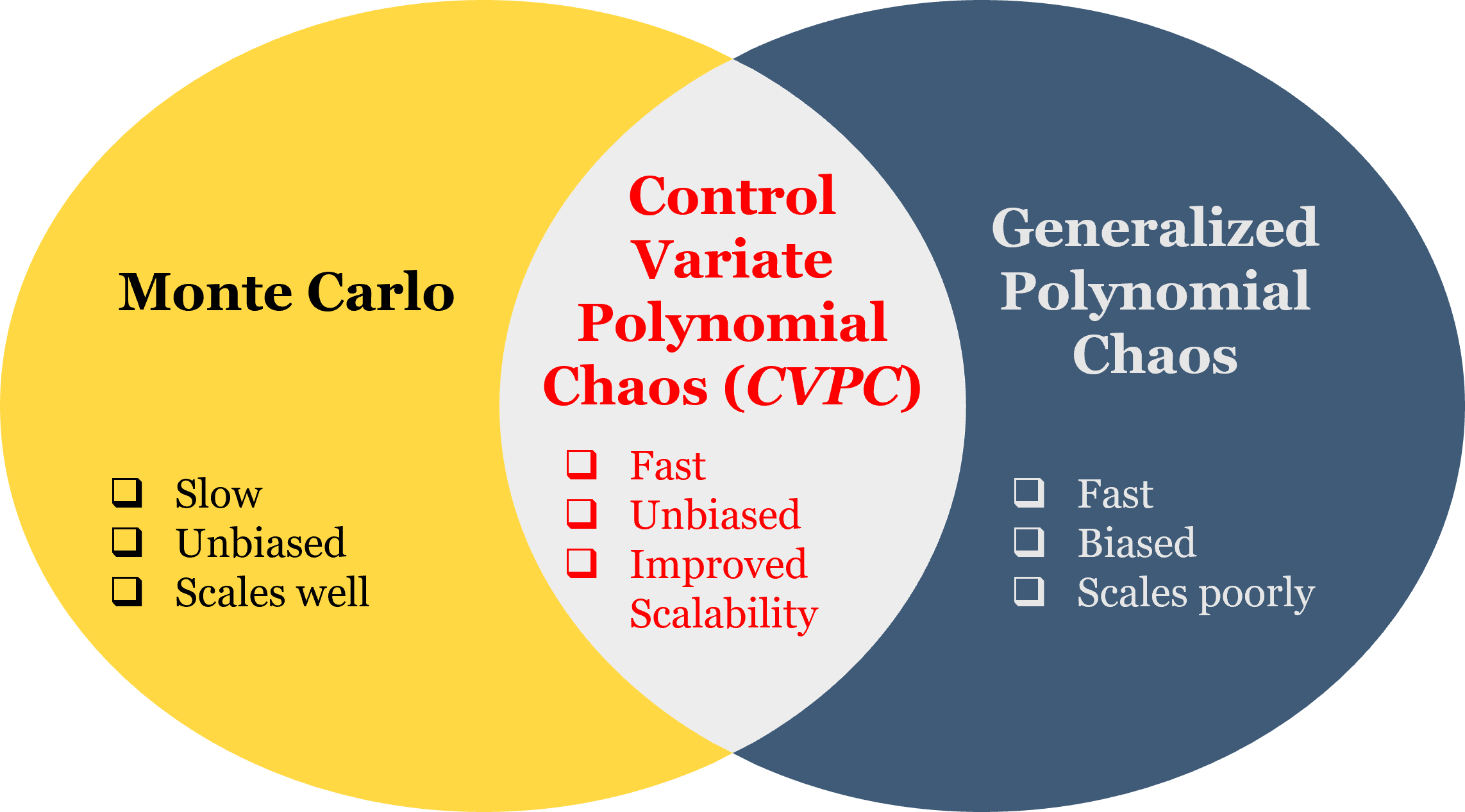}
\caption{Illustration of the concept of CVPC: combining gPC with MC through the means of CV to achieve synergies for more efficient UQ.}
\label{f:CVPC_concept}
\end{figure}

While the above-mentioned proposed approach is promising, none of these previous works have rigorously consider estimator design when one can either adapt the surrogate or increase the number of samples within the CV. To advance the state of the art, the goal of this research is to develop an estimator design strategy that optimally balances the trade-off between gPC biases and statistical sampling errors in the sense of minimizing estimation error under constrained computational budgets. This question is critical to approaches that combine surrogates and sampling because it is non-trivial to decide whether one should just continue to adapt the surrogate or whetehr it is ``good enough". Our primary contributions include: 
\begin{enumerate}
    \item Establishing new theoretical results on the solution to the optimal estimator design problem for CVPC in Theorem \ref{Theorem 1}, Theorem \ref{Theorem 2}, Corollary \ref{coroll_w_exp_PC}, Corollary \ref{coroll_w_tensor_prod}, and Corollary \ref{coroll_w_tot_order};
    \item Developing a standard procedure for constructing an optimal CVPC estimator at a given computational budget in Algorithm \ref{alg:CVPC} and Algorithm \ref{alg:OptDesignCVPC};
    \item Extensive numerical simulations of complex automotive propulsion systems, indicating order of magnitude improved mean squared error reduction in output statistic estimation compared to pure sampling or pure surrogate approaches.
\end{enumerate}

The effectiveness of the proposed algorithms is first demonstrated through two applications to the theoretical Lorenz system with stable and chaotic dynamics. Then, we design and implement optimal CVPC estimators for highly efficient UQ in two automotive applications with experimental input data, including a vehicle launch simulation of a convention gasoline-powered vehicle and an engine start simulation during mode switch in a Hybrid Electric Vehicle (HEV). Insights into the performance and the applicability of the proposed method are discussed based on the numerical examples. Finally, we envision multiple opportunities to extend the method presented in this paper to cover a broader range of systems and to further improve the computational efficiency of forward UQ.

The remainder of the paper is structured as follows: in section \ref{sec:bkgd}, we describe the general problem setup and necessary mathematical background; in section \ref{sec:CVPC}, we develop theorems and algorithms that define our proposed CVPC method; in section \ref{sec:app_examples}, we demonstrate our method for several numerical examples; finally, in section \ref{sec:conclusions}, we discuss the conclusions of this work.

\section{Background}
\label{sec:bkgd}

In this section we provide background on the notation, the MC method, the sampling-based CV method and the gPC method.

\subsection{Notation}
Consider a probability space $(\Omega, \, \mathcal{F}, \, \mathcal{P})$. Let $\mathbb{N}$ denote natural numbers, $\mathbb{N}_0$ denote a set of numbers consisting of all natural numbers and zero, $\mathbb{Z}$ denote all integers, $\mathbb{R}$ denote real numbers, and $\mathbb{R}^+$ denote positive real numbers. Let $\zeta:\Omega \rightarrow \mathbb{R}^{n_{\zeta}}$, with $n_{\zeta} \in \mathbb{N}$ denote a $\mathcal{F}$-measurable continuous random variable representing probabilistic uncertainties with probability measure $\lambda$ and probability density  $p(\zeta): \mathbb{R}^{n_{\zeta}} \rightarrow \mathbb{R}^+$.  

In this paper, we focus on parametric nonlinear dynamical systems governed by a set of ordinary differential equations (ODEs):
\begin{equation} \label{eq:parametric_ODE}
    \dot{x}(t, \, \zeta) = f\big(x(t, \, \zeta), \, u(t, \, \zeta), \, \zeta\big)
\end{equation}
where $x \in \mathbb{R} ^{n_x}$ denotes the state; $u \in \mathbb{R}^{n_u}$ denotes the system input; $\zeta \in \mathbb{R}^{n_{\zeta}}$ denotes the random variable representing the uncertain parameters of the system, and $f : \mathbb{R}^{n_x} \times \mathbb{R}^{n_u} \times \mathbb{R}^{n_{\zeta}} \rightarrow \mathbb{R}^{n_x}$ represents the dynamics of the system with $t \in [0, \, T]$ for $T \in \mathbb{R}^{+}$.

Let $Q: \mathbb{R}^{n_{\zeta}} \rightarrow \mathbb{R}$ denote a mapping from a vector of random variables representing the uncertainties to a scalar-valued quantity of interest (QoI), which is referred to as the \textit{high-fidelity model}. Let $Q^{PC}: \mathbb{R}^{n_{\zeta}} \rightarrow \mathbb{R}$ denote a \textit{low-fidelity model} obtained by expanding $Q$ with a polynomial expansion of low-degree. More details on the low-fidelity model will be discussed in the coming sections. In the general case, we seek an accurate estimate $\mathbb{E}[Q(\zeta)]$ under computational constraints. 

\subsection{Monte Carlo Sampling}
\label{sec:bkgd_MC}
A MC estimator for $\mathbb{E}[Q(\zeta)]$ uses $\big( \zeta^{(1)}, \, \zeta^{(2)}, \, \cdots, \, \zeta^{(N)} \big)$, which is a set of $N$ independent and identically distributed samples, to form an arithmetic average of the QoI: 
\begin{equation} \label{eq:MC_formula}
    \hat{Q}^{MC}(\zeta, \, N) = \frac{1}{N}\sum^N_{i=1}Q\Big( \zeta^{(i)} \Big).
\end{equation}
The estimator $\hat{Q}^{MC}$ is unbiased in that $\mathbb{E}\left[\hat{Q}^{MC}\right] = \mathbb{E}\left[Q\right]$. And the variance of the estimator is:
\begin{equation} \label{MC Estimator Variance}
    \mathbb{V}\text{ar}\big[\hat{Q}^{MC}(\zeta, \, N)\big] = \frac{\mathbb{V}\text{ar}\big[Q(\zeta)\big]}{N}.
\end{equation}
Therefore, the root mean square error (RMSE) of the MC estimator in \eqref{eq:MC_formula} converges at the rate of $1/\sqrt{N}$. In practice, this can be prohibitively slow, and therefore variance reduction algorithms that improve the convergence property of MC are needed.

\subsection{Sampling-based Control Variates}
\label{sec:bkgd_CV}
The sampling-based CV estimator is an unbiased estimator that aims to achieve a reduced variance of the estimator compared to that of a baseline estimator by introducing additional information. Specifically, in a general example, it introduces an additional random variable $Q_{1}(\zeta)$ with known mean $\mu_1(\zeta)$ that can represent a low-fidelity estimate for the same QoI.

In the context of MC estimators, CV requires the computation of a second MC estimate $\hat{Q}_1^{MC}(\zeta, \, N)$ of the mean of the low-fidelity model using the samples shared with the computation of the high-fidelity model $\hat{Q}^{MC}(\zeta, \, N)$. By combining the estimates from both the high and low-fidelity models in the following way, the CV can achieve estimator variance reduction compared to the baseline MC estimator \citep{hesterberg1996control, lavenberg1981perspective, lavenberg1982statistical}:
\begin{equation} \label{eq:CV_formula}
    \hat{Q}^{CV}(\alpha, \, \zeta, \, N) = \hat{Q}^{MC} (\zeta, \, N) + \alpha\big( \hat{Q}_1^{MC}(\zeta, \, N) - \mu_1(\zeta) \big)
\end{equation}
where $\alpha$ is a design parameter referred to as CV weight. In the subsequent sections, the low-fidelity estimator $\hat{Q}_1^{MC}(\zeta, \, N)$ is referred to as the Correlated Mean Estimator (CME). And the mean of the lower-fidelity model $\mu_1(\zeta)$ is referred to as the Control Variate Mean (CVM).

The optimal CV weight $\alpha^{*}$ that minimizes the estimator variance is:
\begin{equation} \label{eq:opt_prob_CV_weight}
    \alpha^* = \arg\min_{\alpha}\mathbb{V}\text{ar}\Big[\hat{Q}^{CV}(\alpha, \, \zeta, \, N)\Big]
\end{equation}
where 
\begin{align}
    \mathbb{V}\text{ar}\Big[\hat{Q}^{CV}(\alpha, \, \zeta, \, N)\Big] &=
    \mathbb{V}\text{ar}\Big[\hat{Q}^{MC}(\zeta, \, N)\Big] + \alpha^2\mathbb{V}\text{ar}\Big[\hat{Q}_1^{MC}(\zeta, \, N)\Big] \nonumber \\
    & \qquad + 2\alpha \mathbb{C}\text{ov}\Big[\hat{Q}^{MC}(\zeta, \, N), \,  \hat{Q}_1^{MC}(\zeta, \, N)\Big] \label{eq:CV_est_var_expended}
\end{align}

The minimizer of the quadratic function \eqref{eq:CV_est_var_expended} is:
\begin{equation} \label{eq:optimal_CV_weight}
    \alpha^* = -\frac{\mathbb{C}\text{ov}\big[ \hat{Q}^{MC}(\alpha, \, \zeta, \, N), \, \hat{Q}^{MC}_1(\alpha, \, \zeta, \, N) \big]}{\mathbb{V}\text{ar}\big[ \hat{Q}^{MC}_1(\alpha, \, \zeta, \, N) \big]}
\end{equation}
which yields the minimal estimator variance:
\begin{equation} \label{eq:min_CV_var}
    \mathbb{V}\text{ar}\Big[\hat{Q}^{CV}(\alpha^*, \, \zeta, \, N)\Big] = \mathbb{V}\text{ar}\Big[\hat{Q}^{MC}(\zeta, \, N)\Big](1-\rho^2)
\end{equation}
where $\rho\in [-1, \, 1]$ is the Pearson correlation coefficient between the high-fidelity estimator and CME.

Clearly, the optimal CV provides a constant factor reduction of the MC estimator variance.  Let $\gamma^{CV}$ be the variance reduction ratio of the CV estimator over the high-fidelity MC estimator with the equivalent sample size, which quantitatively measures the efficiency of the CV estimator. Then the optimal variance reduction ratio is given by:
\begin{equation} \label{eq:opt_var_reduction_ratio}
    \gamma^{CV^*}(\alpha^*) = 1-\rho^2 = 1 - \frac{\mathbb{C}\text{ov}\big[ \hat{Q}^{MC}({\zeta, \, N}), \, \hat{Q}^{MC}_1(\zeta, \, N) \big]}{\sqrt{\mathbb{V}\text{ar}\big[ \hat{Q}^{MC}({\zeta}, \, N) \big]\mathbb{V}\text{ar}\big[ \hat{Q}^{MC}_1(\zeta, \, N) \big]}}
\end{equation}
As shown in \eqref{eq:opt_var_reduction_ratio}, the maximum variance reduction is reached when $\rho = \pm 1$, indicating that the low-fidelity model has the perfect correlation with the high-fidelity model. On the contrary, there is no variance reduction when $\rho = 0$ because, in this case, the low-fidelity model has no correlation with the high-fidelity model. 

\subsection{Generalized Polynomial Chaos}
\label{sec:bkgd_PC}
For nonlinear systems of moderate dimensions that are under significant uncertainties, gPC can provide numerical computation for UQ that is significantly more efficient than the standard MC method \cite{xiu2002wiener, yang2020uncertainty}. Fundamentally, gPC is a non-stochastic approach to approximate the parametric solution for systems such as \eqref{eq:parametric_ODE}. It converts the parametrically uncertain system into a deterministic set of differential equations for the coefficients of the basis \citep{ghanem2003stochastic, sudret2008global}. Here, we briefly discuss a Galerkin approach to this procedure.

\subsubsection{Orthogonal Polynomials}
\label{sec:bkgd_PC_ortho_poly}
A univariate polynomial of degree $p\in\mathbb{N}_0$ with respect to a one-dimensional variable $\zeta \in \mathbb{R}$ is defined as
\begin{equation} \label{Orthogonal Polynomial Definition}
    \phi_p(\zeta) = k_p \zeta^p + k_{p-1} \zeta^{p-1} + \cdot \cdot \cdot + k_1 \zeta + k_0
\end{equation}
where $k_i \in \mathbb{R}, \, \forall i \in \{0, \, 1, \, \cdots, \, p\}$ and $k_p\neq0$.
A system of polynomials $\{\phi_p(\zeta), \, p\in \mathbb{N}_0$\} is orthogonal with respect to the measure $\lambda$ if the following orthogonality condition is satisfied:
\begin{equation} \label{Orthogonality Condition}
    \int_{\mathcal{S}} \phi_n(\zeta) \phi_m(\zeta) p(\zeta)d\zeta = \gamma_n \delta_{mn},
\end{equation}
where $\mathcal{S} = \{\zeta \in \mathbb{R}^{n_{\zeta}}, \, p(\zeta)>0 \}$ is the support of the PDF, $\gamma_n$ is a normalization constant, and $\delta_{mn}$ is a Kronecker delta function whose evaluation is equal to $1$ if $m=n$ and $0$ otherwise. Given the PDFs of the random variables, orthogonal polynomials can be selected using the Askey scheme to achieve optimal convergence \citep{xiu2002wiener}.

In multi-dimensional cases where $\zeta \in \mathbb{R}^{n_{\zeta}}$ and $n_{\zeta}>1$, multivariate orthogonal polynomials $\{\Phi_i(\zeta)\}$ of degree $i$ can be formed as products of one-dimensional polynomials. For example, a multivariate Hermite polynomial of degree $i$ with respect to $\zeta \in \mathbb{R}^{n_{\zeta}}$ is defined as:
\begin{equation} \label{Multivariate Hermite}
    H_i = e^{\frac{1}{2}\zeta^T\zeta}(-1)^{n_{\zeta}}\frac{\partial^{n_\zeta}}{\partial\zeta_{i, \, 1}\cdots\partial\zeta_{i, \, n_{\zeta}}}e^{-\frac{1}{2}\zeta^T\zeta}
\end{equation}
where $\zeta_{i, \, j}$ for $j = \{1, \, \cdots, \, n_{\zeta}\}$ is the scalar random variable at dimension $j$. The multivariate Hermite polynomial in \eqref{Multivariate Hermite} can be shown to be a product of one-dimensional Hermite polynomials $h_{m^i_j}$ involving a multi-index $m^i_j$ \citep{eldred2009comparison}:
\begin{equation} \label{Construction of Multidimensional Orthogonal Polynomial from 1D}
    H_i(\zeta) = \prod^{n_{\zeta}}_{j=1}h_{m^i_j}\big(\zeta_{i, \, j}\big)
\end{equation}

For instance, the first five Hermite polynomials is given for $n_{\zeta}=2$ in the table below:
\begin{table}[H] 
\centering
 \begin{tabular}{c c c} 
 \hline
 $H_i$ & as product of one-dimensional $H_{m^i_j}$ & as function of random variables \\ [0.5ex] 
 \hline\hline
 $H_0(\zeta)$ & $h_0(\zeta_1)h_0(\zeta_2)$ & 1 \\ 
 \hline
 $H_1(\zeta)$ & $h_1(\zeta_1)h_0(\zeta_2)$ & $\zeta_1$ \\
 \hline
 $H_2(\zeta)$ & $h_0(\zeta_1)h_1(\zeta_2)$ & $\zeta_2$ \\
 \hline
 $H_3(\zeta)$ & $h_2(\zeta_1)h_0(\zeta_2)$ & $\zeta_1^2-1$ \\
 \hline
 $H_4(\zeta)$ & $h_1(\zeta_1)h_1(\zeta_2)$ & $\zeta_1\zeta_2$ \\
 \hline
 $H_5(\zeta)$ & $h_0(\zeta_1)h_2(\zeta_2)$ & $\zeta_2^2-1$ \\ [1ex] 
 \hline
 $\cdots$ & $\cdots$ & $\cdots$ \\ [1ex] 
 \hline
\end{tabular}
\caption{The first few multivariate Hermite polynomial for two random variables and their corresponding expressions in terms of one-dimensional Hermite polynomials.}
\end{table}

\subsubsection{Polynomial Chaos Expansion}
\label{sec:bkgd_PC_PCE}
The gPC method employs the orthogonal polynomials introduced in Section \ref{sec:bkgd_PC_ortho_poly} to decompose the stochastic system, effectively decoupling the uncertain and the deterministic dynamics of the system. A gPC expansion represents the solution to the uncertainty parametric problem as an expansion of orthogonal polynomials:
\begin{equation} \label{eq:inf_gPC}
    x(t, \, \zeta) = \sum^{\infty}_{i=0} x_i(t)\Phi_i(\zeta)
\end{equation}
where $x_i(t)$ is a vector of the gPC coefficients while $x(t, \, \zeta)$ is a vector of the system states. In practical applications, this expansion is truncated to a finite degree for tractable computation:
\begin{equation} \label{PCE with Finite Expansion}
    \hat{x}(t, \, \zeta, \, p) = \sum_{i=0}^{M-1}\hat{x}_i(t, \, p)\Phi_i(\zeta, \, p)
\end{equation}
where $M$ is a function of the univariate polynomial degree $p$, representing the total number of required one-dimensional polynomial bases. As the polynomial degree increases, the expansion in \eqref{eq:inf_gPC} converges for any function in $L_2$ \cite{camacho2013model}. 

A system of equations for the coefficients $\hat{x}_i$ can be derived by substituting the truncated expansion in \eqref{PCE with Finite Expansion} into the uncertain parametric system in \eqref{eq:parametric_ODE} to yield differential equations with respect to the gPC coefficients $\hat{x}_i(t)$:
\begin{equation} \label{eq:gPC_substitute}
    \sum^{M-1}_{i=0}\frac{d\hat{x}_i(t, \, p)}{dt}\Phi_i(\zeta, \, p)=f\Bigg(\sum^{M-1}_{i=0} \hat{x}_i(t, \, p)\Phi_i(\zeta, \, p), \, u(t, \, \zeta), \, t, \, \zeta\Bigg)
\end{equation}

Then stochastic Galerkin projection is used to decompose \eqref{eq:gPC_substitute} onto each of the polynomial bases $\{ \Phi_i(\zeta, \, p) \}$:
\begin{equation} \label{eq:PCE_Galerkin}
    \Bigg\langle \sum^{M-1}_{i=0}\frac{d\hat{x}_i(t, \, p)}{dt}\Phi_i(\zeta, \, p), \, \Phi_i(\zeta, \, p) \Bigg\rangle = \big\langle f, \, \Phi_i(\zeta, \, p) \big\rangle
\end{equation}
where the argument of $f(\cdot)$ is the same as in \eqref{eq:gPC_substitute} and is neglected for conciseness. As a result, the error is orthogonal to the functional space spanned by the orthogonal basis polynomials. Rearranging \eqref{eq:PCE_Galerkin} yields the set of ODEs that describes the dynamics of the gPC coefficients:
\begin{equation} \label{eq:gPC_coeff}
    \frac{d\hat{x}_i(t, \, p)}{dt} = \frac{\big\langle f, \, \Phi_i(\zeta, \, p) \big\rangle}{\big\langle \Phi_i^2(\zeta, \, p) \big\rangle}
\end{equation}

Because the polynomial bases $\{ \Phi_i(\zeta, \, p) \}$ are time-independent, all the inner products of $\{ \Phi_i(\zeta, \, p) \}$ that are necessary to solve \eqref{eq:gPC_coeff} can be computed offline and stored in memory ready to be extracted for online computation. Then, conventional deterministic solvers can be employed to solve for the coefficients over time. The mean and variance of the estimated QoI can be extracted analytically using the gPC coefficients. For example, the mean and variance estimates of the $k$-th state can be determined with minimal computational effort:
\begin{equation} \label{eq:gPC_sol_extraction} 
    \hat{\mu} = \hat{x}_{k,0}(t, \, p)
\quad \text{ and } \quad
    \hat{\sigma}^2 = \sum^{M-1}_{i=1}\hat{x}^2_{k,i}(t, \, p).
\end{equation}
Higher-order moments can also be obtained either by directly sampling the polynomial bases or using pre-computed higher-order inner products of the bases \cite{sudret2006stochastic, sudret2014polynomial}.

Hence, for systems with low dimensions, gPC significantly reduces the computational cost of estimating statistical moments of system states or functions of states. However, the number of coupled deterministic ODEs in \eqref{eq:gPC_coeff} that one needs to solve in gPC can increase exponentially with the number of uncertain variables. The rate of this increase is mainly determined by the scheme used to construct the gPC expansion.

One way to construct gPC expansions is the \textit{total-order expansion} scheme, where a complete polynomial basis up to a fixed total-order specification is employed. The total number of expansion terms can be calculated as follows:
\begin{equation} \label{eq:total_order}
    M = \frac{(p+n_{\zeta})!}{n_{\zeta}!\,p!} - 1
\end{equation}

Another approach to the construction of gPC expansions is to employ a \textit{tensor-product expansion}, where the polynomial degree is bounded for each dimension of the random variable independently:
\begin{equation} \label{eq:tensor_product}
    M = \prod^{n_{\zeta}}_{i=1}(p_i+1) -1 
\end{equation}
where $p_i$ is the degree of expansion on the $i$-th dimension. If the polynomial degrees are uniform across all dimensions, then the total number of expansion terms is:
\begin{equation} \label{eq:tensor_product_uniform_p}
    M = (p+1)^{n_{\zeta}} - 1
\end{equation}

In both cases, \eqref{eq:total_order}-\eqref{eq:tensor_product_uniform_p} show that the number of coupled ODEs one needs to solve to perform gPC grows rapidly with the number of input random variables $n_{\zeta}$ and the gPC polynomial degree $p$, which leads to the inherent limitation on the scalability of gPC. 

\subsubsection{Correlation between Polynomial Chaos Expansions and the Monte Carlo Estimator}
\label{sec:bkgd_PC_connection}

We seek to mitigate the scalability issues of gPC by leveraging low-order expansions without sacrificing accuracy. This will result in an algorithm that uses a low-fidelity surrogate in the form of gPC within a sampling-based CV framework. To this end, we discuss the correlation between MC and gPC estimators in the context of a CV-based multifidelity estimator.

Let $\hat{Q}^{MC\mh PC}$ denote the gPC-based low-fidelity surrogate estimator, which uses the gPC coefficients computed from \eqref{eq:gPC_coeff} and an ensemble of realizations of the degree-$p$ orthogonal polynomial bases $\Phi\big(\zeta^{(i)}, \, p\big)$ to compute the estimates:
\begin{equation} \label{eq:MC-PC_estimator}
    \hat{Q}^{MC\mh PC}(\zeta, \, p, \, N) = \frac{1}{N}\sum^N_{i=1}\bigg( \sum^{M-1}_{j=0}\hat{x}_j\Phi_j\big( \zeta^{(i)}, \, p \big) \bigg).
\end{equation}

Assume that the gPC-based low-fidelity surrogate estimator in \eqref{eq:MC-PC_estimator} uses the same set of samples $\{\zeta^{(i)}\}$ as the MC estimator in \eqref{eq:MC_formula}. Then, a straight-forward calculation shows that the covariance between the estimators is equivalent to the covariance between the underlying random variables $Q$ and $Q^{PC}$ scaled by the MC sample size $N$:
\begin{align}
    \mathbb{C}\text{ov}\big[ \hat{Q}^{MC}(\zeta, \, N), \, \hat{Q}^{MC\mh PC}(\zeta, \, p, \, N) \big] &=
    \mathbb{C}\text{ov}\bigg[ \frac{1}{N}\sum^N_{i=1}Q\big(\zeta^{(i)}\big), \, \frac{1}{N}\sum^N_{i=1}Q^{PC}\big(\zeta^{(i)}, \, p\big) \bigg] \nonumber \\
    &= \frac{1}{N}\mathbb{C}\text{ov}\big[ Q, \, Q^{PC} \big] \label{eq:cov_qqpce}
\end{align}
Note that $Q^{PC}$ arises from the low-fidelity surrogate model that uses a truncated gPC expansion to approximate $Q$.

Furthermore, we can show that $\mathbb{C}\text{ov}\big[ Q, \, Q^{PC} \big]$ is equal to the inner product between the true gPC coefficients $x_i$ of an infinite-degree expansion and the approximate gPC coefficients $\hat{x}_i$ of a truncated expansion:
\begin{align}
    \mathbb{C}\text{ov}\big[ Q, \, Q^{PC} \big] &=
    \mathbb{C}\text{ov}\bigg[ \sum^\infty_{i=0}x_i\Phi_i, \, \sum^{M-1}_{i=0}\hat{x}_i\Phi_i \bigg] \label{eq:cov_rv_line1} \\
    &= \mathbb{C}\text{ov}\bigg[ \sum^{M-1}_{i=0}x_i\Phi_i, \, \sum^{M-1}_{i=0}\hat{x}_i\Phi_i \bigg] + \mathbb{E}\bigg[\sum^\infty_{i=M}\sum^{M-1}_{j=0}x_i\hat{x}_j\Phi_i\Phi_j\bigg] \label{eq:cov_rv_line2} \\
    &= \sum^{M-1}_{i=0}x_i\hat{x}_j\big\langle\Phi^2_i\big\rangle - x_0\hat{x}_0 + \sum^\infty_{i=M}\sum^{M-1}_{j=0}x_i\hat{x}_j\mathbb{E}[\Phi_i\Phi_j] \label{eq:cov_rv_line3} \\
    &= \sum^{M-1}_{i=0}x_i\hat{x}_j\big\langle\Phi^2_i\big\rangle - x_0\hat{x}_0 \label{eq:cov_rv_line4} \\
    &= x_0\hat{x}_0 \Big( \big\langle\Phi^2_i\big\rangle - 1 \Big) + \sum^{M-1}_{i=1}x_i\hat{x}_j\big\langle\Phi^2_i\big\rangle \label{eq:cov_rv_line5} \\
    &= \sum^{M-1}_{i=1}x_i\hat{x}_i \label{eq:cov_rv}
\end{align}
where \eqref{eq:cov_rv_line1} is obtained based on gPC expansion of $Q$ and $Q^{PC}$, \eqref{eq:cov_rv_line2} applies the orthogonality properties of the polynomials, \eqref{eq:cov_rv_line3} uses the rule of covariance and the mean extraction formula in \eqref{eq:gPC_sol_extraction}, \eqref{eq:cov_rv_line4} uses the fact that the third term is zero due to the orthogonality properties of the polynomials, \eqref{eq:cov_rv_line5} applies the gPC variance extraction formula in \eqref{eq:gPC_sol_extraction}, and \eqref{eq:cov_rv} uses the fact that the bases are orthonormal. 

Together, these results allow us to compute the correlation coefficient between the MC and gPC estimators. This coefficient in turn determines the effectiveness of using the gPC approach as a control variate, namely the level of variance reduction of the CV estimator over standard MC estimator.

\begin{proposition}[Correlation of MC and Truncated gPC Estimators] \label{prop:corr}
The Pearson correlation coefficient $\rho$ between $\hat{Q}^{MC}$ and $\hat{Q}^{PC\mh MC}$ is:
\begin{equation}
\rho = \sqrt{\frac{\big(\sum^{M-1}_{i=1}x_i\hat{x}_i\big)^2}{\sigma^2\sum^{M-1}_{i=1}\hat{x}^2_i}},
\end{equation}
where $\sigma^2 \equiv \mathbb{V}\text{ar}[Q]$.
\end{proposition}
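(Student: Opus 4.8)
The plan is to start from the definition of the Pearson correlation coefficient,
\begin{equation}
\rho = \frac{\mathbb{C}\text{ov}\big[\hat{Q}^{MC}, \, \hat{Q}^{MC\mh PC}\big]}{\sqrt{\mathbb{V}\text{ar}\big[\hat{Q}^{MC}\big]\,\mathbb{V}\text{ar}\big[\hat{Q}^{MC\mh PC}\big]}},
\end{equation}
and to supply each of its three ingredients from results already derived in this section. The numerator is immediate: chaining \eqref{eq:cov_qqpce} with the identities that terminate in \eqref{eq:cov_rv} gives $\mathbb{C}\text{ov}[\hat{Q}^{MC}, \hat{Q}^{MC\mh PC}] = \tfrac{1}{N}\sum_{i=1}^{M-1} x_i\hat{x}_i$. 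The first variance in the denominator is just \eqref{MC Estimator Variance}, namely $\mathbb{V}\text{ar}[\hat{Q}^{MC}] = \sigma^2/N$ with $\sigma^2 \equiv \mathbb{V}\text{ar}[Q]$.

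The only quantity still needed is $\mathbb{V}\text{ar}[\hat{Q}^{MC\mh PC}]$. Since the estimator \eqref{eq:MC-PC_estimator} is an arithmetic mean of $N$ i.i.d.\ copies of the random variable $Q^{PC} = \sum_{j=0}^{M-1}\hat{x}_j\Phi_j$, the same reasoning that produces \eqref{MC Estimator Variance} yields $\mathbb{V}\text{ar}[\hat{Q}^{MC\mh PC}] = \mathbb{V}\text{ar}[Q^{PC}]/N$. It then remains to evaluate $\mathbb{V}\text{ar}[Q^{PC}]$: expanding the square and invoking orthonormality of the bases ($\mathbb{E}[\Phi_i\Phi_j] = \delta_{ij}$, with $\Phi_0 \equiv 1$) kills every cross term and cancels the $i=0$ contribution against the square of the mean, leaving $\mathbb{V}\text{ar}[Q^{PC}] = \sum_{i=1}^{M-1}\hat{x}_i^2$ --- precisely the gPC variance-extraction formula in \eqref{eq:gPC_sol_extraction}. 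Hence $\mathbb{V}\text{ar}[\hat{Q}^{MC\mh PC}] = \tfrac{1}{N}\sum_{i=1}^{M-1}\hat{x}_i^2$.

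Substituting the three expressions into the definition of $\rho$, the $1/N$ in the numerator and the $\sqrt{1/N^2}=1/N$ coming out of the denominator cancel, so $\rho = \big(\sum_{i=1}^{M-1}x_i\hat{x}_i\big)\big/\sqrt{\sigma^2\sum_{i=1}^{M-1}\hat{x}_i^2}$; rewriting this as the nonnegative square root of its square gives the stated formula. There is no genuinely hard step --- the one place to be careful is the variance of $Q^{PC}$, where the $\Phi_0 \equiv 1$ term must be peeled off so that the mean is subtracted correctly, exactly as in the bookkeeping between \eqref{eq:cov_rv_line3}--\eqref{eq:cov_rv}. I would also remark that the single-square-root form returns $|\rho|$, which is all that subsequently enters the variance-reduction ratio $1-\rho^2$ in \eqref{eq:opt_var_reduction_ratio}, so nothing is lost by writing it this way.
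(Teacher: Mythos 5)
Your proposal is correct and follows essentially the same route as the paper: definition of $\rho$, covariance from \eqref{eq:cov_qqpce}--\eqref{eq:cov_rv}, and the two estimator variances $\sigma^2/N$ and $\tfrac{1}{N}\sum_{i=1}^{M-1}\hat{x}_i^2$, with the $1/N$ factors cancelling. Your extra care in justifying $\mathbb{V}\text{ar}[Q^{PC}]=\sum_{i=1}^{M-1}\hat{x}_i^2$ and your remark that the square-root form yields $|\rho|$ are both sound and, if anything, slightly more explicit than the paper's own two-line computation.
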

\begin{proof}
The proof simply uses the definition of the correlation coefficient and \eqref{eq:cov_qqpce}-\eqref{eq:cov_rv}. 
\begin{align} \label{eq_Pearson_proof_prop5.1}
    &\begin{aligned}
    \rho &= 
    \frac{\mathbb{C}\text{ov}\big[ \hat{Q}^{MC}, \, \hat{Q}^{MC\mh PC} \big]}{\sqrt{\mathbb{V}\text{ar}\big[ \hat{Q}^{MC} \big]\mathbb{V}\text{ar}\big[ \hat{Q}^{MC\mh PC}\big]}}
    = \frac{\frac{1}{N}\mathbb{C}\text{ov}\big[ Q, \, Q^{PC} \big]}{\sqrt{\bigg( \frac{\mathbb{V}\text{ar}[Q]}{N} \bigg) \bigg( \frac{\mathbb{V}\text{ar}[Q^{PC}]}{N} \bigg)}}
    = \sqrt{\frac{\big(\sum^{M-1}_{i=1}x_i\hat{x}_i\big)^2}{\sigma^2\sum^{M-1}_{i=1}\hat{x}^2_i}}.
    \end{aligned}
\end{align}
where the first step uses the formula for estimator variance of MC and the second step uses \eqref{eq:cov_rv}.

\end{proof}

Therefore, for a given system, the correlation between the gPC-based low-fidelity surrogate estimator \eqref{eq:MC-PC_estimator} and the high-fidelity MC estimator \eqref{eq:MC_formula} is a function of the gPC polynomial degree $p$. Furthermore, as $p \rightarrow \infty$ (so that $M \to \infty$), the correlation between the two estimators approaches $1$. This observation provides the main foundation for our proposed multifidelity method presented in the subsequent section.

\section{Control Variate Polynomial Chaos}
\label{sec:CVPC}

In this section, we describe the Control Variate Polynomial Chaos (CVPC), a novel multifidelity estimator that combines gPC and MC using a CV framework. Because gPC is used to construct the CME, the CVM can be obtained analytically with minimal computational effort. Furthermore, sampling-based CV is unbiased by construction, meaning that any bias introduced by the gPC is corrected automatically, and the ability to keep the polynomial degree of gPC low gives CVPC better scalability compared to gPC. The resulting estimator is capable of combining the efficiency advantage of gPC with low polynomial degrees and the flexibility of MC while guaranteeing unbiasedness.

Similar concepts have recently been proposed in the literature \cite{fox2020applications, garg2015uncertainty, gu2015multi, yang2021multifidelity}. However, these works had not provided the basis for optimal estimator design for CVPC or any detailed guidance on the implementation of CVPC in the context of UQ. Our contributions to this area are developed in this section. 

The rest of the section is organized as follows: in section \ref{sec:CVPC_formulation}, we describe the general CVPC formulation and an algorithm for its implementation; in section \ref{sec:CVPC_prob_def}, we describe the main problem setting for the optimal CVPC estimator design; in section \ref{sec:optimal_CVPC_design}, we prove a set of sufficient conditions for the existence of an optimal CVPC estimator, provide the corresponding theoretical solution to the estimator design problem, and present an algorithm for carrying out the design procedure through pilot sampling.

\subsection{CVPC Formulation} 
\label{sec:CVPC_formulation}
The CVPC estimator incorporates a gPC-based low-fidelity surrogate estimator $\hat{Q}^{PC}$ as the control variate:
\begin{equation} \label{eq:CVPC_formula}
    \hat{Q}^{CVPC}(\alpha, \, \zeta, \, p, \, N) = \hat{Q}^{MC}(\zeta, \, N) + \alpha \Big( \hat{Q}^{MC\mh PC} (\zeta, \, p, \, N) - \mu^{PC}(\zeta, \, p) \Big)
\end{equation}
where $\hat{Q}^{MC}(\zeta, \, N)$ is the high-fidelity MC estimator; $\hat{Q}^{MC\mh PC}(\zeta, \, p, \, N)$ is the CME in the form of \eqref{eq:MC-PC_estimator}, which uses gPC coefficients computed from \eqref{eq:gPC_coeff} and an ensemble of realizations of the degree-$p$ orthogonal polynomial bases $\Phi\big(\zeta^{(i)}, \, p\big)$; and $\mu^{PC}(\zeta, \, p)$ is the CVM that gives the analytical mean obtained from gPC coefficients as in \eqref{eq:gPC_sol_extraction}.

We remark that, despite the similarity in notation with $\hat{Q}^{MC}(\zeta, \, N)$, $\hat{Q}^{MC\mh PC}(\zeta, \, p, \, N)$ does not involve sampling trajectories that arise from a system of differential equations. Instead, $\hat{Q}^{MC\mh PC}(\zeta, \, p, \, N)$ directly samples the polynomial bases in \eqref{eq:MC-PC_estimator} with the same set of random samples $\{\zeta^{(i)}\}$ used to estimate $Q^{MC}(\zeta, \, N)$, which is a very computationally efficient process. Because $\hat{Q}^{MC}(\zeta, \, N)$ and $\hat{Q}^{MC\mh PC}(\zeta, \, p, \, N)$ are constructed to represent the same underlying system along with the fact that they share the same set of samples, the two estimators are often highly correlated. As shown in section \ref{sec:bkgd_CV}, this high correlation can lead to a large variance reduction ratio of the CVPC estimator over a traditional MC estimator. Once the high- and low-fidelity estimators are constructed, the design parameter of optimal CV weight $\alpha^*$ can be computed using \eqref{eq:optimal_CV_weight}. Then, the corresponding minimal estimator variance of CVPC can be calculated using \eqref{eq:min_CV_var}. The detailed implementation procedure of the CVPC estimator is given in Algorithm \ref{alg:CVPC}.

\begin{algorithm}
\caption{Control Variate Polynomial Chaos}\label{alg:CVPC}
\begin{algorithmic}[1]
\Require $Q$: High-fidelity model; $Q^{PC}$: Low-fidelity model by gPC with a low polynomial degree; $p_{\zeta}$: PDF of the input random variables; $\alpha$: CV weight; $N$: sample size of the Monte Carlo estimators; $p$: polynomial degree of gPC; $\Psi(\zeta, \, p)$: pre-computed inner products of orthogonal polynomials of the input random variables; $\Phi(\zeta, \, p)$: the orthogonal polynomials selected based on the characteristics of the random variable. 
\State Apply stochastic Galerkin projection \eqref{eq:PCE_Galerkin} using the pre-computed $\Psi(\zeta, \, p)$ to construct the low-fidelity model $Q^{PC}(\zeta, \, p)$ that describes the deterministic dynamics of the gPC coefficients at degree $p$ with $M$ terms. \Comment{Intrusive gPC}
\State Draw $N$ samples $\{\zeta^{(1)}, \, \cdots, \, \zeta^{(N)}\}$ from $p_{\zeta}$ for each of the random variables.
\State $\hat{Q}^{MC}(\zeta, \, N) \gets \frac{1}{N}\sum^N_{i=1}Q\big(\zeta^{(i)}\big)$ \Comment{High-fidelity estimator}
\State $\{\hat{x}_0(p), \, \cdots, \, \hat{x}_{M-1}(p)\} \gets Q^{PC}(\zeta, \, p)$ \Comment{Compute gPC coefficients}
\State $\mu^{PC}(\zeta, \, p) \gets \hat{x}_0(p)$ \Comment{Extract CVM}
\For {$i=1:N$} \Comment{Iterate through the samples}
\State $\hat{x}^{MC\mh PC}\big(\zeta^{(i)}, \, p\big) \gets \sum^{M-1}_{j=0}\hat{x}_j(p)\Phi_j\big(\zeta^{(i)}, \, p\big)$ \Comment{Sample the polynomial bases}
\EndFor
\State $\hat{Q}^{MC\mh PC}(\zeta, \, p, \, N) \gets \frac{1}{N}\sum^N_{i=1}\hat{x}^{MC\mh PC}\big(\zeta^{(i)}, \, p\big)$ \Comment{Obtain CME}
\State Compute the optimal CV weight $\alpha^*$ according to \eqref{eq:optimal_CV_weight}.
\State $\hat{Q}^{CVPC}(\alpha^*, \, \zeta, \, p, \, N) \gets \hat{Q}^{MC}(\zeta, \, N) + \alpha^* \Big( \hat{Q}^{MC\mh PC} (\zeta, \, p, \, N) - \mu^{PC}(\zeta, \, p) \Big)$ \Comment{Construct CVPC}
\Ensure $\hat{Q}^{CVPC}(\alpha^*, \, \zeta, \, p, \, N)$: an estimate of $\mathbb{E}[Q(\zeta)]$.
\end{algorithmic}
\end{algorithm}

\subsection{Estimator Design Problem} 
\label{sec:CVPC_prob_def}
Let $C$ denote the cost of computing one realization of the high-fidelity model. Let $f_c(p)$ be the cost of computing the gPC coefficients. The cost of sampling the polynomial bases for gPC is negligible compared with the cost of computing MC sampling and gPC coefficients, and is thus ignored in the cost analysis. Therefore, the total cost of computing a CVPC estimator is:
\begin{equation} \label{eq:CVPC_cost}
    C^{CVPC}(p, \, N) = NC + f_c(p)
\end{equation}
where the cost of computing the gPC coefficient is directly correlated to the number of expansion terms in gPC, which can be described by the expansion scheme used such as \eqref{eq:total_order}-\eqref{eq:tensor_product_uniform_p}. $f_c(p)$ is a positive-valued monotonically increasing function that describes, for a given system, how the computational cost of gPC increases with the gPC polynomial degree $p$.

We formulate the estimator design as a constrained optimization problem seeking to minimize estimator variance with respect to a cost constraint:
\begin{equation} \label{eq:opt_objective}
\begin{aligned}
\min_{p\in\mathbb{N}_0, \, N\in\mathbb{N}, \, \alpha\in\mathbb{R}} \quad & \mathbb{V}\text{ar}\Big[ \hat{Q}^{CVPC}(\zeta, \, \alpha, \, p, \, N) \Big] \\
\textrm{subject to} \quad & C^{CVPC}(p, \, N) \leq C_0
\end{aligned}
\end{equation}
where $C_0$ is the computational budget. Details on the objective function, the computational cost constraint, and the general solution to the optimization problem are provided in the subsequent sections.

\subsection{Optimal CVPC Estimator Design}
\label{sec:optimal_CVPC_design}

In this section, we discuss solutions to the optimal CVPC estimator design problem \eqref{eq:opt_objective}. In this problem, optimal computational resource allocation is done by balancing the use of MC and gPC in the CVPC estimator. In other words, we seek to balance the bias introduced by gPC and the statistical variance introduced by MC sampling for the minimal estimation error under a given computational budget. More specifically, the design process aims to select the optimal polynomial degree for the gPC-based CME and the optimal sample size for the MC-based high-fidelity estimator. To this end, effects of system characteristics, such as dimensionality and model complexity, on the CVPC estimator design are discussed. 

We consider the case where an optimal weight is used in the CV so that the objective function \eqref{eq:opt_objective} becomes:
\begin{equation} \label{eq:opt_obj_w_opt_CV_weight}
    g(\zeta, \, p, \, N) \equiv \mathbb{V}\text{ar}\Big[ \hat{Q}^{CVPC}(\alpha^*, \zeta, \, p, \, N) \Big] = \mathbb{V}\text{ar}\Big[ \hat{Q}^{MC}(\zeta, \, N) \Big](1-\rho^2) = \frac{\sigma^2(1-\rho^2)}{N}
\end{equation}
where, in practice, $\alpha^*$ can be estimated using \eqref{eq:optimal_CV_weight} and pilot sampling \cite{gorodetsky2020generalized}. Then, the optimization problem in \eqref{eq:opt_objective} becomes:
\begin{equation} \label{eq:CVPC_design_obj}
\begin{aligned}
\min_{p\in\mathbb{N}_0, \, N\in\mathbb{N}} \quad & \frac{f_{\rho}(p)}{N} \qquad 
\textrm{subject to} \quad & NC + f_c(p) \leq C_0
\end{aligned}
\end{equation}
where $f_{\rho}(p) = 1 - \rho^2(p)$. 

To avoid pathological cases, the subsequent discussions assume that:
\begin{equation} \label{eq:comp_budget_bound}
    C_0 > C + f_{c}(0).
\end{equation}
In other words, the computational budget allows at least one sample of MC and to build at least a constant approximation of the function.

We further refine this objective by optimizing over $N$ analytically. Specifically, consider that we have an equivalent formulation
\begin{equation} \label{eq:reformulate}
\begin{aligned}
\min_{p\in\mathbb{N}_0} \Bigg[ \min_{N\in\mathbb{N}} \quad & \frac{f_{\rho}(p)}{N} \Bigg] \qquad 
\textrm{subject to} \quad & NC + f_c(p) \leq C_0
\end{aligned}
\end{equation}
The minimizer $N^*$ for any $p\in \mathbb{N}_0$ is the largest $N\in\mathbb{N}$ that satisfies the computational cost constraint $N^*C + f_c(p) \leq C_0$. Hence, we have:
\begin{equation} \label{eq:Nopt}
    N^* = \Big\lfloor \frac{C_0-f_c(p^*)}{C} \Big\rfloor.
\end{equation}
Thus, we can substitute $N^*$ into the objective function and eliminate the inequality constraint to obtain an optimization problem only over $p$:
\begin{equation} \label{eq:opt_over_p_only}
\min_{p\in\mathbb{N}_0} \ J_{disc}(p) \quad \textrm{ where } \quad J_{disc}(p) =  \frac{f_{\rho}(p)}{C_0-f_c(p)},
\end{equation}
where $J_{disc}: \mathbb{N}_0 \rightarrow \mathbb{R}$ is a discrete objective function and that we have dropped the scaling by $C$ that arises from a straight forward substitution.

\subsubsection{Continuous Relaxation of the Estimator Design Problem}
\label{sec:cont_analysis}
We begin by analyzing a continuous relaxation of the discrete problem in \eqref{eq:opt_over_p_only}. To this end, let $\Tilde{p}\in\big[0, \, f_c^{-1}(C_0-C)\big]$ denote a continuous real-valued variable representing the polynomial order of gPC, which is upper bounded by the computational budget while ensuring that enough resource is allocated for at least one MC sample.

Then, the relaxed problem can be written as:
\begin{equation} \label{eq:cont_opt_over_p_only}
\min_{\Tilde{p}\in\big[0, \, f_c^{-1}(C_0-C)\big]} \ J(\tilde{p}) \quad \textrm{ where } \quad J(\tilde{p}) =  \frac{f_{\rho}(\Tilde{p})}{C_0-f_c(\Tilde{p})},
\end{equation}
Next, we consider cases with convex objective functions and certain properties. 

\begin{thm}[Optimal CVPC Polynomial Order] \label{Theorem 1} 
Let $f_c$ be a positive-valued, convex, twice-differentiable, and monotonically increasing function on $[0, \infty)$. Let $f_{\rho}$ be a positive-valued, twice-differentiable, convex, and non-increasing function on $[0, \infty)$ that satisfies the condition:
\begin{equation} \label{eq:condition_thm1}
    2f_{\rho}(\Tilde{p})f_{\rho}^{''}(\Tilde{p})-\big(f_{\rho}^{'}(\Tilde{p})\big)^2 \geq 0.
\end{equation}
Then, for any computational budget that satisfies $C_0 >C +  f_c(0)$ and any $\tilde{p} \in \big[0, \, f_c^{-1}(C_0-C)\big]$, the objective function in \eqref{eq:cont_opt_over_p_only} is convex, and
\begin{equation} \label{eq:sol_thm1}
    \tilde{p}^* = 
    \left\{
    \begin{array}{ll}
    \text{Solution to } G(\tilde{p}) = 0 & \text{ if } J(\tilde{p}) \text{ is non-monotonic}, \\
    f_c^{-1}(C_0-C) & \text { if } J(\tilde{p}) \text{ is monotonically decreasing}, \\
    0 & \text{ otherwise }
    \end{array}
    \right.
\end{equation}
where
\begin{equation} \label{eq:thm_equality_non_monotonic}
    G(\tilde{p}) = \big(C_0-f_c(\Tilde{p})\big)f_{\rho}^{'}(\Tilde{p}) + f_{\rho}(\Tilde{p})f_c^{'}(\Tilde{p})
\end{equation}
\end{thm}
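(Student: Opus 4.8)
The plan is to prove convexity of $J(\tilde p)$ on the interval $\big[0, f_c^{-1}(C_0-C)\big]$ and then characterize the minimizer by a first-order analysis, splitting on whether the stationary point falls inside the interval or the minimum is attained at an endpoint. First I would compute $J'(\tilde p)$ by the quotient rule: writing $D(\tilde p) \equiv C_0 - f_c(\tilde p)$ (which is strictly positive on the admissible interval since $f_c$ is increasing and $\tilde p \le f_c^{-1}(C_0-C)$ forces $f_c(\tilde p) \le C_0 - C < C_0$), we get
\begin{equation}
J'(\tilde p) = \frac{f_\rho'(\tilde p) D(\tilde p) + f_\rho(\tilde p) f_c'(\tilde p)}{D(\tilde p)^2} = \frac{G(\tilde p)}{D(\tilde p)^2}.
\end{equation}
So the sign of $J'$ is governed entirely by $G(\tilde p)$, which immediately motivates the three-way case split in \eqref{eq:sol_thm1}: if $G$ has an interior zero we are in the non-monotonic case; if $G < 0$ throughout (equivalently $J' < 0$) then $J$ is monotonically decreasing and the minimum is at the right endpoint $f_c^{-1}(C_0-C)$; otherwise $J$ is non-decreasing and the minimum is at $0$.

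Next I would establish convexity by showing $J''(\tilde p) \ge 0$. Differentiating $J' = G/D^2$ gives $J'' = \big(G' D^2 - 2 G D D'\big)/D^4 = \big(G' D - 2 G D'\big)/D^3$, and since $D > 0$ it suffices to show $G'(\tilde p) D(\tilde p) - 2 G(\tilde p) D'(\tilde p) \ge 0$. Here $D' = -f_c'$, and computing $G' = -f_c' f_\rho' + D f_\rho'' + f_\rho' f_c' + f_\rho f_c'' = D f_\rho'' + f_\rho f_c''$. Substituting and expanding, the quantity $G' D - 2 G D'$ becomes
\begin{equation}
\big(D f_\rho'' + f_\rho f_c''\big) D + 2 f_c' \big(D f_\rho' + f_\rho f_c'\big) = D^2 f_\rho'' + D f_\rho f_c'' + 2 D f_c' f_\rho' + 2 f_\rho (f_c')^2.
\end{equation}
I would then argue each piece is nonnegative or can be grouped to be nonnegative: $D \ge 0$, $f_c'' \ge 0$ (convexity of $f_c$), $f_\rho \ge 0$, $(f_c')^2 \ge 0$ handle three terms directly. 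The delicate terms are $D^2 f_\rho''$ and $2 D f_c' f_\rho'$; note $f_\rho' \le 0$ (non-increasing) and $f_c' \ge 0$, so $2 D f_c' f_\rho' \le 0$. This is where condition \eqref{eq:condition_thm1}, i.e. $2 f_\rho f_\rho'' - (f_\rho')^2 \ge 0$, must be brought in — it is precisely a log-convexity-type condition ($(\log f_\rho)'' \ge 0$ would be $f_\rho f_\rho'' - (f_\rho')^2 \ge 0$; the given condition is slightly stronger) guaranteeing $f_\rho'' \ge (f_\rho')^2/(2 f_\rho) \ge 0$, so in particular $f_\rho$ is convex. I expect to complete the bound either by a term-by-term argument using $f_\rho'' > 0$ together with an AM–GM / completion-of-square step that pairs $D^2 f_\rho''$ against $2 D f_c' f_\rho'$ using the bound on $f_\rho''$ from \eqref{eq:condition_thm1}, or more cleanly by proving a stronger structural fact: that the function $\tilde p \mapsto f_\rho(\tilde p)/D(\tilde p)$ is a ratio whose convexity follows from $f_\rho$ being log-convex and $1/D$ being log-convex (since $D$ is concave and positive, $1/D$ is convex, and one can check it is also log-convex when $f_c$ is convex), and a product of log-convex functions is log-convex hence convex.

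The main obstacle is this convexity verification — specifically, combining the two ``bad-sign'' contributions $D^2 f_\rho''$ (good sign but needs a quantitative lower bound) and $2 D f_c' f_\rho'$ (bad sign) into something provably nonnegative. The key leverage is that \eqref{eq:condition_thm1} does not merely say $f_\rho$ is convex; it controls $f_\rho''$ from below by $(f_\rho')^2/(2 f_\rho)$, and I anticipate that after substituting this lower bound into $D^2 f_\rho''$ and writing everything over the common scale, the remaining expression becomes a perfect square (or sum of squares) in the variables $D f_\rho'$ and $f_\rho f_c'$, analogous to how $2ab - a^2 \ge 0$-type inequalities close. Once convexity is in hand, the rest is routine: a convex function on a compact interval attains its minimum either at the unique stationary point (where $J' = 0$, i.e. $G = 0$) or at an endpoint, and the monotonicity-based case distinctions in \eqref{eq:sol_thm1} follow from the sign of $J' = G/D^2$ being monotone (itself a consequence of convexity). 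I would close by noting that $G$ is continuous and, under the stated hypotheses, its sign pattern determines uniquely which branch applies, and that in the interior case the stationary point is unique by strict monotonicity of $J'$ wherever $J''>0$.
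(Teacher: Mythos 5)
Your main line of attack is exactly the paper's: write $J'=G/D^2$ with $D=C_0-f_c$, reduce convexity to nonnegativity of the numerator $D^2 f_\rho'' + D f_\rho f_c'' + 2Df_c'f_\rho' + 2f_\rho(f_c')^2$, close that by completing the square in $f_c'$ so that the leftover is controlled by condition \eqref{eq:condition_thm1}, and then read off the three cases of \eqref{eq:sol_thm1} from the sign of $G$; the completion-of-square step you anticipate is precisely the paper's \eqref{eq:second_derivative_num_thm1}, which yields $2f_\rho\bigl(f_c'+\tfrac{Df_\rho'}{2f_\rho}\bigr)^2$ plus terms proportional to $\bigl(2f_\rho f_\rho''-(f_\rho')^2\bigr)D^2$ and $f_\rho f_c'' D$, all nonnegative. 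One caution on your proposed ``cleaner'' alternative: you have the comparison backwards --- log-convexity of $f_\rho$ requires $f_\rho f_\rho''-(f_\rho')^2\ge 0$, whereas \eqref{eq:condition_thm1} only gives $f_\rho f_\rho''\ge (f_\rho')^2/2$, a strictly \emph{weaker} condition, so the product-of-log-convex-functions route is not available under the stated hypotheses and the completion-of-square argument is the one that actually closes.
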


\begin{proof}
The proof starts with the use of the second-order necessary and sufficient condition for convex functions \citep[Sec 3.1.4]{boyd2004convex}. We then solve for $\tilde{p}$ based on properties of strictly convex functions.

A twice continuous differential function over a convex set is convex if and only if the Hessian is positive definite everywhere in its domain \citep[Sec 3.1.4]{boyd2004convex}. Therefore, we consider the second-order derivative of $J(\Tilde{p})$ and check if, when \eqref{eq:condition_thm1} is satisfied, it is positive over the convex set $\big[0, \, f_c^{-1}(C_0-C)\big]$. 

Taking the second derivative of $J(\tilde{p})$ by using the quotient rule and the chain rule, we have:
\begin{equation} \label{eq:second_derivative_thm1}
    \frac{\partial^2}{\partial \Tilde{p}^2} J(\Tilde{p}) = \frac{\big(C_0-f_c(\Tilde{p})\big)^2f_{\rho}^{''}(\Tilde{p})+\big(C_0-f_c(\Tilde{p})\big)\big(2f_{\rho}^{'}(\Tilde{p})f_c^{'}(\Tilde{p})+f_{\rho}f_c^{''}(\Tilde{p})\big)+2f_{\rho}(\Tilde{p})\big(f_c^{'}(\Tilde{p})\big)^2}{\big(C_0-f_c(\Tilde{p})\big)^3}
\end{equation}
where the denominator is positive due to \eqref{eq:comp_budget_bound}. Therefore, to determine whether $\frac{\partial^2}{\partial \Tilde{p}^2}\big(J(\Tilde{p})\big)$ is positive, we only need to determine if the numerator is positive. Let $F(\Tilde{p})$ denote the numerator of the right-hand side of \eqref{eq:second_derivative_thm1}, then:
\begin{align}
    F(\Tilde{p}) &\equiv \big(C_0-f_c(\Tilde{p})\big)^2f_{\rho}^{''}(\Tilde{p})+\big(C_0-f_c(\Tilde{p})\big)\big(2f_{\rho}^{'}(\Tilde{p})f_c^{'}(\Tilde{p})+f_{\rho}f_c^{''}(\Tilde{p})\big)+2f_{\rho}(\Tilde{p})\big(f_c^{'}(\Tilde{p})\big)^2 \nonumber \\
    & = 2f_{\rho}(\tilde{p})\Bigg( \big( f_c^{'}(\tilde{p}) \big)^2 + \frac{2\big(C_0-f_c(\tilde{p})\big)f^{'}_{\rho}(\tilde{p})}{2f_{\rho}(\tilde{p})}f_c^{'}(\tilde{p}) \nonumber \\
    & \qquad + \frac{\big(C_0-f_c(\tilde{p})\big)f_{\rho}(\tilde{p})f_c^{''}(\tilde{p}) + \big(   C_0-f_c(\tilde{p})\big)^2f_{\rho}^{''}(\tilde{p})}{2f_{\rho}(\tilde{p})}\Bigg) \label{eq:second_derivative_num_thm1_line1} \\
    & =
    2f_{\rho}(\tilde{p})\Bigg(f_c^{'}(\Tilde{p})+\frac{\big(C_0-f_c(\Tilde{p})\big)f_{\rho}^{'}(\Tilde{p})}{2f_{\rho}(\Tilde{p})}\Bigg)^2 \nonumber \\
    & \qquad + \Big(2f_{\rho}(\Tilde{p})f_{\rho}^{''}(\Tilde{p})-\big(f_{\rho}^{'}\big)^2\Big)\big(C_0-f_c(\Tilde{p})\big)^2+2\big(f_{\rho}(\Tilde{p})\big)^2f_c^{''}(\Tilde{p})\big(C_0-f_c(\Tilde{p})\big) \label{eq:second_derivative_num_thm1}
\end{align}
where \eqref{eq:second_derivative_num_thm1_line1} factors out the term $2f_{\rho}(\tilde{p})$ and \eqref{eq:second_derivative_num_thm1} completes the square with respect to $f^{'}_c(\tilde{p})$. If each of the terms in \eqref{eq:second_derivative_num_thm1} is positive, then $\frac{\partial^2}{\partial \Tilde{p}^2}\big(J(\Tilde{p})\big)$ is positive. Due to the definitions of $f_c$ and $f_{\rho}$, we know that $f_{\rho}(\Tilde{p})>0$ and $f_c^{''}(\Tilde{p})>0$. Due to the assumption \eqref{eq:comp_budget_bound}, we know that$\big(C_0-f_c(\Tilde{p})\big) > 0$. Therefore, the first and third terms in \eqref{eq:second_derivative_num_thm1} are positive. 

Now the sufficient condition provided in the proof statement leads to the desired implication
\begin{equation} \label{eq:second_derivative_num2_thm1}
    2f_{\rho}(\Tilde{p})f_{\rho}^{''}(\Tilde{p})-\big(f_{\rho}^{'}(\Tilde{p})\big)^2 \geq 0 \Longrightarrow F(\tilde{p}) > 0 \Longrightarrow \frac{\partial^2}{\partial \Tilde{p}^2} J(\Tilde{p}) > 0
\end{equation}
Because the Hessian of $J$ is positive on $[0, \, \infty]$, $J$ is strictly convex on $[0, \, \infty]$, which implies that $J$ is strictly convex on $\big[0, \, f_c^{-1}(C_0-C)\big] \subset [0, \, \infty]$. Therefore, a unique solution to \eqref{eq:cont_opt_over_p_only} exists in the following three scenarios:
\begin{enumerate}
    \item If $J$ is also non-monotonic on $\big[0, \, f_c^{-1}(C_0-C)\big]$, the unique solution satisfies the following condition \citep[Eq 4.22]{boyd2004convex}:
    \begin{equation} \label{eq:der_J_cont}
    \frac{\partial}{\partial \Tilde{p}}J(\Tilde{p}^*) = \frac{\big(C_0-f_c(\Tilde{p}^*)\big)f_{\rho}^{'}(\Tilde{p}^*)+f_{\rho}(\Tilde{p}^*)f_c^{'}(\Tilde{p}^*)}{\big(C_0-f_c(\tilde{p})\big)^2} = 0
    \end{equation}
    Because $\big(C_0-f_c(\tilde{p})\big)^2 > 0$, the solution $\tilde{p}^*$ must satisfy:
    \begin{equation} \label{eq:thm1_non_monotonic_conidtion}
        \big(C_0-f_c(\Tilde{p}^*)\big)f_{\rho}^{'}(\Tilde{p}^*)+f_{\rho}(\Tilde{p}^*)f_c^{'}(\Tilde{p}^*) = 0.
    \end{equation}
    \item If $J$ is monotonically decreasing on $\big[0, \, f_c^{-1}(C_0-C)\big]$, $\Tilde{p}^*$ is equal to the lower bound at $0$.
    \item If $J$ is monotonically increasing on $\big[0, \, f_c^{-1}(C_0-C)\big]$, $p^*$ is equal to the upper bound at $f_c^{-1}(C_0-C)$. 
\end{enumerate}

\end{proof}

Motivated by the fact that gPC achieves exponential convergence with proper polynomial bases \cite{xiu2002wiener, xiu2003performance}, we now analyze the cases where $f_{\rho}$ is of the following form:
\begin{equation} \label{eq:f_rho_approx}
    f_{\rho}(\Tilde{p}) = k_1e^{-k_2\Tilde{p}}
\end{equation}
where $k_1, \, k_2\in\mathbb{R}^+$ are constants.

\begin{corollary}[Optimal polynomial order with Exponentially Convergent gPC]\label{coroll_w_exp_PC}
Suppose that $f_{\rho}(\Tilde{p}) = k_1e^{-k_2\Tilde{p}}$, then:
\begin{equation} \label{eq:sol_coroll_w_exp_PC}
    \tilde{p}^* = 
    \left\{
    \begin{array}{ll}
    \text{Solution to } f_c^{'}(\Tilde{p}^*) + k_2f_c(\Tilde{p}^*) - C_0k_2 = 0 & \text{ if } J(\tilde{p}) \text{is non-monotonic}, \\
    f_c^{-1}(C_0-C) & \text { if } J(\tilde{p}) \text{is monotonically decreasing}, \\
    0 & \text{ otherwise }
    \end{array}
    \right.
\end{equation}
\end{corollary}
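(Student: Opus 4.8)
The plan is to specialize Theorem~\ref{Theorem 1} to the exponential ansatz $f_{\rho}(\tilde{p}) = k_1 e^{-k_2\tilde{p}}$ and verify that all its hypotheses hold, so that the three-case structure of the solution is inherited directly. First I would check that $f_{\rho}$ satisfies the standing assumptions on the correlation gap: since $k_1, k_2 > 0$, the function is positive-valued, twice-differentiable, and $f_{\rho}'(\tilde{p}) = -k_1 k_2 e^{-k_2\tilde{p}} < 0$ so it is non-increasing (in fact strictly decreasing), while $f_{\rho}''(\tilde{p}) = k_1 k_2^2 e^{-k_2\tilde{p}} > 0$ so it is convex. The function $f_c$ is unchanged, so it retains its hypotheses by assumption. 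Hence Theorem~\ref{Theorem 1} applies provided we also verify the key inequality \eqref{eq:condition_thm1}.

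The crux is the condition $2f_{\rho}f_{\rho}'' - (f_{\rho}')^2 \geq 0$. Substituting the ansatz gives $2(k_1 e^{-k_2\tilde{p}})(k_1 k_2^2 e^{-k_2\tilde{p}}) - (k_1 k_2 e^{-k_2\tilde{p}})^2 = 2k_1^2 k_2^2 e^{-2k_2\tilde{p}} - k_1^2 k_2^2 e^{-2k_2\tilde{p}} = k_1^2 k_2^2 e^{-2k_2\tilde{p}} > 0$, so the condition holds strictly for all $\tilde{p}$. Therefore the objective $J(\tilde{p})$ in \eqref{eq:cont_opt_over_p_only} is (strictly) convex on $[0, f_c^{-1}(C_0-C)]$, and the optimal order is characterized by the same trichotomy as in \eqref{eq:sol_thm1}.

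It then remains only to simplify the stationarity equation $G(\tilde{p}^*) = 0$ from \eqref{eq:thm_equality_non_monotonic} under the ansatz. Plugging in $f_{\rho}(\tilde{p}^*) = k_1 e^{-k_2\tilde{p}^*}$ and $f_{\rho}'(\tilde{p}^*) = -k_1 k_2 e^{-k_2\tilde{p}^*}$ into $G(\tilde{p}^*) = \big(C_0 - f_c(\tilde{p}^*)\big)f_{\rho}'(\tilde{p}^*) + f_{\rho}(\tilde{p}^*) f_c'(\tilde{p}^*) = 0$ yields $-k_1 k_2 e^{-k_2\tilde{p}^*}\big(C_0 - f_c(\tilde{p}^*)\big) + k_1 e^{-k_2\tilde{p}^*} f_c'(\tilde{p}^*) = 0$. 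Dividing through by the strictly positive factor $k_1 e^{-k_2\tilde{p}^*}$ collapses this to $f_c'(\tilde{p}^*) - k_2\big(C_0 - f_c(\tilde{p}^*)\big) = 0$, i.e.\ $f_c'(\tilde{p}^*) + k_2 f_c(\tilde{p}^*) - C_0 k_2 = 0$, which is exactly the stated equation; the monotone-decreasing and otherwise branches carry over verbatim from Theorem~\ref{Theorem 1}. I do not anticipate a genuine obstacle here — this corollary is a direct specialization — so the only thing to be careful about is confirming that \eqref{eq:condition_thm1} is not merely nonnegative but that the exponential ansatz does not accidentally violate any of the sign requirements (positivity, convexity, monotonicity) that Theorem~\ref{Theorem 1} needs; as shown above, all of these are satisfied with strict inequality, so nothing degenerate occurs.
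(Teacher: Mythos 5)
Your proposal is correct and follows essentially the same route as the paper: verify that the exponential ansatz satisfies the hypotheses of Theorem~\ref{Theorem 1} (in particular that $2f_{\rho}f_{\rho}'' - (f_{\rho}')^2 = k_1^2k_2^2e^{-2k_2\tilde{p}} > 0$, so $J$ is strictly convex), then cancel the common positive factor $k_1e^{-k_2\tilde{p}^*}$ in the stationarity equation to obtain $f_c'(\tilde{p}^*) + k_2 f_c(\tilde{p}^*) - C_0k_2 = 0$. Your write-up is if anything slightly more careful than the paper's, since you make the division by $k_1e^{-k_2\tilde{p}^*}$ explicit and correctly match the minimizer to the upper endpoint in the monotonically decreasing case (the paper's enumerated cases in its proof inadvertently swap the two endpoints, although its displayed solution \eqref{eq:sol_coroll_w_exp_PC} is stated correctly).
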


\begin{proof}
The proof is based upon Theorem \ref{Theorem 1}. First, we show the strict convexity of $J$ by substituting \eqref{eq:f_rho_approx} into \eqref{eq:second_derivative_thm1} -- \eqref{eq:second_derivative_num2_thm1}. Then, we utilize the properties of strictly convex functions to obtain the results \eqref{eq:sol_coroll_w_exp_PC}.

We first check the convexity of $J$ by substituting \eqref{eq:f_rho_approx} into \eqref{eq:second_derivative_num2_thm1}. Then, the following can be derived for all $\tilde{p}\in\big[0, \, f_c^{-1}(C_0-C)\big]$:
\begin{align}
    2f_{\rho}(\Tilde{p})f_{\rho}^{''}(\Tilde{p})-\big(f_{\rho}^{'}(\Tilde{p})\big)^2 &= 
    2k_1^2k_2^2e^{-2k_2\Tilde{p}} - \big(-k_1k_2e^{-k_2\Tilde{p}}\big)^2 \nonumber \\
    &= k_1^2k_2^2e^{-2k_2\Tilde{p}} > 0 \label{eq:condition_coroll_w_exp_PC_line1} \\
    & \Longrightarrow F(\tilde{p}) > 0 \Longrightarrow \frac{\partial^2}{\partial \Tilde{p}^2} J(\Tilde{p}) > 0 \label{eq:condition_coroll_w_exp_PC}
\end{align}
where \eqref{eq:condition_coroll_w_exp_PC_line1} is obtained directly from the aforementioned substitution and \eqref{eq:condition_coroll_w_exp_PC} is obtained based on \eqref{eq:second_derivative_thm1} -- \eqref{eq:second_derivative_num2_thm1}. Therefore, objective function $J$ is strictly convex on $\big[0, \, f_c^{-1}(C_0-C)\big]$, which implies that a unique solution to \eqref{eq:cont_opt_over_p_only} exists in the following three scenarios:
\begin{enumerate}
    \item If $J$ is also non-monotonic on $\big[0, \, f_c^{-1}(C_0-C)\big]$, the unique solution satisfies the following condition:
    \begin{equation} \label{eq:der_J_cont_tot}
    \frac{\partial}{\partial \Tilde{p}}J(\Tilde{p}^*) = \frac{f_c^{'}(\Tilde{p}^*) + k_2f_c(\Tilde{p}^*) - C_0k_2}{\big(C_0-f_c(\tilde{p})\big)^2} = 0
    \end{equation}
    Because $\big(C_0-f_c(\tilde{p})\big)^2 > 0$, the solution $\tilde{p}^*$ must satisfies $\frac{\partial}{\partial \Tilde{p}}J(\Tilde{p}^*) = 0$.  
    \item If $J$ is monotonically decreasing on $\big[0, \, f_c^{-1}(C_0-C)\big]$, $\Tilde{p}^*$ is equal to the lower bound at $0$.
    \item If $J$ is monotonically increasing on $\big[0, \, f_c^{-1}(C_0-C)\big]$, $p^*$ is equal to the upper bound at $f_c^{-1}(C_0-C)$. 
\end{enumerate} 

\end{proof}

We remark that, in cases where exponential convergence of gPC is not available, other forms of $f_{\rho}$ are possible to give similar results as Corollary \ref{coroll_w_exp_PC}, as long as it satisfies the conditions laid out for $f_{\rho}$ in Theorem~\ref{Theorem 1}. 

Next, we consider specific forms of $f_c$ based on three factors: (1) the model complexity; (2) the gPC polynomial degree; and (3) the number of input uncertainties. The contribution of the model complexity to the gPC cost arises via nonlinear operations, such as multiplications, between two or more random variables in the governing equations. Generally, this contribution to the gPC cost can be approximated by a polynomial function of the number of terms in the gPC expansion. The contributions of $\tilde{p}$ and $n_{\zeta}$ to the cost of gPC is determined by the type of gPC expansion scheme adopted. Here, we consider two of the most common schemes: tensor product expansion and total-order expansion.

We begin by analyzing the case where the form of $f_c$ is motivated by the tensor product expansion scheme for constructing gPC:
\begin{equation} \label{eq:tensor_prod_approx}
    f_c(\Tilde{p}) = k_3(\Tilde{p}+1)^{k_4n_{\zeta}}
\end{equation}

\begin{corollary}[CVPC with Exponentially Convergent gPC and Tensor Product Expansion] \label{coroll_w_tensor_prod} 

Suppose $f_c(\Tilde{p}) = k_3(\Tilde{p}+1)^{k_4n_{\zeta}}$ for tensor product expansion then the solution to \eqref{eq:cont_opt_over_p_only} is:

\begin{equation} \label{eq:sol_coroll_w_tensor_prod}
    \tilde{p}^* = 
    \left\{
    \begin{array}{ll}
    \text{Solution to } G(\tilde{p}) = 0 & \text{ if } J(\tilde{p}) \text{is non-monotonic}, \\
    f_c^{-1}(C_0-C) & \text { if } J(\tilde{p}) \text{is monotonically decreasing}, \\
    0 & \text{ otherwise }
    \end{array}
    \right.
\end{equation}
where
\begin{equation} \label{eq:der_to_zero_tensor_prod}
    G(\tilde{p}) = k_3k_4n_{\zeta}(\Tilde{p}+1)^{k_4n_{\zeta}-1}-k_2\big(C_0-k_3(\Tilde{p}+1)^{k_4n_{\zeta}}\big)
\end{equation}
\end{corollary}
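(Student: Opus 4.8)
The plan is to obtain this corollary as a direct specialization of Corollary~\ref{coroll_w_exp_PC} to the particular cost model arising from the tensor product expansion. Since Corollary~\ref{coroll_w_exp_PC} already handles the exponential form $f_{\rho}(\tilde{p}) = k_1 e^{-k_2\tilde{p}}$, the only substantive work is to check that $f_c(\tilde{p}) = k_3(\tilde{p}+1)^{k_4 n_{\zeta}}$ satisfies the hypotheses imposed on $f_c$ in Theorem~\ref{Theorem 1}, and then to substitute this $f_c$ into the stationarity condition that Corollary~\ref{coroll_w_exp_PC} supplies.

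First I would verify the regularity of $f_c$. On $[0,\infty)$ the base $\tilde{p}+1$ is positive, so $f_c$ is positive-valued and smooth, hence twice-differentiable. Its first derivative $f_c^{'}(\tilde{p}) = k_3 k_4 n_{\zeta}(\tilde{p}+1)^{k_4 n_{\zeta}-1}$ is strictly positive because $k_3, k_4, n_{\zeta}$ are positive, so $f_c$ is monotonically increasing; in particular it is a bijection onto its range, so $f_c^{-1}(C_0-C)$ is well-defined under the standing assumption \eqref{eq:comp_budget_bound}. For convexity, $f_c^{''}(\tilde{p}) = k_3 k_4 n_{\zeta}(k_4 n_{\zeta}-1)(\tilde{p}+1)^{k_4 n_{\zeta}-2}\ge 0$ whenever $k_4 n_{\zeta}\ge 1$; since $n_{\zeta}\ge 1$ and the tensor-product cost exponent $k_4$ is at least one in the regimes of interest, this holds and $f_c$ is convex. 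With $f_c$ meeting the Theorem~\ref{Theorem 1} hypotheses and $f_{\rho}$ of the exponential form already known to satisfy \eqref{eq:condition_thm1} (the computation $2f_{\rho}f_{\rho}^{''}-(f_{\rho}^{'})^2 = k_1^2 k_2^2 e^{-2k_2\tilde{p}}>0$ appears in the proof of Corollary~\ref{coroll_w_exp_PC}), Corollary~\ref{coroll_w_exp_PC} applies verbatim: $J$ is strictly convex on $\big[0, f_c^{-1}(C_0-C)\big]$, and the minimizer obeys the stated trichotomy with the interior (non-monotonic) case characterized by $f_c^{'}(\tilde{p}^*) + k_2 f_c(\tilde{p}^*) - C_0 k_2 = 0$.

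Finally I would substitute the explicit $f_c$ and $f_c^{'}$ into this stationarity condition, obtaining
\[
k_3 k_4 n_{\zeta}(\tilde{p}^*+1)^{k_4 n_{\zeta}-1} + k_2 k_3 (\tilde{p}^*+1)^{k_4 n_{\zeta}} - C_0 k_2 = 0,
\]
which rearranges to $k_3 k_4 n_{\zeta}(\tilde{p}^*+1)^{k_4 n_{\zeta}-1} - k_2\big(C_0 - k_3(\tilde{p}^*+1)^{k_4 n_{\zeta}}\big) = 0$, i.e.\ $G(\tilde{p}^*)=0$ with $G$ as in \eqref{eq:der_to_zero_tensor_prod}. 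The monotonically-decreasing and monotonically-increasing branches are inherited directly from Corollary~\ref{coroll_w_exp_PC}, giving the boundary solutions $f_c^{-1}(C_0-C)$ and $0$ respectively. The main obstacle, and it is a mild one, is the convexity check for $f_c$: one must notice that it hinges on $k_4 n_{\zeta}\ge 1$ and either invoke this as an intrinsic property of the tensor-product cost scaling or state it as an explicit standing assumption; everything else is routine bookkeeping once Corollary~\ref{coroll_w_exp_PC} is in hand.
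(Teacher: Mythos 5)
Your proof follows essentially the same route as the paper's: verify that the tensor-product cost $f_c(\tilde{p}) = k_3(\tilde{p}+1)^{k_4 n_{\zeta}}$ satisfies the positivity, monotonicity, and convexity hypotheses required by Theorem~\ref{Theorem 1} and Corollary~\ref{coroll_w_exp_PC}, then substitute the explicit $f_c$ and $f_c^{'}$ into the stationarity condition of Corollary~\ref{coroll_w_exp_PC} to obtain $G(\tilde{p}^*)=0$. In fact your convexity check is more careful than the paper's, which asserts $\frac{\partial^2}{\partial \tilde{p}^2}f_c(\tilde{p})>0$ merely from the positivity of $k_3$ and $k_4$, whereas $f_c^{''}(\tilde{p}) = k_3 k_4 n_{\zeta}(k_4 n_{\zeta}-1)(\tilde{p}+1)^{k_4 n_{\zeta}-2}$ actually requires $k_4 n_{\zeta}\ge 1$ (strictly greater for strict convexity) --- a hypothesis you correctly identify as needing to be stated explicitly or justified from the cost scaling.
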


\begin{proof}
Because $k_3$ and $k_4$ are positive real constants, $\frac{\partial^2}{\partial \Tilde{p}^2}f_c(\Tilde{p}) > 0$ for all $\Tilde{p}\in\Big[0, \, \big(\frac{C_0-C}{k_3}\big)^{\frac{1}{k_4}}-1 \Big]$, where the upper bound of the domain is obtained by finding the inverse function of the power function in \eqref{eq:tensor_prod_approx} and then substitute the maximum allowable cost for gPC at $(C_0-C)$. Hence, \eqref{eq:tensor_prod_approx} is strictly convex on the domain and Corollary \ref{coroll_w_exp_PC} holds. Substitute \eqref{eq:tensor_prod_approx} into the results of Corollary \ref{coroll_w_exp_PC}, we obtain the results of Corollary \ref{coroll_w_tensor_prod}.

\end{proof}

Next we analyze the case where the form of $f_c$ is motivated by the total-order expansion scheme for constructing gPC:
\begin{equation} \label{eq:tot_order_approx_disc}
    f_c^{disc}(p) = k_3\bigg(\frac{(p+n_{\zeta})!}{n_{\zeta}!p!}\bigg)^{k_4}
\end{equation}
where $k_3$ and $k_4$ are positive real constants. We then apply Stirling's approximation to obtain the following twice continuous function $f_c(\Tilde{p})$ that approximate $f^{disc}_c(p)$:
\begin{align}
    f_c(\Tilde{p}) &= k_3\Bigg(\frac{\sqrt{2\pi(\Tilde{p}+n_{\zeta})}\big(\frac{\Tilde{p}+n_{\zeta}}{e}\big)^{\Tilde{p}+n_{\zeta}}}{n_{\zeta}!\sqrt{2\pi \Tilde{p}}\big(\frac{\Tilde{p}}{e}\big)^{\Tilde{p}}}\Bigg)^{k_4} \nonumber \\
    & = k_3(n_{\zeta}!)^{-k_4}e^{-k_4n_{\zeta}}\Tilde{p}^{-k_4(\Tilde{p}+\frac{1}{2})}(\Tilde{p}+n_{\zeta})^{k_4(\Tilde{p}+n_{\zeta}+\frac{1}{2})} \label{eq:tot_order_cont}
\end{align}

In this case, we solve an alternative problem to \eqref{eq:cont_opt_over_p_only} with a modified domain for the gPC polynomial degree due to the inaccuracy of Stirling's approximation near $\tilde{p}=0$:
\begin{equation} \label{eq:cont_opt_over_p_only_alt}
\min_{\Tilde{p}_t\in \big[1, \, f_c^{-1}(C_0-C)\big]} \ J(\tilde{p}_t) \quad \textrm{ where } \quad J(\tilde{p}_t) =  \frac{f_{\rho}(\Tilde{p}_t)}{C_0-f_c(\Tilde{p}_t)},
\end{equation}
This modification can be justified because the polynomial degree in the original discrete problem \eqref{eq:cont_opt_over_p_only} cannot take on values between $0$ and $1$ anyway. Therefore, as we will show in the subsequent section, the solution to the original discrete problem \eqref{eq:cont_opt_over_p_only} can be obtained from the solution to \eqref{eq:cont_opt_over_p_only_alt} through a simple comparison against the case of $p=0$. 

\begin{corollary}[Exponentially Convergent gPC and Total-Order Expansion] \label{coroll_w_tot_order}

Suppose 
\begin{equation}f_c(\Tilde{p}_t) = k_3(n_{\zeta}!)^{-k_4}e^{-k_4n_{\zeta}}\Tilde{p}_t^{-k_4(\Tilde{p}_t+\frac{1}{2})}(\Tilde{p}_t+n_{\zeta})^{k_4(\Tilde{p}_t+n_{\zeta}+\frac{1}{2})}
\end{equation}
then the solution to \eqref{eq:cont_opt_over_p_only_alt} is:
\begin{equation} \label{eq:sol_coroll_w_tot_order_alt}
    \tilde{p}_t = 
    \left\{
    \begin{array}{ll}
    \text{Solution to } G(\tilde{p}) = 0 & \text{ if } J(\tilde{p}_t) \text{is non-monotonic}, \\
    f_c^{-1}(C_0-C) & \text { if } J(\tilde{p}_t) \text{is monotonically decreasing}, \\
    1 & \text{ otherwise }
    \end{array}
    \right.
\end{equation}
where
\begin{align}
    G(\tilde{p}_t) &=
    k_3(n_{\zeta}!)^{-k_4}e^{-k_4n_{\zeta}}\Tilde{p}_t^{-k_4(\Tilde{p}_t+\frac{1}{2})}(\Tilde{p}_t+n_{\zeta})^{k_4(\Tilde{p}_t+n_{\zeta}+\frac{1}{2})}\bigg(k_4\ln{\frac{\Tilde{p}_t+n_{\zeta}}{\Tilde{p}_t}} \nonumber \\
    & \qquad  - \frac{k_4n_{\zeta}}{2\Tilde{p}_t(\Tilde{p}_t+n_{\zeta})}+k_2\bigg) - C_0k_2 \label{eq:der_to_zero_tot_order}
\end{align}
\end{corollary}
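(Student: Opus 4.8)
The plan is to mirror the proof of Corollary~\ref{coroll_w_tensor_prod}: verify that the Stirling-approximated cost function $f_c$ from \eqref{eq:tot_order_cont} satisfies, on the modified domain $\big[1,\,f_c^{-1}(C_0-C)\big]$, the hypotheses that Corollary~\ref{coroll_w_exp_PC} (equivalently Theorem~\ref{Theorem 1} together with the exponential form \eqref{eq:f_rho_approx}) imposes on the cost function -- positivity, twice-differentiability, monotonic increase, and convexity -- and then substitute the explicit $f_c$ and $f_c'$ into the interior first-order condition $f_c'(\tilde p_t)+k_2 f_c(\tilde p_t)-C_0 k_2=0$ supplied by Corollary~\ref{coroll_w_exp_PC} to recover $G(\tilde p_t)$ as in \eqref{eq:der_to_zero_tot_order}.

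First I would differentiate $f_c$ logarithmically. Writing $\ln f_c(\tilde p_t)$ as a constant plus $k_4\big[(\tilde p_t+n_\zeta+\tfrac12)\ln(\tilde p_t+n_\zeta)-(\tilde p_t+\tfrac12)\ln\tilde p_t\big]$ and differentiating once gives $f_c'(\tilde p_t)=f_c(\tilde p_t)\,k_4\big(\ln\tfrac{\tilde p_t+n_\zeta}{\tilde p_t}-\tfrac{n_\zeta}{2\tilde p_t(\tilde p_t+n_\zeta)}\big)$. Using the elementary inequality $\ln(1+x)\ge \tfrac{x}{1+x}$ with $x=n_\zeta/\tilde p_t$ and $\tilde p_t\ge 1$ shows the bracketed factor is strictly positive, so $f_c>0$ is strictly increasing on the domain and $f_c^{-1}(C_0-C)$ is well defined; together with positivity and smoothness of $f_c$ this disposes of the first, trivial, block of hypotheses.

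The delicate step -- and the main obstacle -- is convexity of $f_c$ on $\big[1,\,f_c^{-1}(C_0-C)\big]$; this is where the total-order case genuinely departs from the power-function case of Corollary~\ref{coroll_w_tensor_prod}, where $f_c''>0$ was immediate. Since $f_c>0$, convexity is equivalent to $\big((\ln f_c)'\big)^2+(\ln f_c)''\ge 0$. A second logarithmic differentiation (expanding $\tfrac{1}{\tilde p_t(\tilde p_t+n_\zeta)}$ by partial fractions) shows $(\ln f_c)''<0$ on the domain, so one must prove that the square of the first log-derivative dominates the absolute value of the second throughout the interval. I would reduce this to a one-variable inequality by bounding $(\ln f_c)'$ below (again via $\ln(1+x)\ge\tfrac{x}{1+x}$) and $\lvert(\ln f_c)''\rvert$ above from its closed form, arriving at a comparison of the type $k_4\big(\ln\tfrac{\tilde p_t+n_\zeta}{\tilde p_t}-\tfrac{n_\zeta}{2\tilde p_t(\tilde p_t+n_\zeta)}\big)^2\ge \tfrac{n_\zeta^2}{2\tilde p_t^2(\tilde p_t+n_\zeta)^2}$ which is tightest at $\tilde p_t=1$, where it reduces to $k_4\big[\ln(1+n_\zeta)-\tfrac{n_\zeta}{2(1+n_\zeta)}\big]^2\ge\tfrac{n_\zeta^2}{2(1+n_\zeta)^2}$. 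If this is not unconditional for every $k_4>0$, I would record the mild standing assumption $k_4\ge 1$ (cost at least linear in the number of expansion terms, the only regime in which the total-order cost model is meaningful), under which it holds for all $n_\zeta\ge1$ and $\tilde p_t\ge 1$; alternatively one can note that the Stirling form behaves asymptotically like the power law $\tilde p_t^{\,k_4 n_\zeta}$ and inherits convexity from Corollary~\ref{coroll_w_tensor_prod} on the operating range.

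Once $f_c$ is known to be positive, twice differentiable, increasing and convex on $\big[1,\,f_c^{-1}(C_0-C)\big]$, Corollary~\ref{coroll_w_exp_PC} applies verbatim with its lower endpoint $0$ replaced by $1$: $J$ is convex, so the optimizer of \eqref{eq:cont_opt_over_p_only_alt} is the interior stationary point when $J$ is non-monotonic, the upper endpoint $f_c^{-1}(C_0-C)$ when $J$ is monotonically decreasing, and the lower endpoint $\tilde p_t=1$ otherwise -- exactly the three-case structure of \eqref{eq:sol_coroll_w_tot_order_alt}. For the interior case, substituting the expression for $f_c'$ found above into $f_c'(\tilde p_t)+k_2 f_c(\tilde p_t)-C_0 k_2=0$ and factoring out $f_c(\tilde p_t)$ collects the bracket $\big(k_4\ln\tfrac{\tilde p_t+n_\zeta}{\tilde p_t}-\tfrac{k_4 n_\zeta}{2\tilde p_t(\tilde p_t+n_\zeta)}+k_2\big)$ multiplying $f_c(\tilde p_t)$; inserting the closed form of $f_c(\tilde p_t)$ from the statement yields precisely $G(\tilde p_t)$ of \eqref{eq:der_to_zero_tot_order}, and this final substitution is routine algebra. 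The only content beyond Corollary~\ref{coroll_w_exp_PC} is thus the convexity verification and the two logarithmic differentiations.
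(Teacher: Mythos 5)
Your plan follows the paper's proof essentially step for step: monotonicity and convexity of the Stirling-approximated $f_c$ on $\big[1,\,f_c^{-1}(C_0-C)\big]$ are both established via logarithmic differentiation together with the inequality $\ln(1+x)\ge x/(1+x)$ (the paper's cited logarithm inequality), after which Corollary~\ref{coroll_w_exp_PC} is invoked with the lower endpoint moved to $1$ and the explicit $f_c$, $f_c'$ are substituted into $f_c'(\tilde p_t)+k_2 f_c(\tilde p_t)-C_0k_2=0$ to produce $G(\tilde p_t)$. One remark: your caution about small $k_4$ is warranted --- the paper discharges the convexity check by dropping the exponent $k_4$ and verifying $h''>0$ for $h(\tilde p_t)=\tilde p_t^{-(\tilde p_t+\frac{1}{2})}(\tilde p_t+n_\zeta)^{\tilde p_t+n_\zeta+\frac{1}{2}}$, a reduction that is only valid for $k_4\ge 1$ because $(\ln h)''<0$ on the domain, so the standing assumption $k_4\ge 1$ you propose as a fallback is in fact what the paper's own argument implicitly requires.
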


\begin{proof}
The proof is based upon Theorem \ref{Theorem 1}. First, we show the strict convexity of $J$ by substituting \eqref{eq:tot_order_cont} into \eqref{eq:second_derivative_thm1} -- \eqref{eq:second_derivative_num2_thm1}. Then, we utilize the properties of strictly convex functions to the results in \eqref{eq:sol_coroll_w_tot_order_alt}. 

To use Theorem \ref{Theorem 1}, \eqref{eq:tot_order_cont} must be monotonically increasing and convex on $\big[1, \, f_c^{-1}(C_0-C)\big]$. To show that \eqref{eq:tot_order_cont} is monotonically increasing, we calculate the derivative of $f_c(\Tilde{p}_t)$ and show that it is positive under the condition:
\begin{align}
    \frac{\partial}{\partial \Tilde{p}_t}f_c(\Tilde{p}_t) &= 
    k_3k_4\Bigg(\frac{\Tilde{p}_t^{-(\Tilde{p}_t+\frac{1}{2})}(\Tilde{p}_t+n_{\zeta})^{\Tilde{p}_t+n_{\zeta}+\frac{1}{2}}}{n_{\zeta}!e^{n_{\zeta}}}\Bigg)^{k_4}\Bigg(\ln\Big(\frac{n_{\zeta}}{\Tilde{p}_t}+1\Big) + \frac{\Tilde{p}_t+n_{\zeta}+\frac{1}{2}}{\Tilde{p}_t+n_{\zeta}} - \frac{\Tilde{p}_t+\frac{1}{2}}{\Tilde{p}_t}\Bigg) \label{eq:der_fc_tot_order_line_1} \\
    &\geq k_3k_4\Bigg(\frac{\Tilde{p}_t^{-(\Tilde{p}_t+\frac{1}{2})}(\Tilde{p}_t+n_{\zeta})^{\Tilde{p}_t+n_{\zeta}+\frac{1}{2}}}{n_{\zeta}!e^{n_{\zeta}}}\Bigg)^{k_4}\Bigg(\frac{\frac{n_{\zeta}}{\Tilde{p}_t}}{1+\frac{n_{\zeta}}{\Tilde{p}_t}} + \frac{\Tilde{p}_t+n_{\zeta}+\frac{1}{2}}{\Tilde{p}_t+n_{\zeta}} - \frac{\Tilde{p}_t+\frac{1}{2}}{\Tilde{p}_t}\Bigg) \label{eq:der_fc_tot_order_line_2} \\
    &= k_3k_4\Bigg(\frac{\Tilde{p}_t^{-(\Tilde{p}_t+\frac{1}{2})}(\Tilde{p}_t+n_{\zeta})^{\Tilde{p}_t+n_{\zeta}+\frac{1}{2}}}{n_{\zeta}!e^{n_{\zeta}}}\Bigg)^{k_4}\frac{n_{\zeta}\tilde{p}_t+\tilde{p}_t(\tilde{p}_t+n_{\zeta}+\frac{1}{2})-(\tilde{p}_t+n_{\zeta})(\tilde{p}_t+\frac{1}{2})}{\tilde{p}_t(\tilde{p}_t+n_{\zeta})} \label{eq:der_fc_tot_order_line_3} \\
    &= k_3k_4\Bigg(\frac{\Tilde{p}_t^{-(\Tilde{p}_t+\frac{1}{2})}(\Tilde{p}_t+n_{\zeta})^{\Tilde{p}_t+n_{\zeta}+\frac{1}{2}}}{n_{\zeta}!e^{n_{\zeta}}}\Bigg)^{k_4}\frac{n_{\zeta}(\Tilde{p}_t-\frac{1}{2})}{\Tilde{p}_t(\Tilde{p}_t+n_{\zeta})} \label{eq:der_fc_tot_order_line_4} \\
    &> 0 \label{eq:der_fc_tot_order}
\end{align}
where \eqref{eq:der_fc_tot_order_line_1} takes the first-order derivative of \eqref{eq:tot_order_cont} with respect to $\tilde{p}_t$, \eqref{eq:der_fc_tot_order_line_2} uses the logarithm inequality appearing in \cite[Eq. 1]{love198064}, \eqref{eq:der_fc_tot_order_line_3}-\eqref{eq:der_fc_tot_order_line_4} collect common terms. Therefore, $f_c$ is monotonically increasing for all $\Tilde{p}_t\in\big[1, \, f_c^{-1}(C_0-C)\big]$. 

Next, we show that \eqref{eq:tot_order_cont} is strictly convex by checking if its Hessian is positive on $\Tilde{p}_t\in\big[1, \, f_c^{-1}(C_0-C)\big]$ according to \citep[Sec 3.1.4]{boyd2004convex}. Note that the terms $k_3(n_{\zeta}!)^{-k_4}e^{-k_4n_{\zeta}}$ and $k_4$ in \eqref{eq:der_fc_tot_order}, which are positive constants, do not affect the sign of the Hessian, thus they are neglected for this purpose to simplify the calculation. To this end, checking if the Hessian of \eqref{eq:der_fc_tot_order} is positive is equivalent to checking if the Hessian of the following function is positive:
\begin{equation} \label{eq:h_for_second_der}
    h(\tilde{p}_t) = \tilde{p}_t^{-(\tilde{p}_t+\frac{1}{2})}(\tilde{p}_t+n_{\zeta})^{\tilde{p}_t+n_{\zeta}+\frac{1}{2}}
\end{equation}
whose Hessian is:
\begin{align}
    \frac{\partial^2}{\partial \Tilde{p}_t^2}h(\Tilde{p}_t) &= 
    \Tilde{p}_t^{-(\Tilde{p}_t+\frac{1}{2})}(\Tilde{p}_t+n_{\zeta})^{\Tilde{p}_t+n_{\zeta}+\frac{1}{2}}\bigg(\ln\Big(\frac{\Tilde{p}_t+n_{\zeta}}{\Tilde{p}_t}\Big)^2 - \frac{n_{\zeta}}{\Tilde{p}_t(\Tilde{p}_t+n_{\zeta})}\ln\Big(\frac{\Tilde{p}_t+n_{\zeta}}{\Tilde{p}_t}\Big) \nonumber\\
    & \qquad + \frac{-n_{\zeta}\Tilde{p}_t^2+(1-n_{\zeta})n_{\zeta}\Tilde{p}_t+\frac{3}{4}n_{\zeta}^2}{\Tilde{p}_t^2(\Tilde{p}_t+n_{\zeta})^2}\bigg) \label{eq:sec_der_h_tot_order_line1} \\
    &= \Tilde{p}_t^{-(\Tilde{p}_t+\frac{1}{2})}(\Tilde{p}_t+n_{\zeta})^{\Tilde{p}_t+n_{\zeta}+\frac{1}{2}}\Bigg(\bigg(\ln\Big(\frac{\Tilde{p}_t+n_{\zeta}}{\Tilde{p}_t}\Big)-\frac{n_{\zeta}}{2\Tilde{p}_t(\Tilde{p}_t+n_{\zeta})}\bigg)^2 \nonumber \\
    & \qquad + \frac{-4n_{\zeta}\Tilde{p}_t^2+4(1-n_{\zeta})n_{\zeta}\Tilde{p}_t+3n_{\zeta}^2}{4\Tilde{p}_t^2(\Tilde{p}_t+n_{\zeta})^2}\Bigg) \label{eq:sec_der_h_tot_order_line2}
\end{align}
where \eqref{eq:sec_der_h_tot_order_line1} takes the second derivative with respect to $\tilde{p}_t$ using the chain rule, \eqref{eq:sec_der_h_tot_order_line2} completes the square with respect to $\ln\Big(\frac{\Tilde{p}_t+n_{\zeta}}{\Tilde{p}_t}\Big)$.

Next, we apply the logarithm inequality \citep[Eqn 1]{love198064} to obtain the following:
\begin{align}
    \frac{\partial^2}{\partial \Tilde{p}_t^2}h(\Tilde{p}_t) &\geq \Tilde{p}_t^{-(\Tilde{p}_t+\frac{1}{2})}(\Tilde{p}_t+n_{\zeta})^{\Tilde{p}_t+n_{\zeta}+\frac{1}{2}}\Bigg(\bigg(\frac{n_{\zeta}}{\Tilde{p}_t+n_{\zeta}}-\frac{n_{\zeta}}{2\Tilde{p}_t(\Tilde{p}_t+n_{\zeta})}\bigg)^2\nonumber \\
    & \qquad + \frac{-4n_{\zeta}\Tilde{p}_t^2+4(1-n_{\zeta})n_{\zeta}\Tilde{p}_t+3n_{\zeta}^2}{4\Tilde{p}_t^2(\Tilde{p}_t+n_{\zeta})^2}\Bigg) \label{eq:sec_der_h_tot_order_line3} \\
    &= \Tilde{p}_t^{-(\Tilde{p}_t+\frac{1}{2})}(\Tilde{p}_t+n_{\zeta})^{\Tilde{p}_t+n_{\zeta}+\frac{1}{2}}\bigg(\frac{4(n_{\zeta}-1)\Tilde{p}_t^2+4(1-2n_{\zeta})\Tilde{p}_t+3n_{\zeta}}{4n_{\zeta}\Tilde{p}_t^2(\Tilde{p}_t+n_{\zeta})^2}\bigg) \label{eq:sec_der_h_tot_order_line4} \\
    &= \Tilde{p}_t^{-(\Tilde{p}_t+\frac{1}{2})}(\Tilde{p}_t+n_{\zeta})^{\Tilde{p}_t+n_{\zeta}+\frac{1}{2}}\bigg(\frac{n_{\zeta}-1}{n_{\zeta}\Tilde{p}_t^2(\Tilde{p}+n_{\zeta})^2}\bigg)\Bigg(\bigg(\Tilde{p}_t+\frac{1-2n_{\zeta}}{8(n_{\zeta}-1)}\bigg)^2 \nonumber \\
    & \qquad + \frac{11\big(n_{\zeta}-\frac{1}{2}\big)^2-3}{16(n_{\zeta}-1)^2}\Bigg) \label{eq:sec_der_h_tot_order_line5}
\end{align}
where \eqref{eq:sec_der_h_tot_order_line3} arises from the logarithm inequality \citep[Eqn 1]{love198064}, \eqref{eq:sec_der_h_tot_order_line4} expands the quadratic term and collect common terms, \eqref{eq:sec_der_h_tot_order_line5} completes the square with respect to $\tilde{p}_t$.

Due to the fact that $\tilde{p}_t$ is non-negative, the Hessian can be further bounded from below as follows:
\begin{align}
    \frac{\partial^2}{\partial \Tilde{p}_t^2}h(\Tilde{p}_t) &\geq \Tilde{p}_t^{-(\Tilde{p}_t+\frac{1}{2})}(\Tilde{p}_t+n_{\zeta})^{\Tilde{p}_t+n_{\zeta}+\frac{1}{2}}\bigg(\frac{n_{\zeta}-1}{n_{\zeta}\Tilde{p}_t^2(\Tilde{p}_t+n_{\zeta})^2}\bigg)\Bigg(\bigg(\frac{1-2n_{\zeta}}{8(n_{\zeta}-1)}\bigg)^2 \nonumber \\
    & \qquad + \frac{11\big(n_{\zeta}-\frac{1}{2}\big)^2-3}{16(n_{\zeta}-1)^2}\Bigg) \label{eq:sec_der_h_tot_order_line6} \\
    &> \Tilde{p}_t^{-(\Tilde{p}_t+\frac{1}{2})}(\Tilde{p}_t+n_{\zeta})^{\Tilde{p}_t+n_{\zeta}+\frac{1}{2}}\bigg(\frac{n_{\zeta}-1}{n_{\zeta}\Tilde{p}_t^2(\Tilde{p}_t+n_{\zeta})^2}\bigg)\Bigg(\bigg(\frac{1-2n_{\zeta}}{8(n_{\zeta}-1)}\bigg)^2+ \frac{-\frac{1}{4}}{16(n_{\zeta}-1)^2}\Bigg) \label{eq:sec_der_h_tot_order_line7} \\
    &=\Tilde{p}_t^{-(\Tilde{p}_t+\frac{1}{2})}(\Tilde{p}_t+n_{\zeta})^{\Tilde{p}_t+n_{\zeta}+\frac{1}{2}}\frac{4n(n_{\zeta}-1)^2}{64n\tilde{p}_t^2(n-1)^2(\tilde{p}_t+n_{\zeta})^2} \label{eq:sec_der_h_tot_order_line8} \\
    &= \frac{\Tilde{p}_t^{-(\Tilde{p}_t+\frac{1}{2})}(\Tilde{p}_t+n_{\zeta})^{\Tilde{p}_t+n_{\zeta}+\frac{1}{2}}}{16\Tilde{p}_t^2(\Tilde{p}_t+n_{\zeta})^2} \label{eq:sec_der_h_tot_order_line9} \\
    &> 0 \label{eq:sec_der_h_tot_order}
\end{align}
where \eqref{eq:sec_der_h_tot_order_line6} uses the fact that $\tilde{p}_t$ is non-negative, \eqref{eq:sec_der_h_tot_order_line7} expands the quadratic term in the numerator of $\frac{11\big(n_{\zeta}-\frac{1}{2}\big)^2-3}{16(n_{\zeta}-1)^2}$ and then uses the fact that $n_{\zeta}\in\mathbb{N}$ to obtain the strict inequality, \eqref{eq:sec_der_h_tot_order_line8} expands and collects common terms, \eqref{eq:sec_der_h_tot_order_line9} cancels common terms in the numerator and denominator, and finally \eqref{eq:sec_der_h_tot_order} uses the facts that $\tilde{p}_t$ is non-negative and that $n_{\zeta}\in\mathbb{N}$.

Hence, \eqref{eq:tot_order_cont} is strictly convex on $\big[1, \, f_c^{-1}(C_0-C)\big]$ and Corollary \ref{coroll_w_exp_PC} holds. Substituting \eqref{eq:tot_order_cont} into \eqref{eq:der_J_cont_tot} and following Corollary \ref{coroll_w_exp_PC}, we obtain \eqref{eq:sol_coroll_w_tot_order_alt}.

\end{proof}

Corollary \ref{coroll_w_tensor_prod} and \ref{coroll_w_tot_order} provide sufficient conditions for the existence of an optimal CVPC estimator at a given computational budget in cases where tensor-product expansion or total-order expansion is employed. We remark that care must be taken when analyzing the case with total-order expansion as $p=0$ must be examined separately and then compared with the results of Corollary \ref{coroll_w_tot_order}. More general expansion schemes that lie between the tensor-product expansion and total-order expansion can also be employed in CVPC. In this case, as long as the gPC online computational cost can be approximated by a function $f_c$ that satisfies the conditions given in Corollary \ref{coroll_w_exp_PC}, estimator optimality results similar to that of Corollary \ref{coroll_w_tensor_prod} and \ref{coroll_w_tot_order} can be obtained.

\subsubsection{Practical Implementation with Discrete Design Variables}
\label{sec:disc_design_prob}
Under the continuous relaxation, we have provided theoretical guarantees in terms of the sufficient conditions for optimality of a CVPC estimator, as well as the solutions to the optimal design parameters in certain scenarios. In this section, we provide sufficient conditions and the corresponding solutions to the original discrete problem for optimal CVPC estimator design.

\begin{thm}[Optimal CVPC Design] \label{Theorem 2} 

Let $p_0\geq 0$ be a non-negative integer. Let $f_c$ be a twice-differentiable, convex, and monotonically increasing function on $[p_0, \infty)$, and $f_{\rho}$ be a twice-differentiable, convex, and non-increasing function on $[p_0, \infty)$ that satisfies \eqref{eq:condition_thm1}. Then, for any computational budget that satisfies $C_0>C+f_c(p_0)$, and any integer $p\geq p_0$, the discrete optimization problem \eqref{eq:opt_over_p_only} has the solution:
\begin{equation} \label{eq:sol_thm2}
    p^* = 
    \left\{ 
    \begin{array}{ll}
    \argmin_{p\in\{ \lfloor \Tilde{p}^* \rfloor, \, \lceil \Tilde{p}^* \rceil \}} J(p) & \text{ if } J(\tilde{p}) \text{ is non-monotonic and } \lceil \Tilde{p}^* \rceil \leq \lfloor f_c^{-1}(C_0-C) \rfloor {,}  \\
    \lfloor f_c^{-1}(C_0-C) \rfloor & \text { if } J(\tilde{p}) \text{ is non-monotonic and } \lceil \Tilde{p}^* \rceil > \lfloor f_c^{-1}(C_0-C) \rfloor {,} \\
    \lfloor f_c^{-1}(C_0-C) \rfloor & \text { if } J(\tilde{p}) \text{ is monotonically decreasing}, \\
    p_0 & \text{ otherwise }
    \end{array}
    \right.
\end{equation}
where $J$ is the continuous relaxation of $J_{disc}$ as in \eqref{eq:cont_opt_over_p_only}, and $\tilde{p}^*$ is the solution to $\big(C_0-f_c(\Tilde{p}^*)\big)f_{\rho}^{'}(\Tilde{p}^*)+f_{\rho}(\Tilde{p}^*)f_c^{'}(\Tilde{p}^*) = 0$.
\end{thm}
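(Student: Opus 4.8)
The plan is to reduce the discrete problem \eqref{eq:opt_over_p_only} to the continuous relaxation \eqref{eq:cont_opt_over_p_only} analyzed in Theorem \ref{Theorem 1}, and then argue that rounding the continuous minimizer to an adjacent integer (subject to feasibility) yields the discrete minimizer. First I would invoke Theorem \ref{Theorem 1} verbatim: under the stated hypotheses on $f_c$ and $f_\rho$ together with \eqref{eq:condition_thm1}, the continuous objective $J$ is strictly convex on $\big[p_0, \, f_c^{-1}(C_0-C)\big]$. Note the only change from Theorem \ref{Theorem 1} is replacing the left endpoint $0$ by $p_0$; since the convexity argument in the proof of Theorem \ref{Theorem 1} never used the specific value of the left endpoint (it established $\partial^2 J/\partial\tilde p^2 > 0$ pointwise on all of $[0,\infty)$, or equivalently on $[p_0,\infty)$), this carries over directly. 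The feasible set for the discrete problem is $\{p\in\mathbb{Z} : p\geq p_0,\ NC+f_c(p)\leq C_0\} = \{p_0, p_0+1, \dots, \lfloor f_c^{-1}(C_0-C)\rfloor\}$, using that $f_c$ is monotonically increasing so $f_c(p)\leq C_0-C \iff p \leq f_c^{-1}(C_0-C)$; the assumption $C_0 > C + f_c(p_0)$ guarantees this set is nonempty.

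Next I would case-split exactly as in Theorem \ref{Theorem 1}, according to whether $J$ is non-monotonic, monotonically decreasing, or monotonically increasing on $\big[p_0, \, f_c^{-1}(C_0-C)\big]$. In the monotonically decreasing case, $J_{disc}$ is the restriction of $J$ to integers, so its minimum over the feasible integers is attained at the largest feasible integer, $\lfloor f_c^{-1}(C_0-C)\rfloor$. In the monotonically increasing case, the minimum is at the smallest feasible integer $p_0$. The interesting case is the non-monotonic one: here $J$ is strictly convex with an interior minimizer $\tilde p^*$ solving $G(\tilde p^*)=0$ where $G(\tilde p) = \big(C_0-f_c(\tilde p)\big)f_\rho'(\tilde p) + f_\rho(\tilde p)f_c'(\tilde p)$, as established in Theorem \ref{Theorem 1}. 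For a strictly convex function on an interval, restricted to the integer lattice, the minimizing integer is one of the two integers bracketing the continuous minimizer — i.e., $\lfloor \tilde p^*\rfloor$ or $\lceil \tilde p^*\rceil$ — provided both are feasible; this is a standard consequence of unimodality of convex functions (if $q < \lfloor\tilde p^*\rfloor$ then $J(q) \geq J(\lfloor\tilde p^*\rfloor)$ by monotonic decrease of $J$ to the left of $\tilde p^*$, and symmetrically on the right). If $\lceil\tilde p^*\rceil$ exceeds the feasibility bound $\lfloor f_c^{-1}(C_0-C)\rfloor$, then the entire feasible lattice lies to the left of $\tilde p^*$, where $J$ is decreasing, so the optimum is the rightmost feasible integer $\lfloor f_c^{-1}(C_0-C)\rfloor$; this produces the second branch of \eqref{eq:sol_thm2}.

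Assembling the four branches gives exactly \eqref{eq:sol_thm2}. I would also add the brief remark, already flagged in the text following Corollary \ref{coroll_w_tot_order}, that when Theorem \ref{Theorem 2} is applied with $p_0=1$ (as forced by the Stirling approximation in the total-order case), one must additionally compare $J_{disc}(0)$ against $J_{disc}(p^*)$ to recover the solution of the genuinely original problem over $\mathbb{N}_0$; this is a finite comparison and does not affect the structure of the argument. The main obstacle — really the only nontrivial point — is making the lattice-minimization claim for strictly convex functions fully rigorous, including careful handling of the boundary interactions between the bracketing integers $\lfloor\tilde p^*\rfloor, \lceil\tilde p^*\rceil$ and the feasibility cutoff $\lfloor f_c^{-1}(C_0-C)\rfloor$; everything else is a direct appeal to Theorem \ref{Theorem 1} plus the observation that the feasible set is a contiguous block of integers. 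I expect the proof to be short, citing Theorem \ref{Theorem 1} for convexity and the location of $\tilde p^*$, and then spending its effort on the rounding argument and the enumeration of cases.
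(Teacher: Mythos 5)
Your proposal is correct, and its overall architecture matches the paper's: invoke Theorem \ref{Theorem 1} for strict convexity of the continuous relaxation $J$ on the (shifted) domain, observe that the feasible integers form the contiguous block $\{p_0,\dots,\lfloor f_c^{-1}(C_0-C)\rfloor\}$, split on whether $J$ is non-monotonic, decreasing, or increasing, and in the non-monotonic case round the continuous minimizer $\tilde p^*$ to one of its two bracketing integers (falling back to the feasibility cutoff when $\lceil\tilde p^*\rceil$ exceeds it). The one genuine difference is the tool used to certify the rounding step. The paper routes this through discrete convex analysis: it shows $J_{disc}$ is \emph{convex-extensible} in the sense of Murota, invokes the theorem that for such functions the local condition $J_{disc}(p^*)\leq\min\{J_{disc}(p^*-1),\,J_{disc}(p^*+1)\}$ characterizes global minimizers, and then verifies that condition for $\lfloor\tilde p^*\rfloor$ or $\lceil\tilde p^*\rceil$ via a chain of inequalities derived from strict convexity. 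You instead argue directly from unimodality: $J$ is strictly decreasing to the left of $\tilde p^*$ and strictly increasing to the right, so every feasible integer $q\leq\lfloor\tilde p^*\rfloor$ satisfies $J(q)\geq J(\lfloor\tilde p^*\rfloor)$ and symmetrically on the right, which pins the lattice minimum to the bracketing pair without any external machinery. Your route is more elementary and, in my view, cleaner for a one-dimensional problem over a contiguous integer interval; the paper's framing via convex-extensibility buys a local-optimality certificate that would generalize to multidimensional discrete design variables, but here it adds overhead. Both arguments cover the same case enumeration, including the boundary case where the feasible lattice lies entirely to the left of $\tilde p^*$, and both defer the $p=0$ versus $p_0=1$ comparison for the total-order expansion to the separate check in \eqref{eq:sol_tot_order_discrete}, exactly as you note.
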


\begin{proof}
The proof is based on the necessary and sufficient condition \citep[Thm 2.2]{murota2009recent} for a point to be a global minimum of a \textit{convex-extensible} function \citep[Thm 2.1]{murota2009recent}. We then show that the global minimum can be found using the solution to the relaxed problem \eqref{eq:cont_opt_over_p_only}. 

To this end, we first shows that $J_{disc}$ is \textit{convex-extensible}. Due to the definition of the objective function $J$ in the continuous relaxed problem \eqref{eq:cont_opt_over_p_only}, we know that:
\begin{equation} \label{eq:cont_disc_J_equal}
    J(p) = J_{disc}(p), \qquad \forall p\in\{ p_0, \, p_0+1, \, p_0+2, \, \cdots \}
\end{equation}

From \eqref{eq:second_derivative_num2_thm1}, we know that $J$ is strictly convex on $\big[0, \, \infty]$, which implies:
\begin{align}
    & J(p-1) + J(p+1) \geq 2J(p), \qquad \forall p\in\{ p_0, \, p_0+1, \, p_0+2, \, \cdots \} \label{eq:J_convex_property} \\
    \Longrightarrow & J_{disc}(p-1) + J_{disc}(p+1) \geq 2J_{disc}(p), \qquad \forall p\in\{ p_0, \, p_0+1, \, p_0+2, \, \cdots \} \label{eq:Jdisc_convex_property}
\end{align}
where \eqref{eq:J_convex_property} is obtained based on \citep[Eq 3.1]{boyd2004convex} and \eqref{eq:Jdisc_convex_property} is obtained based on \eqref{eq:cont_disc_J_equal}. Therefore, according to \citep[Thm 2.1]{murota2009recent}, $J_{disc}$ is \textit{convex-extensible}.

Then, based on \citep[Thm 2.2]{murota2009recent}, a point $p\in\{ p_0, \, p_0+1, \, p_0+2, \, \cdots \}$ is a global minimizer if and only if:
\begin{equation} \label{eq:min_convex_exten}
    J_{disc}(p^*) \leq \min\big\{\ J_{disc}(p^*-1), \, J_{disc}(p^*+1) \big\}
\end{equation}
Next, we discuss the solution $p^*$ to \eqref{eq:opt_over_p_only} in four mutually exclusive and collectively exhaustive scenarios:
\begin{itemize}
    \setlength{\itemindent}{0cm}
    \item \textit{If $J\big(\tilde{p}\big)$ is non-monotonic and $\tilde{p}^*\in\{ p_0, \, p_0+1, \, p_0+2, \, \cdots \}$, where $\tilde{p}^*$ is the solution to the relaxed problem \eqref{eq:cont_opt_over_p_only}.}
    \begin{equation} \label{eq:min_convex_exten_case1}
        J_{disc}(\tilde{p}^*) \leq \min\big\{\ J_{disc}(\tilde{p}^*-1), \, J_{disc}(\tilde{p}^*+1) \big\} \qquad  \Longrightarrow \qquad p^*=\tilde{p}^*
    \end{equation}
    \item \textit{If $J\big(\tilde{p}\big)$ is non-monotonic, $\tilde{p}^*\notin\{ p_0, \, p_0+1, \, p_0+2, \, \cdots \}$, and $\lceil \Tilde{p}^* \rceil \leq \lfloor f_c^{-1}(C_0-C) \rfloor$, where $\tilde{p}^*$ is the solution to the relaxed problem \eqref{eq:cont_opt_over_p_only}.} 
    
    Here, we aim to find an integer or a pair of integers that satisfy the necessary and sufficient condition for global minimum in \citep[Thm 2.2]{murota2009recent}. To this end, we can obtain the following inequalities based on the facts that $\tilde{p}^*$ satisfies \eqref{eq:der_J_cont} and that $J(\tilde{p})$ is strictly convex:
    \begin{equation} \label{eq:J_ineq_floor}
        J\big(\lfloor \tilde{p}^* \rfloor\big) < J\big(\lfloor \tilde{p}^* \rfloor - \tilde{\epsilon}\big), \qquad \forall \tilde{\epsilon}\in\mathbb{R}^+
    \end{equation}
    \begin{equation} \label{eq:J_ineq_ceil}
        J\big(\lceil \tilde{p}^* \rceil\big) < J\big(\lceil \tilde{p}^* \rceil + \tilde{\epsilon}\big), \qquad \forall \tilde{\epsilon}\in\mathbb{R}^+
    \end{equation}
    Then, at all points in $\mathbb{N}_0$ the following are true due to \eqref{eq:cont_disc_J_equal}:
    \begin{equation} \label{eq:Jdisc_ineq_floor}
        J_{disc}\big(\lfloor \tilde{p}^* \rfloor\big) < J_{disc}\big(\lfloor \tilde{p}^* \rfloor - \epsilon\big), \qquad \forall \epsilon\in\mathbb{N}
    \end{equation}
    \begin{equation} \label{eq:Jdisc_ineq_ceil}
        J_{disc}\big(\lceil \tilde{p}^* \rceil\big) < J_{disc}\big(\lceil \tilde{p}^* \rceil + \epsilon\big), \qquad \forall \epsilon\in\mathbb{N}
    \end{equation}
    Then, we have:
    \begin{equation} \label{eq:min_less_than_min}
        min\Big\{ J_{disc}\big(\lfloor \tilde{p}^* \rfloor\big), \, J_{disc}\big(\lceil \tilde{p}^* \rceil\big) \Big\} < min \Big\{ J_{disc}\big(\lfloor \tilde{p}^* \rfloor - \epsilon\big), \, J_{disc}\big(\lceil \tilde{p}^* \rceil + \epsilon\big) \Big\}, \qquad \forall \epsilon\in\mathbb{N}
    \end{equation}
    Because $J\big(\tilde{p}\big)$ is strictly convex, the following is true:
    \begin{align} \label{eq:min_less_than_max}
       min\Big\{ J_{disc}\big(\lfloor \tilde{p}^* \rfloor\big), \, J_{disc}\big(\lceil \tilde{p}^* \rceil\big) \Big\} 
       &<  max\Big\{ J_{disc}\big(\lfloor \tilde{p}^* \rfloor\big), \, J_{disc}\big(\lceil \tilde{p}^* \rceil\big) \Big\} \nonumber \\
       & = max\Big\{ J_{disc}\big(\lceil \tilde{p}^* \rceil -1 \big), \, J_{disc}\big(\lfloor \tilde{p}^* \rfloor + 1\big) \Big\}
    \end{align}
    We now show that the necessary and sufficient condition in \citep[Thm 2.2]{murota2009recent} holds for the three possible outcomes of $min\Big\{ J_{disc}\big(\lfloor \tilde{p}^* \rfloor\big), \, J_{disc}\big(\lceil \tilde{p}^* \rceil\big) \Big\}$:
    \begin{itemize}
        \item If $J_{disc}\big(\lfloor \tilde{p}^* \rfloor\big) < J_{disc}\big(\lceil \tilde{p}^* \rceil\big)$, then \eqref{eq:min_less_than_min} becomes:
        \begin{equation} \label{eq:scenrio2_case1_min_min}
            J_{disc}\big(\lfloor \tilde{p}^* \rfloor\big) < min \Big\{ J_{disc}\big(\lfloor \tilde{p}^* \rfloor - \epsilon\big), \, J_{disc}\big(\lceil \tilde{p}^* \rceil + \epsilon\big) \Big\}, \qquad \forall \epsilon\in\mathbb{N}
        \end{equation}
        and \eqref{eq:min_less_than_max} becomes:
        \begin{equation} \label{eq:scenrio2_case1_min_max}
            J_{disc}\big(\lfloor \tilde{p}^* \rfloor\big) < J_{disc}(\lfloor \tilde{p}^* \rfloor + 1)
        \end{equation}
        From the above two inequalities, we have:
        \begin{align}
            J_{disc}\big(\lfloor \tilde{p}^* \rfloor\big) 
            &< min\bigg\{ min\Big\{ J_{disc}\big(\lfloor \tilde{p}^* \rfloor - \epsilon\big), \, J_{disc}\big(\lceil \tilde{p}^* \rceil + \epsilon\big) \Big\}, \nonumber \\
            & \qquad \qquad J_{disc}(\lfloor \tilde{p}^* \rfloor + 1) \bigg\}, \qquad \forall \epsilon\in\mathbb{N} \label{eq:Thm2_proof_case2_ineq_line1} \\
            &\leq min\Big\{ J_{disc}\big(\lfloor \tilde{p}^* \rfloor - \epsilon\big), \, J_{disc}(\lfloor \tilde{p}^* \rfloor + 1) \Big\}, \qquad \forall \epsilon\in\mathbb{N} \label{eq:Thm2_proof_case2_ineq_line2}
        \end{align}
        where \eqref{eq:Thm2_proof_case2_ineq_line2} results from $J_{disc}\big(\lfloor \tilde{p}^* \rfloor - \epsilon\big) \leq min\Big\{ J_{disc}\big(\lfloor \tilde{p}^* \rfloor - \epsilon\big), \, J_{disc}\big(\lceil \tilde{p}^* \rceil + \epsilon\big) \Big)$. Then, we have
        \begin{align}
            J_{disc}\big(\lfloor \tilde{p}^* \rfloor\big) 
            &< min\Big\{ J_{disc}\big(\lfloor \tilde{p}^* \rfloor - 1\big), \, J_{disc}(\lfloor \tilde{p}^* \rfloor + 1) \Big\} \label{eq:scenario2_case1_ineq_sol} \\
            \Longrightarrow p^*
            &= \lfloor \tilde{p}^* \rfloor \label{eq:scenrio2_case1_sol}
        \end{align}
        \item If $J_{disc}\big(\lfloor \tilde{p}^* \rfloor\big) > J_{disc}\big(\lceil \tilde{p}^* \rceil\big)$, then \eqref{eq:min_less_than_min} becomes:
        \begin{equation} \label{eq:scenrio2_case2_min_min}
            J_{disc}\big(\lceil \tilde{p}^* \rceil\big) < min \Big\{ J_{disc}\big(\lfloor \tilde{p}^* \rfloor - \epsilon\big), \, J_{disc}\big(\lceil \tilde{p}^* \rceil + \epsilon\big) \Big\}, \qquad \forall \epsilon\in\mathbb{N}
        \end{equation}
        and \eqref{eq:min_less_than_max} becomes:
        \begin{equation} \label{eq:scenrio2_case2_min_max}
            J_{disc}\big(\lceil \tilde{p}^* \rceil\big) < J_{disc}(\lceil \tilde{p}^* \rceil - 1)
        \end{equation}
        Then, from the above two inequalities, we have:
        \begin{align}
            J_{disc}\big(\lceil \tilde{p}^* \rceil\big)
            &< min\bigg\{ min\Big\{ J_{disc}\big(\lfloor \tilde{p}^* \rfloor - \epsilon\big), \, J_{disc}\big(\lceil \tilde{p}^* \rceil + \epsilon\big) \Big\}, \nonumber \\
            & \qquad \qquad J_{disc}(\lceil \tilde{p}^* \rceil - 1) \bigg\}, \qquad \forall \epsilon\in\mathbb{N} \label{eq:Thm2_proof_case2_2_ineq_line1} \\
            &\leq min\Big\{ J_{disc}\big(\lceil \tilde{p}^* \rceil + \epsilon\big), \, J_{disc}(\lceil \tilde{p}^* \rceil - 1) \Big\}, \qquad \forall \epsilon\in\mathbb{N} \label{eq:Thm2_proof_case2_2_ineq_line2}
        \end{align}
        where \eqref{eq:Thm2_proof_case2_2_ineq_line2} results from $J_{disc}\big(\lceil \tilde{p}^* \rceil + \epsilon\big) \leq min\Big\{ J_{disc}\big(\lfloor \tilde{p}^* \rfloor - \epsilon\big), \, J_{disc}\big(\lceil \tilde{p}^* \rceil + \epsilon\big) \Big)$. Then, we have
        \begin{align}
            \Longrightarrow J_{disc}\big(\lceil \tilde{p}^* \rceil \big) 
            &< min\Big\{ J_{disc}\big(\lceil \tilde{p}^* \rceil + 1\big), \, J_{disc}(\lceil \tilde{p}^* \rceil - 1) \Big\} \label{eq:scenrio2_case2_ineq_sol} \\
            \Longrightarrow p^*
            &= \lceil \tilde{p}^* \rceil \label{eq:scenrio2_case2_sol}
        \end{align}
        \item If $J_{disc}\big(\lfloor \tilde{p}^* \rfloor\big) = J_{disc}\big(\lceil \tilde{p}^* \rceil\big)$, \eqref{eq:scenrio2_case1_sol} and \eqref{eq:scenrio2_case2_sol} hold simultaneously. Therefore:
        \begin{equation} \label{eq:scenrio2_case3_sol}
            p^* = \lfloor \tilde{p}^* \rfloor = \lceil \tilde{p}^* \rceil
        \end{equation}
    \end{itemize}
    Hence, combining the above three possible cases, we conclude that the optimal polynomial degree is $p^* = \argmin_{p\in\{ \lfloor \Tilde{p}^* \rfloor, \lceil \Tilde{p}^* \rceil \}} J(p)$.
    
    \item \textit{If $\tilde{p}^*\notin\{ p_0, \, p_0+1, \, p_0+2, \, \cdots \}$, $J\big(\tilde{p}\big)$ is non-monotonic, and $\lceil \Tilde{p}^* \rceil > \lfloor f_c^{-1}(C_0-C) \rfloor$, where $\tilde{p}^*$ is the solution to the relaxed problem \eqref{eq:cont_opt_over_p_only}.}
    
    Following the proof of the previous scenario \eqref{eq:scenario2_case1_ineq_sol} and \eqref{eq:scenrio2_case2_ineq_sol} hold. Then,
    \begin{equation} \label{eq:scenario3_case1_sol}
        \tilde{p}^* = \lfloor \tilde{p}^* \rfloor
    \end{equation}
    due to the cost constraint $\lceil p^* \rceil > \lfloor f_c^{-1}(C_0-C) \rfloor$.
    
    \item \textit{If $J(\tilde{p})$ is monotonically decreasing}.
    
    Due to \eqref{eq:second_derivative_num2_thm1}, the gradient of $J(\tilde{p})$ is negative. Therefore,
    \begin{equation} \label{eq:scenario3_sol}
        p^* = \lfloor f_c^{-1}(C_0-C) \rfloor
    \end{equation}

    \item \textit{Otherwise, $J(\tilde{p})$ is monotonically increasing.}
    
    Due to \eqref{eq:second_derivative_num2_thm1}, the gradient of $J(\tilde{p})$ is positive. Therefore,
    \begin{equation} \label{eq:scenario4_sol}
        p^* = p_0
    \end{equation}
\end{itemize}

\end{proof}

The general result in Theorem \ref{Theorem 2} can be tailored to obtain the optimal polynomial degree, and hence the optimal CVPC estimator design. For example, in the case of tensor product expansion, Corollary \ref{coroll_w_exp_PC} and Theorem \ref{Theorem 2} with $p_0 = 0$ can be used together to obtain the optimal design for CVPC. In the case of total-order expansion, Corollary \ref{coroll_w_tot_order} and Theorem \ref{Theorem 2} with $p_0 = 1$ can be used together to first obtain the optimal polynomial degree $p^*_{p\geq 1}$ for $p\geq 1$. Then, we can obtain the optimal polynomial degree by comparing this solution to the case when $p=0$ as follows:
\begin{equation} \label{eq:sol_tot_order_discrete}
    p^* = \argmin_{p\in\{ 0, \, p^*_{p\geq 1} \}} J(p)
\end{equation}

To summarize, the constants $\{k_1, \, k_2 \}$ dictate how quickly the computational cost of gPC increases and the constants $\{k_3, \, k_4 \}$ dictate how fast the gPC-based low-fidelity model converges to the high-fidelity model as the gPC polynomial degree increases. In practice, $\{k_1, \, k_2, \, k_3, \, k_4 \}$ need to be either known theoretically or estimated through pilot simulations. An overview of the algorithm for optimal CVPC estimator design that outputs the $p^*$ and $N^*$ is presented in Algorithm~\ref{alg:OptDesignCVPC}.

Finally we note a condition when CVPC may have higher MSE than the biased gPC. Because CVPC utilizes MC to estimate the high-fidelity model and CME, the estimator variance of CVPC is impacted by the sample size used in its components. Therefore, it is possible for the optimal CVPC estimator given by Algorithm \ref{alg:OptDesignCVPC} to have an estimator variance that is higher than the square of the bias of a standard gPC. Nonetheless, CVPC can deliver significant computational efficiency improvement for a large range of applications, especially in cases where unbiased estimates are required. 

\begin{algorithm}
\caption{Optimal Design for Control Variate Polynomial Chaos}\label{alg:OptDesignCVPC}
\begin{algorithmic}[1]
\Require $Q$: High-fidelity model; $Q^{PC}$: Low-fidelity model by gPC with a low polynomial degree; $p_{\zeta}$: PDF of the random variables; $\Psi(\zeta, \, p)$: pre-computed inner products of orthogonal polynomials; $\Phi(\zeta, \, p)$: the orthogonal polynomials selected according to the random variable; $C_0$: computational budget; $n_{\zeta}$: dimension of the random variable; $p_{pilot}$: highest gPC polynomial degree in the pilot experiment;  $N$: sample size for MC estimators.
\State Draw $N$ samples $\{\zeta^{(1)}, \, \cdots, \, \zeta^{(N)}\}$ from $p_{\zeta}$ for each of the random variables.
\State $\hat{Q}^{MC}(\zeta, \, N) \gets \frac{1}{N}\sum^N_{i=1}Q\big(\zeta^{(i)}\big)$ \Comment{Pilot sampling for the high-fidelity estimator}
\State $C \gets$ average cost of a single evaluation of $Q\big(\zeta^{(i)}\big)$ 
\LineComment{Get the cost of a single evaluation of the high-fidelity model}
\State $\mathbb{V}\text{ar}\big[ Q(\zeta, \, N) \big] \gets \frac{1}{N-1}\sum^{N}_{i=1}\Big( Q \big( \zeta^{(i)} \big) - \hat{Q}^{MC}(\zeta, \, N) \Big)^2$ 
\LineComment{Compute variance of the random variable predicted by the high-fidelity estimator}
\For{$i = 0, \, \cdots, \, p_{pilot}$} \Comment{Iterate through gPC polynomial degrees}
  \If {The total-order expansion}
    \State $M_i \gets \frac{(i+n_{\zeta})!}{n_{\zeta}!i!}$ \Comment{Calculate the number of expansion terms}
    \State $f_c(i) \gets k_3(n_{\zeta}!)^{-k_4}e^{-k_4n_{\zeta}}i^{-k_4(i+\frac{1}{2})}(i+n_{\zeta})^{k_4(i+n_{\zeta}+\frac{1}{2})}$ 
    \LineComment{Calculate the cost of computing gPC coefficients projected by $f_c$}
  \Else
    \If{The tensor-product expansion}
      \State $M_i \gets (i+1)^{n_{\zeta}}$ 
      \State $f_c(i) \gets k_3\big( (i+1)^{n_{\zeta}} \big)^{k_4}$ 
    \EndIf
  \EndIf
  \State Apply stochastic Galerkin projection \eqref{eq:PCE_Galerkin} with the pre-computed $\Psi(\zeta, \, i)$ to construct the low-fidelity model $Q^{PC}(\zeta, \, i)$ describing the deterministic dynamics of the gPC coefficients at degree $p$ with $M$ terms. \Comment{Intrusive gPC}
  \State $\{\hat{x}_0(\zeta, \, i), \, \cdots, \, \hat{x}_{M-1}(\zeta, \, i)\} \gets Q^{PC}(\zeta, \, i)$ \Comment{Compute gPC coefficients}
  \State $C^{PC}_i \gets$ cost of gPC at degree $i$ \Comment{Get the actual cost of computing gPC coefficients}
  \For{$j = 1, \, \cdots, \, N$} \Comment{Iterate through the samples}
    \State $\hat{x}^{MC\mh PC}\big(\zeta^{(i)}, \, i\big) \gets \sum^{M-1}_{j=0}\hat{x}_j(\zeta, \, i)\Phi_j\big(\zeta^{(i)}, \, i\big)$ \Comment{Sample the polynomial bases}
  \EndFor
  \State $\hat{Q}^{MC\mh PC}(\zeta, \, i, \, N) \gets \frac{1}{N}\sum^N_{i=1}\hat{x}^{MC\mh PC}\big(\zeta^{(i)}, \, i\big)$ \Comment{Obtain CME}
  \State $\mathbb{V}\text{ar}\big[ Q^{MC\mh PC}(\zeta, \, i, \, N) \big] \gets \frac{1}{N-1}\sum^{N}_{j=1}\Big( Q^{PC}\big(\zeta^{(j)}, \, i\big) - \hat{Q}^{MC\mh PC}(\zeta, \, i, \, N) \Big)^2$
  \LineComment{Compute variance of the random variable predicted by the low-fidelity estimator}
  \State {\footnotesize $\mathbb{C}\text{ov}\big[ Q^{MC}(\zeta, \, N), \, Q^{MC\mh PC}(\zeta, \, i, \, N) \big] \gets \frac{1}{N-1}\sum^{N}_{j=1}\bigg( Q\big( \zeta^{(i)} \big) - \hat{Q}^{MC}(\zeta, \, N) \bigg)\bigg( Q^{PC}\big(\zeta^{(j)}, \, i\big) - \, \hat{Q}^{MC\mh PC}(\zeta, \, i, \, N) \bigg)$}
  \LineComment{Compute covariance between the random variables predicted by the high- and low-fidelity estimators}
  \State {\footnotesize $\rho_i \gets \mathbb{C}ov\big[ Q^{MC}(\zeta, \, N), \, Q^{MC\mh PC}(\zeta, \, i, \, N) \big]\Big(\mathbb{V}ar\big[ Q^{MC}(\zeta, \, N) \big]\mathbb{V}ar\big[ Q^{MC\mh PC}(\zeta, \, i, \, N) \big]\Big)^{-\frac{1}{2}}$}
  \LineComment{Compute the Pearson correlation coefficient between the random variables $Q$ and $Q^{PC}$}
\EndFor
\State $\{k_1, \, k_2 \} \gets \argmin_{\{k_1\in\mathbb{R}^+, \, k_2\in\mathbb{R}^+\}}\sum^{p_{pilot}}_{i=1}\big((1-\rho_i^2)-k_1e^{k_2}\big)^2$
\LineComment{Use regression to find $k_1$ and $k_2$ that minimize the difference between the Pearson correlation coefficients predicted by \eqref{eq:f_rho_approx} and computed from pilot sampling}
\State $\{k_3, \, k_4\} \gets \argmin_{\{k_3\in\mathbb{R}^+, \, k_4\in\mathbb{R}^+\}}\sum^{p_{pilot}}_{i=1}\big(C^{PC}_i-f_{c, \, i}(k_3, \, k_4)\big)^2$
\LineComment{Use regression to find $k_3$ and $k_4$ that minimize the difference between the cost of computing gPC coefficients predicted by $f_c$ and recorded in pilot sampling}
\algstore{myalg}
\end{algorithmic}
\end{algorithm}

\begin{algorithm}                  
\begin{algorithmic} [1]                   
\algrestore{myalg}
\If{Tensor product expansion is employed in gPC}
  \State $\Tilde{p}^* \gets$ find solution to the continuous problem using Corollary \ref{coroll_w_tensor_prod}
  \State $p^* \gets$ find the solution to the original design problem using Theorem~\ref{Theorem 2} with $p_0 = 0$.
  \Else
  \If{Total-order expansion is employed in gPC}
    \State $\Tilde{p}_t^* \gets$ find solution to the continuous problem using Corollary \ref{coroll_w_tot_order}
    \State $p^*_{p\geq 1} \gets$ find the solution to the discrete problem for $p\geq 1$ using Theorem~\ref{Theorem 2} with $p_0 = 1$.
    \State $p^* \gets$ use \eqref{eq:sol_tot_order_discrete} to find the solution to the original design problem 
  \EndIf
\EndIf
\State $N^* \gets \frac{C_0-f_c(p^*)}{C}$ 

\Ensure $p^*$: optimal gPC polynomial degree; $N^*$: optimal sample size
\end{algorithmic}
\end{algorithm} 

\section{Application Examples}
\label{sec:app_examples}

In this section, we implement CVPC in four numerical examples to improve the computational efficiency of UQ. In each of the examples, we seek to design the optimal CVPC estimator that minimizes the estimator variance at the given computational budget. Specifically, we seek to minimize the estimator variance for the mean and variance estimations with respect to the QoI in each example by balancing the computational resources allocated to MC and gPC.

Throughout, we use pilot sampling to calculate the optimal CV weight for mean and variance estimations. Specifically, for mean estimation, we directly apply \eqref{eq:optimal_CV_weight} to calculate the optimal CV weight. However, this CV weight calculated for optimal mean estimation is not optimal for variance estimation. Assuming that the mean of the QoI is fixed/known (e.g., from pilot samples), the CV weight for estimating the variance can be calculated based on \eqref{eq:optimal_CV_weight}, the rule of variances, and the rule of covariances\footnote{This expression is an approximation to the optimal weight because the variability of the mean is not considered. We note that control variate estimators can actually use any choice of CV weight with varying degrees of effectiveness. Our results are validated by repeated samples of the estimators themselves, and as a result we do not overestimate the performance benefits when these results are presented.}:
\begin{align}
    \alpha^*_{\mathbb{V}\text{ar}} &\approx
    \frac{\mathbb{C}\text{ov}\big[Q^2, \, (Q^{PC})^2\big] - 2\mu\mathbb{C}\text{ov}\big[Q, \, (Q^{PC})^2\big] + 2\mu^{PC}\mathbb{C}\text{ov}\big[Q^2, \, Q^{PC}\big]}{\mathbb{V}\text{ar}\big[(Q^{PC})^2\big] + 4(\mu^{PC})^2\mathbb{V}\text{ar}\big[Q^{PC}\big] - 4\mu^{PC}\mathbb{C}\text{ov}\big[(Q^{PC})^2, \, Q^{PC}\big]} \nonumber \\
    & \qquad - \frac{4\mu\mu^{PC}\mathbb{C}\text{ov}\big[Q, \, Q^{PC}\big]}{\mathbb{V}\text{ar}\big[(Q^{PC})^2\big] + 4(\mu^{PC})^2\mathbb{V}\text{ar}\big[Q^{PC}\big] - 4\mu^{PC}\mathbb{C}\text{ov}\big[(Q^{PC})^2, \, Q^{PC}\big]} \label{eq:opt_CV_weight_for_var}
\end{align}
where each of the component can be estimated from pilot sampling. The derivation of \eqref{eq:opt_CV_weight_for_var} is given in Appendix \ref{app:derivation_opt_CV_weight_var}. In practice, the exact mean and variance of the QoI may be unavailable. However, accurate estimations of the mean and variance can be obtained through pilot sampling. In this work, we obtain accurate estimates of the mean and  variance of the QoI from the reference solution given by the one-million-sample MC estimator.

The estimation accuracy of the optimal CVPC estimator is then compared to that of the standard MC and gPC estimators under the same computational budget. The results demonstrate the significant computational efficiency improvement, often in the orders of magnitude, that can be obtained by the optimal CVPC estimators over conventional MC or gPC estimators. Here, the computational efficiency is quantitatively measured by the RMSE of the estimates under a certain computational budget. In other words, at a fixed budget, the lower the RMSE of the estimates, the higher the computational efficiency of the estimator. The RMSE values are calculated based on reference solutions obtained using MC estimators with a sample size of one million.

In the first and second examples, we demonstrate that the optimal CVPC can deliver significant reductions of RMSE in the mean and variance estimates of an integral QoI in the classic Lorenz system with two fixed-point attractors and with chaotic dynamics. In the third and fourth examples, we demonstrate that the optimal CVPC can deliver significant reductions of RMSE in the mean and variance estimates of axle shaft torque behaviors in a gasoline-powered automotive propulsion system and in a hybrid-electric automotive propulsion system. For all four examples, we adopt the total-order expansion in \eqref{eq:total_order} for gPC.

\subsection{Lorenz System}
\label{sec:Lorenz}
In this section, we aim to accurately estimate the mean and variance of an integral QoI in the classic Lorenz system \cite{lorenz1963deterministic} proposed in 1963. Motivated by meteorological applications, the 3-state system has the following form:
\begin{align}
    \dot{x} &= \theta_1(x-y) \nonumber \\
    \dot{y} &= \theta_2x - y - xz \label{Lorenz_sys_dyn}\\
    \dot{z} &=  xy - \theta_3z \nonumber
\end{align}

The Lorenz system has been studied extensively in the literature as it demonstrates rich nonlinear dynamics despite its simple form. For certain parameters, the system exhibits chaotic dynamics, meaning that even very small perturbations in the initial condition would quickly lead to drastically different system trajectories. The use of gPC for UQ in such system has been shown to be successful in cases with stable equilibria but problematic in cases with chaotic dynamics due to the divergence in polynomial approximation \cite{sandu2006modeling}. In this work, we consider the application of CVPC to the Lorenz system with two different sets of parameters. With the first set of parameters, the Lorenz system possesses two stable fixed-point attractors. With the second set of parameters, the system exhibits chaotic dynamics such that trajectories fall onto a strange attractor - the \textit{Lorenz Attractor}. For both system configurations, we consider initial condition uncertainties in all three states. In this work, we consider a QoI that is an time-normalized integral of a function of the three states:
\begin{equation} \label{eq:J_Lorenz}
    Q = \frac{\int_0^{t_f}x^2 + y^2 + z^2 dt}{t}
\end{equation}

\subsubsection{Lorenz System with Fixed-Point Attractors}
\label{sec:Lorenz_stable}
We first consider the Lorenz system with the following parameters that yield a pair of fixed-point attractors:
\begin{equation} \label{eq:params_fixed_pt_attractors}
    \theta_1 = 1, \qquad \theta_2 = 10, \qquad \theta_3 = 1
\end{equation}
To account for uncertainties in the initial conditions, we model the initial condition of each of the states as a Gaussian random variable:
\begin{equation} \label{eq:init_uncertainty_fixed_pt_attractors}
    x_0\sim\mathcal{N}(\mu_{x_0}, \, \sigma_{x_0}), \qquad y_0\sim\mathcal{N}(\mu_{y_0}, \, \sigma_{y_0}), \qquad z_0\sim\mathcal{N}(\mu_{z_0}, \, \sigma_{z_0})
\end{equation}
where $\mu_{x_0} = \mu_{y_0} = 0.5$, $\mu_{z_0} = 15$, and $\sigma_{x_0} = \sigma_{y_0} = \sigma_{z_0} = 0.5$.

To illustrate the general trend and spread of the trajectories due to the aforementioned initial condition uncertainties, we simulate the system behavior for $5$ time units. The results are shown in Figure \ref{f:Lorenz_fixed_pt_traj} (a-b), where the deterministic solution with initial conditions $[\mu_{x_0}, \, \mu_{y_0}, \, \mu_{z_0}]^T$ is plotted using solid orange curves. A Pitchfork bifurcation occurs at $\theta_2 = 1$. For $\theta_2 > 1$, two additional equilibrium points are created, which can be clearly observed in Figure \ref{f:Lorenz_fixed_pt_traj} (a) as well as in the $x$ and $y$ trajectories in Figure \ref{f:Lorenz_fixed_pt_traj} (b). If the exact initial conditions are known such that $x_0 = \mu_{x_0}$, $y_0 = \mu_{y_0}$, and $z_0 = \mu_{z_0}$, then the trajectory would fall into the basin of the fixed-point attactor at $(3,\, 3, \, 9)$. With initial condition uncertainties, the majority of the trajectories converge to the fixed-point attractor at $(3, \, 3, \, 9)$, while the rest converge to the other stable equilibrium point at $(-3, \, -3, \, 9)$. 

\begin{figure}
    \centering
    \subfloat[\centering]{{\includegraphics[width=7.5cm]{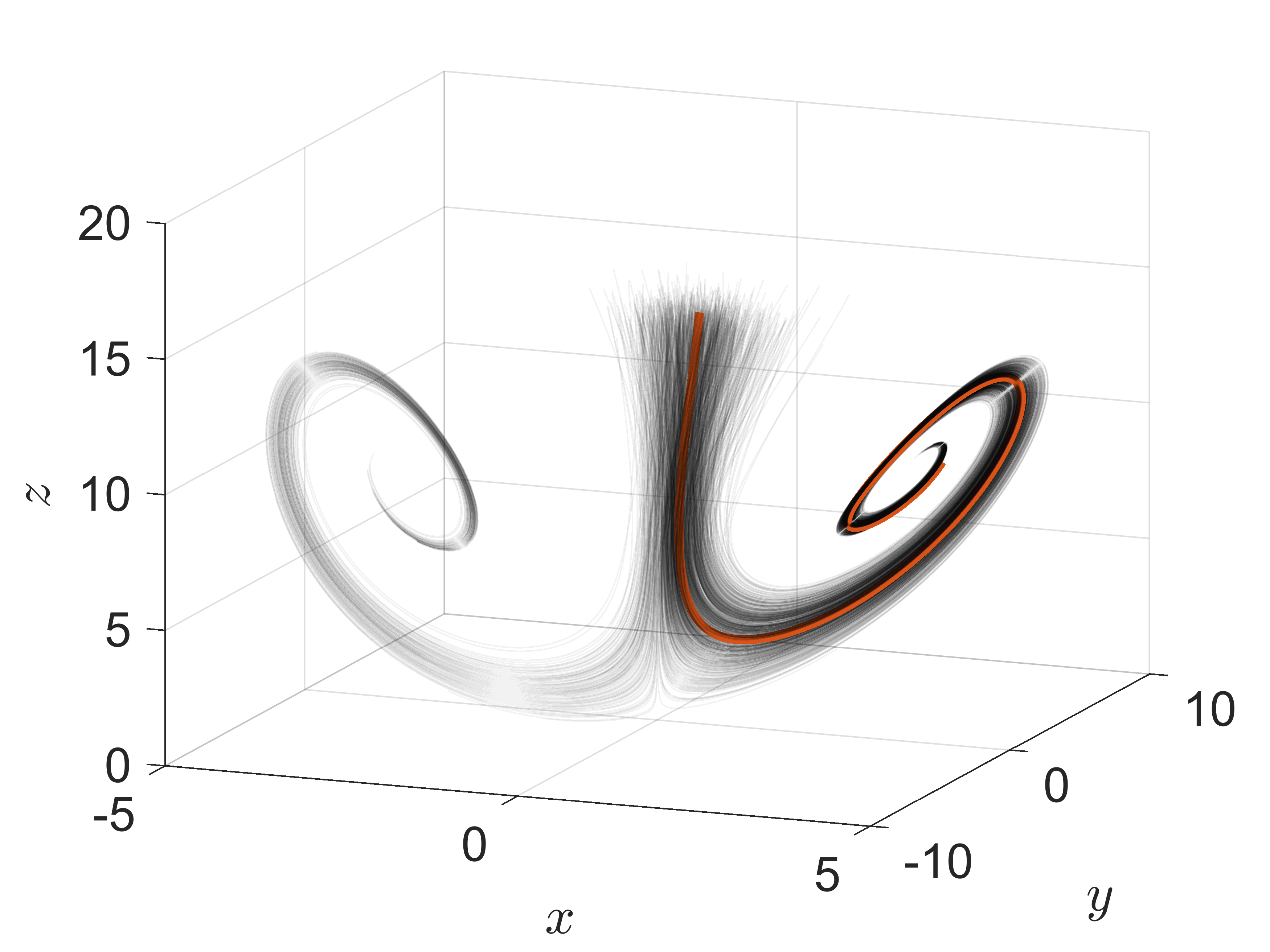}}}
    \subfloat[\centering]{{\includegraphics[width=7.5cm]{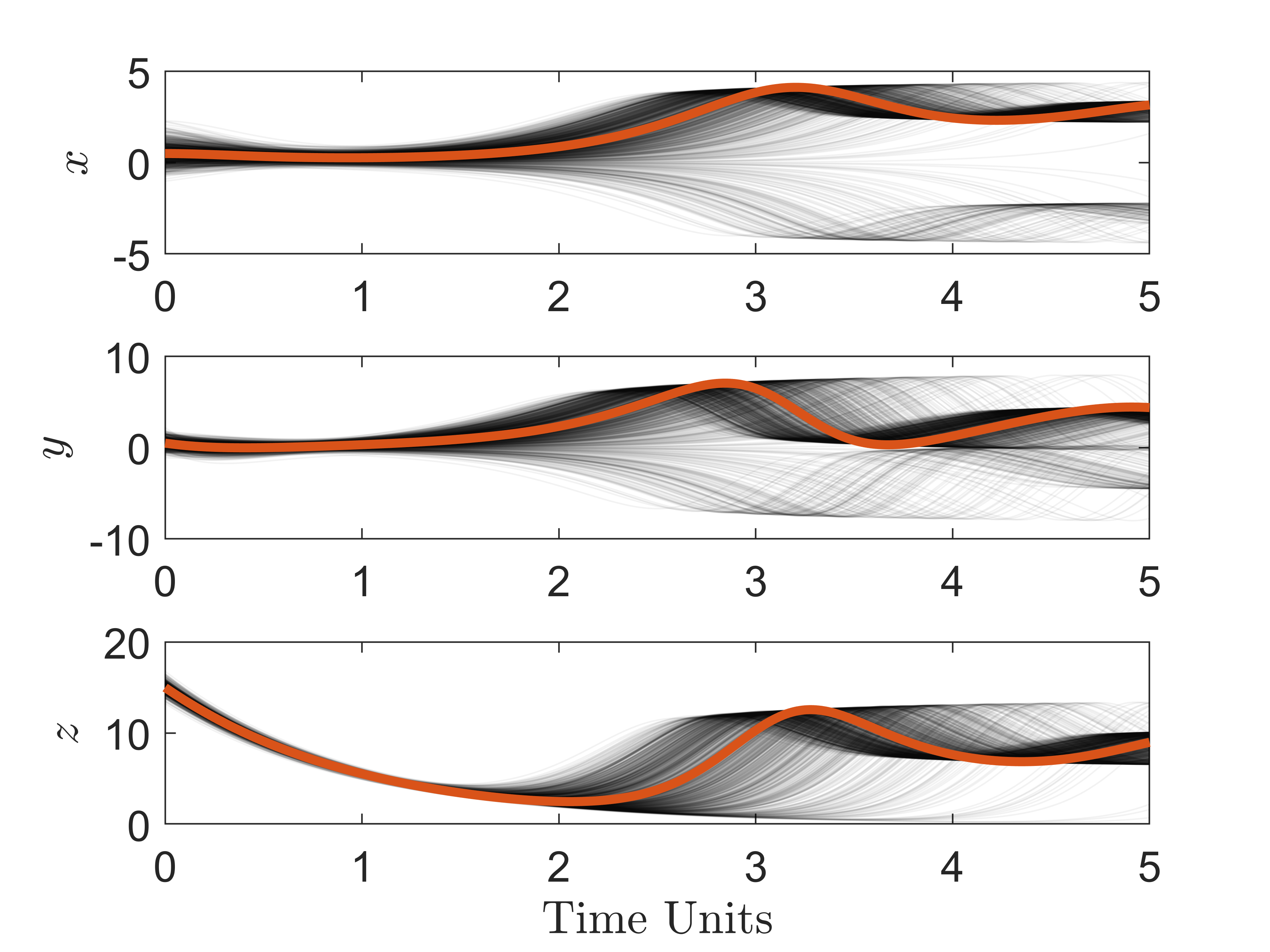}}}
    \caption{Trajectories of the Lorenz system with a pair of fixed-point attractors under initial condition uncertainties. If the initial conditions are known to be exactly equal to their nominal values, the trajectory follows the orange curve. For realizations of the stochastic system, the trajectory density is indicated through color darkness. (a) 1000 realizations of the system trajectory in the three-dimensional space. Majority of the realizations converge to the attractor at $(3, 3, 9)$.  (b) 1000 realizations of the system trajectory in $x$, $y$, and $z$ dimensions. Majority of the realizations converge to $x = 3$, $y = 3$, and $z = 9$.}
    \label{f:Lorenz_fixed_pt_traj}
\end{figure}

We implement the CVPC estimator design procedure described in Algorithm \ref{alg:OptDesignCVPC} to find the optimal CVPC design for estimating the mean of $Q$ at $t=3$ time units under a given computational budget. Solutions to this optimal estimator design problem with respect to a range of computational budgets are shown in Figure \ref{f:opt_design_Lorenz} (a), where the proposed algorithm gives a pair of values --- a gPC polynomial degree for the low-fidelity model and a MC sample size for the high-fidelity model --- that yields the minimal estimator variance for CVPC. The maximum possible gPC polynomial degree is also plotted in \ref{f:opt_design_Lorenz} (a) as a reference to show how Algorithm \ref{alg:OptDesignCVPC} balances the utilization of gPC and MC. As a result, the corresponding minimal normalized estimator variance decreases rapidly with increasing computational budget as shown in Figure \ref{f:opt_design_Lorenz} (b). The saw-shaped curve is the result of the fact that a large enough amount of MC samples must be added to the optimal CVPC estimator before additional gPC polynomial degree can be incorporated. 

\begin{figure}
    \centering
    \subfloat[\centering]{{\includegraphics[width=7.5cm]{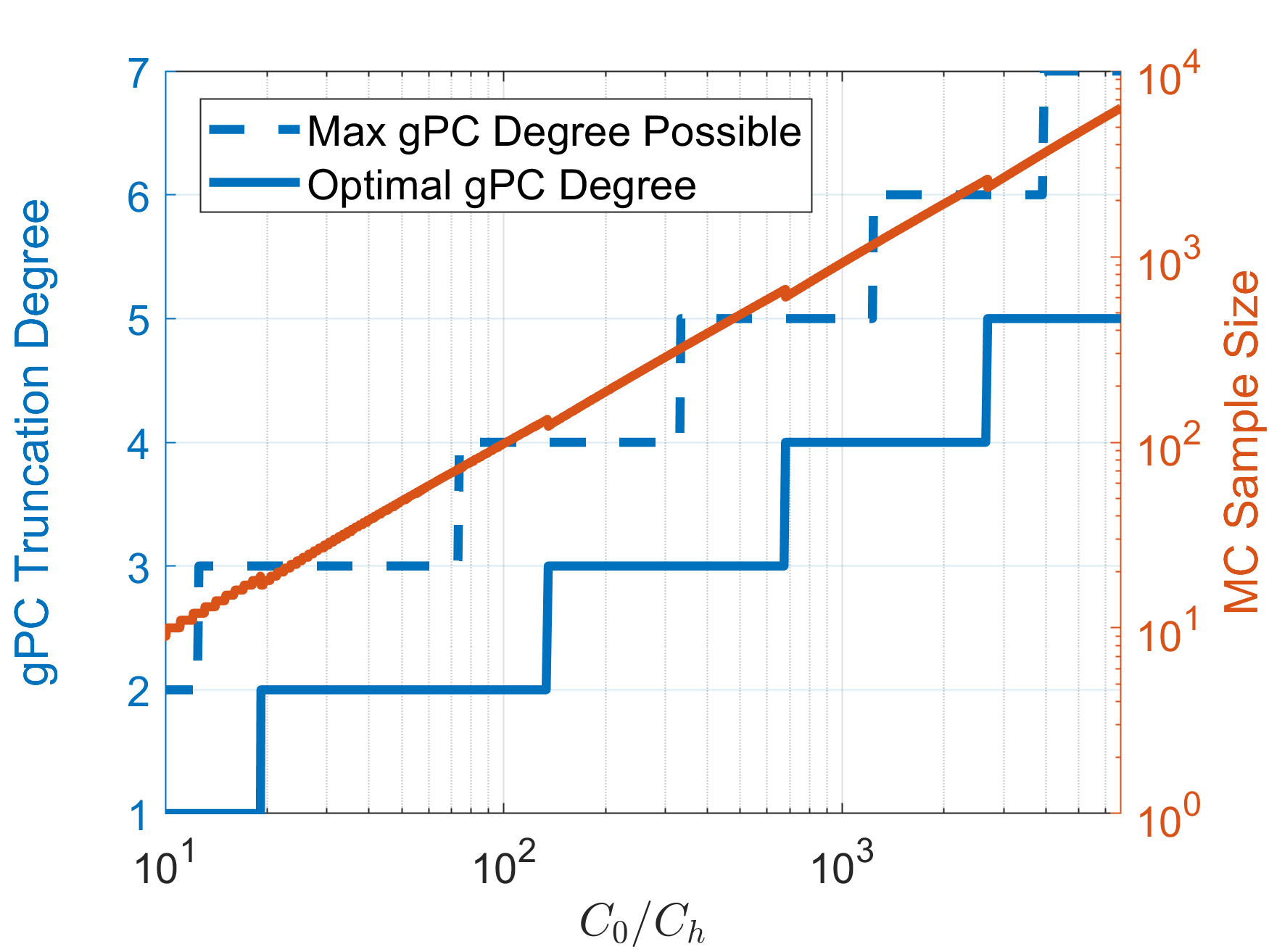}}}
    \qquad
    \subfloat[\centering]{{\includegraphics[width=7.5cm]{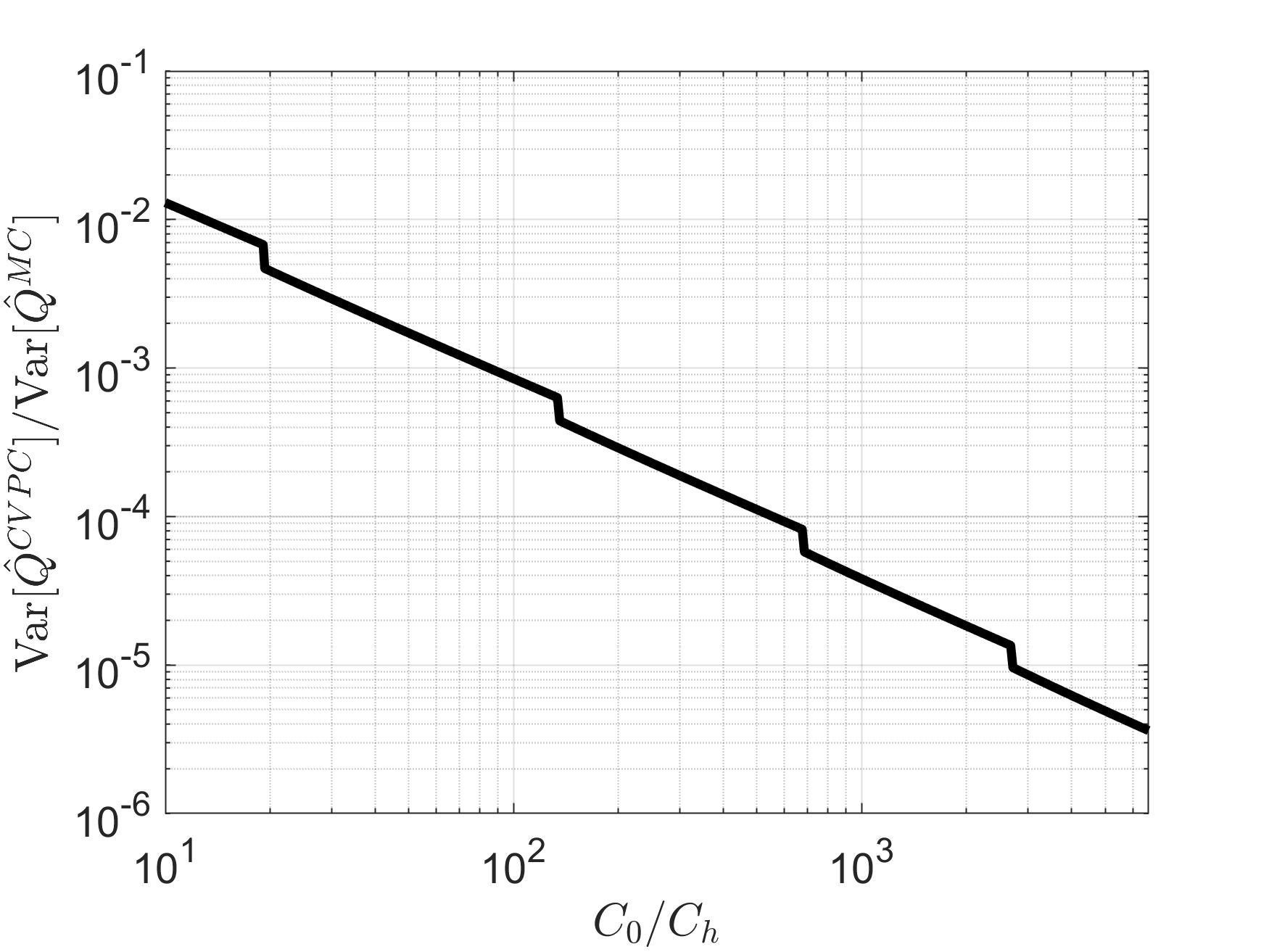}}}
    \caption{(a) The optimal configurations of CVPC at various computational budgets. The computational budgets are expressed as the ratio of the budget to the cost of evaluating a single realization of the high-fidelity model. The proposed algorithm balances the utilization of gPC and MC such that the estimator variance of the resulting CVPC is minimized. (b) The minimal normalized estimator variance of CVPC at various computational budgets.}
    \label{f:opt_design_Lorenz}
\end{figure}

In this example, we impose a computational budget that is approximately $700$ times the cost of evaluating a single realization of the system through the ODE model in \eqref{Lorenz_sys_dyn}. From the results in Figure \ref{f:opt_design_Lorenz}, we obtain the optimal estimator design with a degree-3 gPC and a MC sample size of $688$ for the mean estimation of $Q$ at $t=3$ time units. Next, we use pilot sampling to calculate the optimal CV weight for mean and variance estimations using \eqref{eq:optimal_CV_weight} and \eqref{eq:opt_CV_weight_for_var}, respectively. 

The UQ results of the optimal CVPC estimator are compared to that of a MC estimator and a gPC estimator under the same computational budget. A one-million-sample MC is used to obtained reference solutions. The optimal CV weights for mean and variance estimations are shown in Figure \ref{f:Lorenz_fixed_pt_CV_weights}. Both optimal CV weights stay close to $-1$ for the first two time units, indicating high correlations between the high- and low-fidelity components of the CVPC estimator. This correlation drops after $t=2$ time units, driving the optimal CV weights towards zero. The correlation degradation is more drastic in the variance estimator, which is largely due to the fact that variance is a higher order moment than the mean. For a given CVPC estimator, the main cause for correlation degradation is the divergence of gPC solutions. gPC is known to suffer from long time integration \cite{wan2006long}, which is amplified by the complex nonlinear dynamics of the Lorenz. To illustrate this effect, we perform $10^4$ simulations and calculate the average mean and variance estimates from each of three estimators under the same computational budget. As shown in Figure \ref{f:Lorenz_fixed_pt_mean_var} (a-b), gPC estimates start to diverge from the reference after $t=3$ time units. The divergence in the mean estimates become significant after $t=3$ time units. The variance estimated by gPC essentially becomes unusable after $t=3.5$ time units.

\begin{figure}[t!]
\centering
\includegraphics[width=7.5cm]{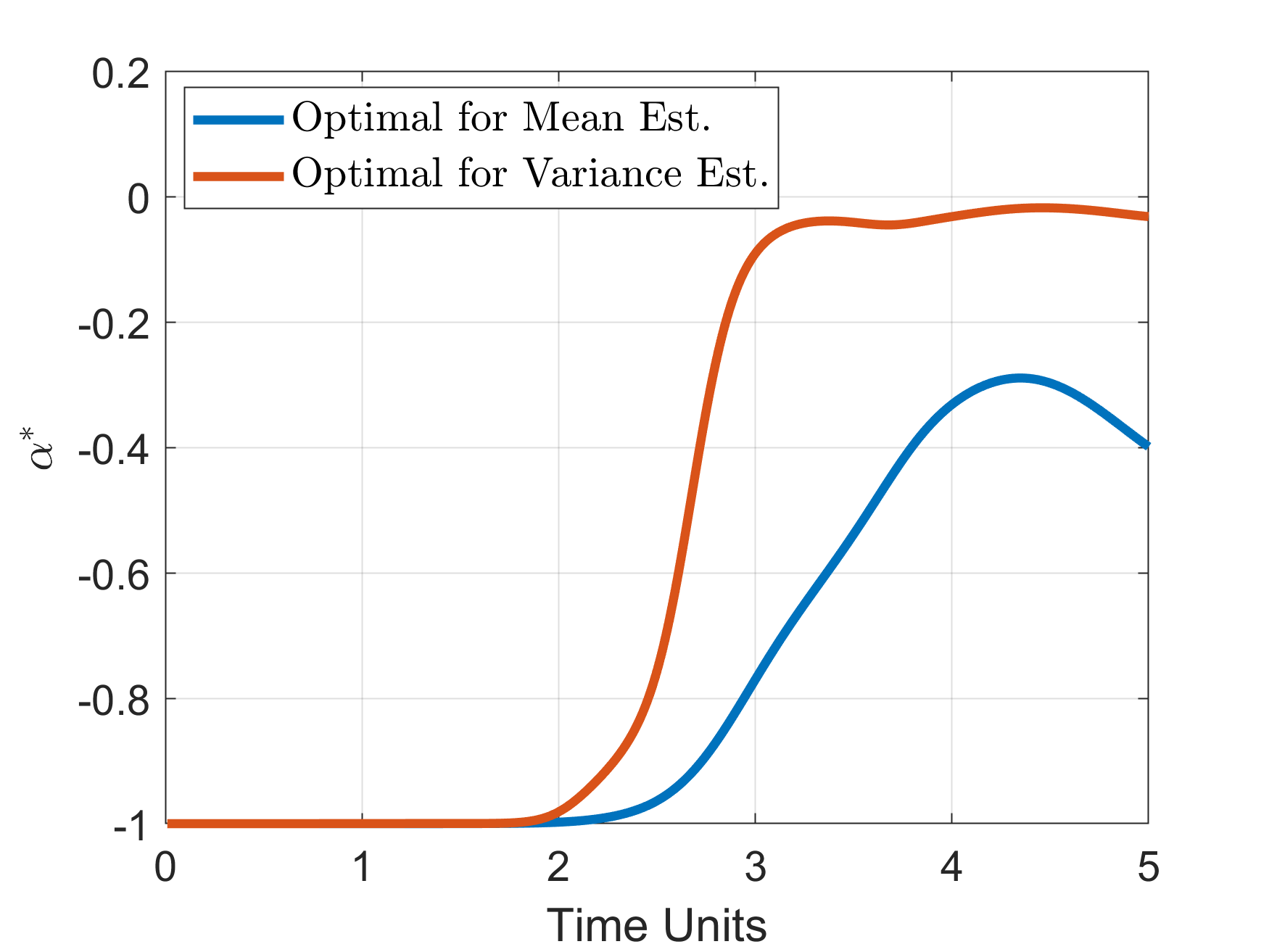}
\caption{The optimal CV weights for mean and variance estimations for the Lorenz system example with a pair of fixed-point attractors. Both CV weights stay close to $-1$ until $t=2$ time units, indicating high correlations between the low- and high-fidelity components of the CVPC estimator. The correlations drop significantly towards the end of the simulation, driving the optimal CV weights to zero. The correlation degradation is more severe in the variance estimation.}
\label{f:Lorenz_fixed_pt_CV_weights}
\end{figure}

\begin{figure}[!thp]
    \centering
    \subfloat[\centering]{{\includegraphics[width=7.8cm]{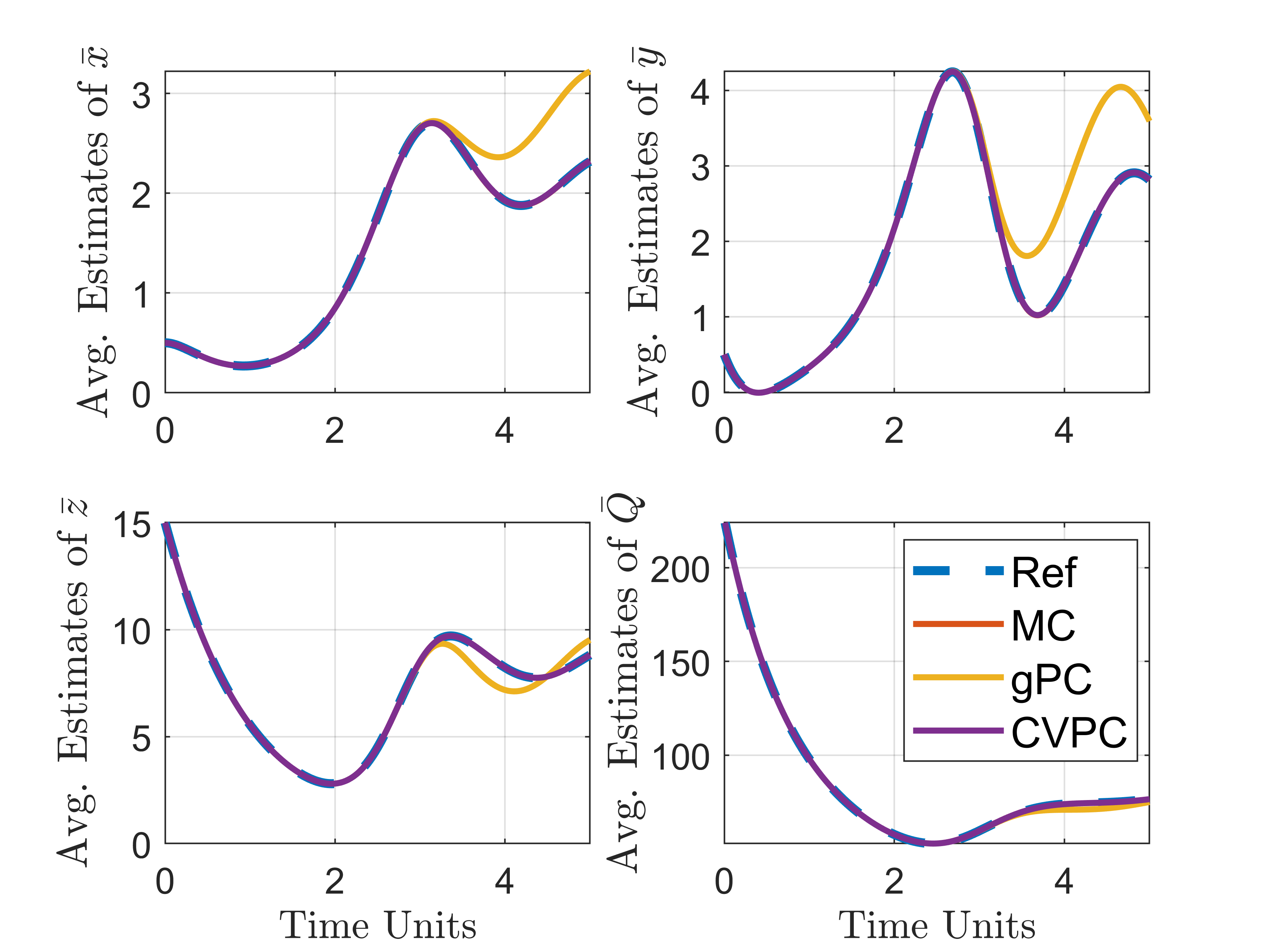}}}
    \subfloat[\centering]{{\includegraphics[width=7.8cm]{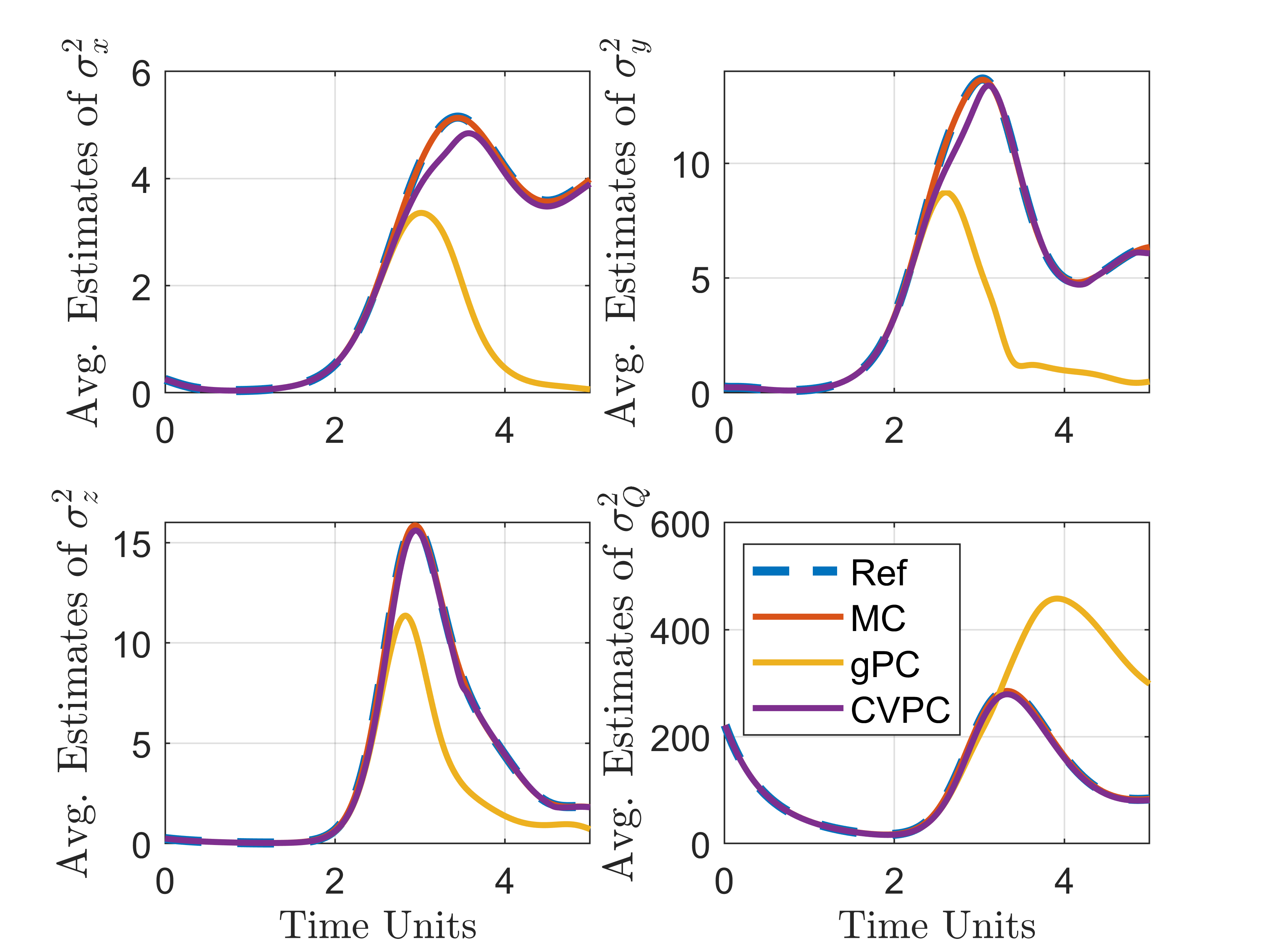}}}
    \caption{Average estimates obtained from an ensemble of $10^4$ simulations for the Lorenz system with two stable fixed-point attractos. The MC, gPC, and CVPC estimation are under the same computational cost constraint. Results show that gPC estimates become significantly biased about three time units into the simulation. The accuracy degradation of gPC estimates is more significant in variance estimation. (a) Average mean estimates for $x$, $y$, $z$, and $Q$ given by \eqref{eq:J_Lorenz}. The gPC estimates quickly diverge from the reference after $t=2$ time units. (b) Average variance estimates for $x$, $y$, $z$, and $Q$. The gPC estimates quickly diverge from the reference after $t=2$ time units.}
    \label{f:Lorenz_fixed_pt_mean_var}
\end{figure}

To quantitatively compare the estimator performance, in Figure \ref{f:Lorenz_fixed_pt_RMSE} (a-b) we compute the relative RMSE for a group of MC, gPC, and CVPC estimators that are under the same computational budget based on the reference solutions. In addition to demonstrating the performance of optimal CVPC estimators, we also estimate the mean $\bar{Q}$ and the variance $\mathbb{V}\text{ar}(Q)$ using \textit{sub-optimal} CV weights to demonstrate the influence of CV weights on the accuracy of CVPC estimators. We remark that, by \textit{sub-optimal}, we mean CV weights that are not optimized for the UQ task. However, they are correlated with the optimal CV weights. Specifically, we implement the CV weights optimized for variance estimation in estimating the mean. The results are shown in Figure \ref{f:Lorenz_fixed_pt_RMSE} (a). Similarly, we adopt the CV weight optimized for mean estimation in estimating the variance. The corresponding results are shown in Figure \ref{f:Lorenz_fixed_pt_RMSE} (b).

\begin{figure}
    \centering
    \subfloat[\centering]{{\includegraphics[width=7.5cm]{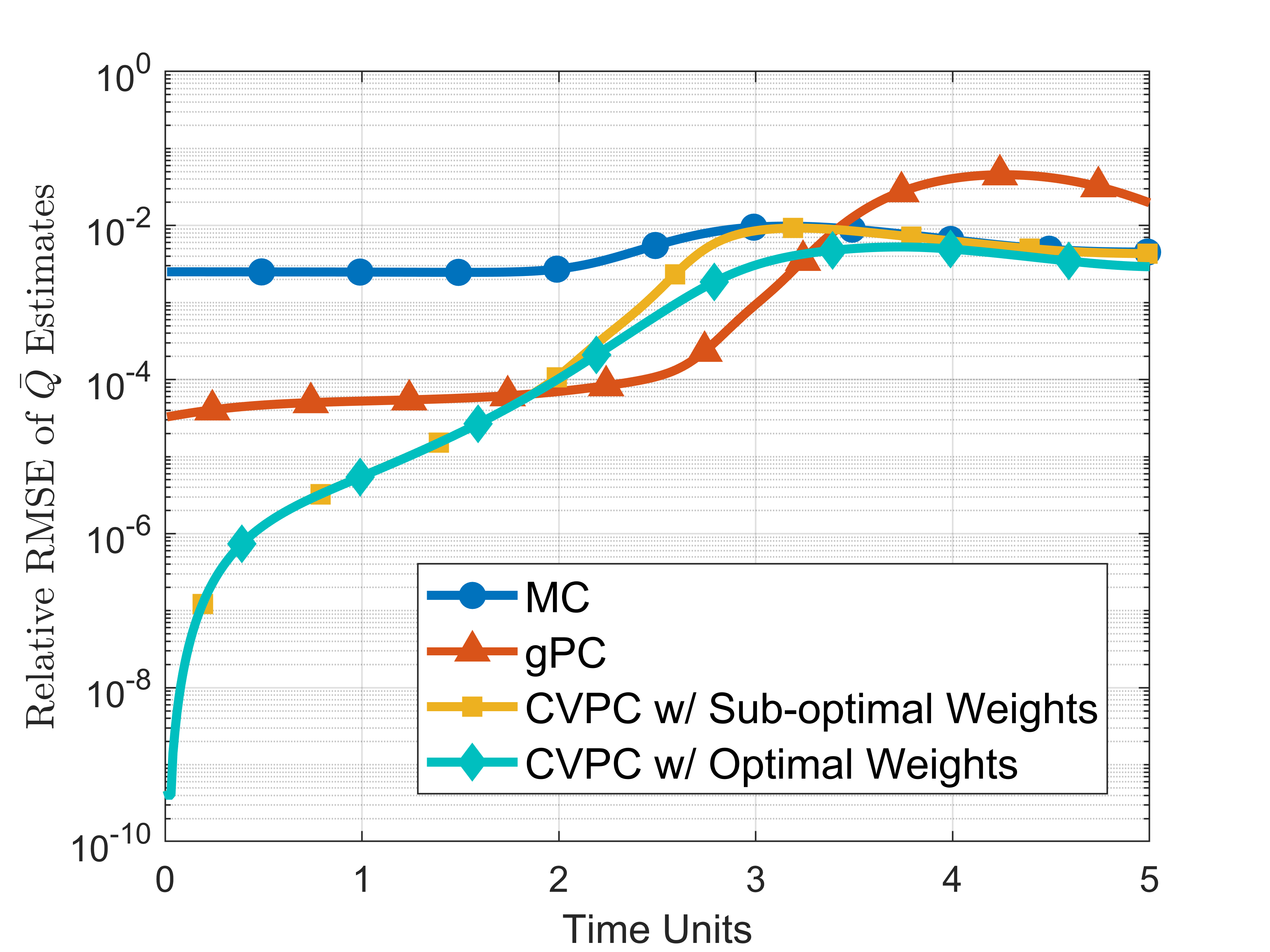}}}
    \qquad
    \subfloat[\centering]{{\includegraphics[width=7.5cm]{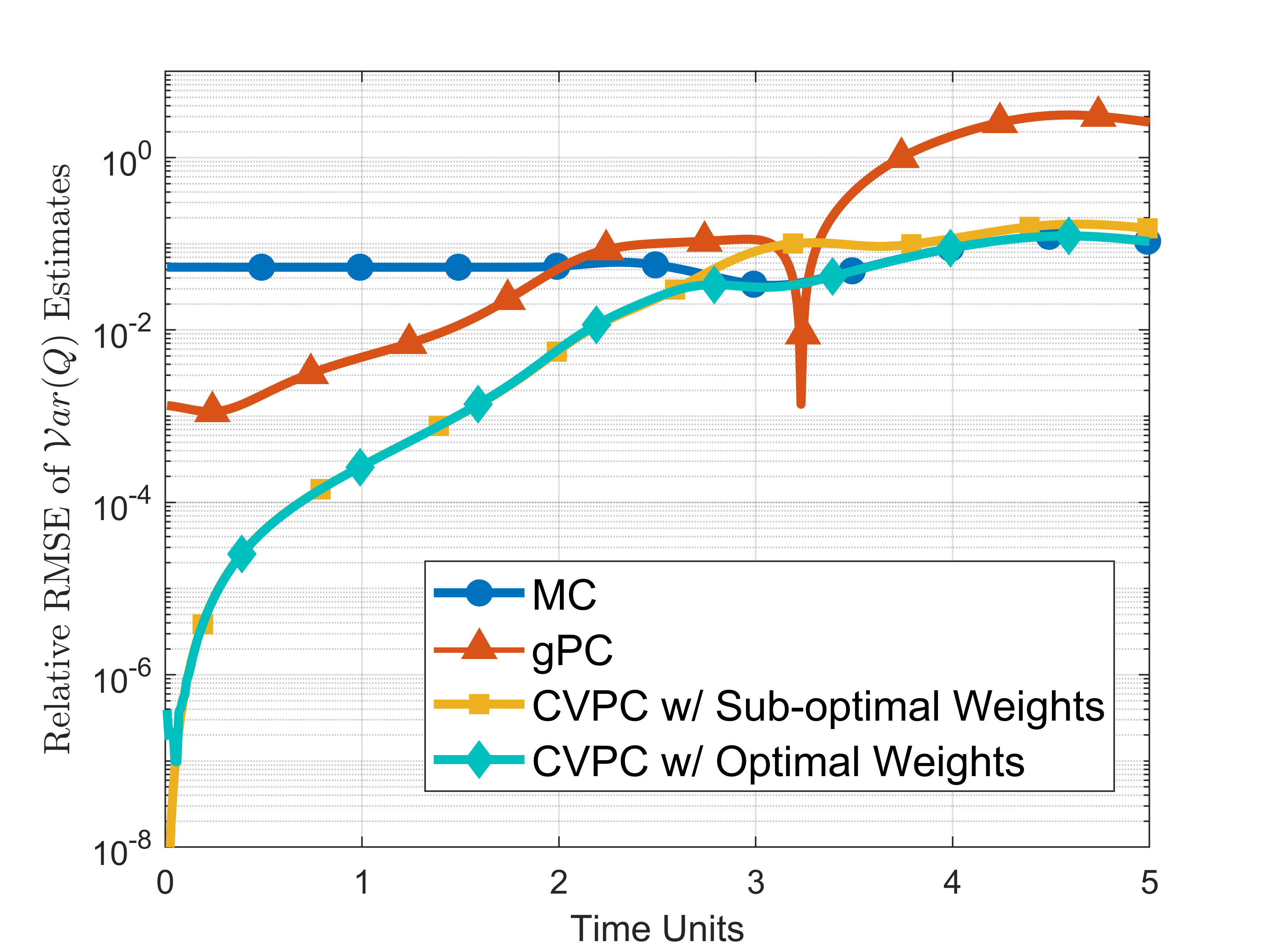}}}
    \caption{Relative RMSE are computed for MC, gPC, and CVPC estimators. The RMSE are based on the reference solutions obtained by a MC estimator using one million samples. The three estimators are imposed the same computational cost constraint. CVPC is implemented with the optimal and sub-optimal CV weights. (a) Both optimal and sub-optimal CVPC estimators outperform the gPC before $t=1.8$ and after $t=3.5$ time units. The optimal CVPC estimator outperforms the MC estimator before $t=3.5$ time units, while the sub-optimal CVPC estimator outperforms the MC before $t=3$ time units. (b) Both optimal and sub-optimal CVPC estimators outperform the gPC for majority of the simulation other than a brief period around $t=3.2$ time units. Both optimal and sub-optimal CVPC estimators outperform the MC before $t=2.8$ time units. The sub-optimal CVPC estimator outperforms noticeably worse than the MC from $t=2.8$ to $t=3.9$ time units.}
    \label{f:Lorenz_fixed_pt_RMSE}
\end{figure}

As shown in Figure \ref{f:Lorenz_fixed_pt_RMSE} (a-b), both the optimal and sub-optimal CVPC estimators are able to achieve RMSE that are significantly lower than that of the gPC estimator before $t=1.8$ time units. During this period of time, the optimal CV weights for both mean and variance estimations are very close to $-1$ as shown in Figure \ref{f:Lorenz_fixed_pt_CV_weights}, indicating a correlation between the high- and low-fidelity components of CVPC that is approximately $1$. This correlation starts to decrease around $t=2$ time units. And this correlation degradation is more severe in the variance estimator. In the meantime, the variance of the QoI quickly increases and peaks just after $t=3$ time units as shown in Figure \ref{f:Lorenz_fixed_pt_mean_var}. These two factors collectively contribute to a higher RMSE of the CVPC mean estimator than that of the gPC estimator between $t=2$ to $t=3.5$ time units. The accuracy advantage of CVPC variance estimator also shrinks during this period of time, which is largely due to the decrease in correlation between the high- and low-fidelity components of the CVPC. The sudden drop of RMSE of the gPC estimate shortly after $t=3$ time units is largely due to oscillations of gPC coefficients in the polynomial space. After $t=3.5$ time units, the gPC estimator is no longer effective, leading to a RMSE higher than that of both CVPC and MC. On the other hand, the performance advantage of CVPC over MC is very consistent in both the mean and the variance estimations. Based on \eqref{eq:min_CV_var} we know that, when the optimal CV weights are implemented, CVPC is guaranteed to have RMSE that is no worse than that of a MC with the same sample size. Because the CVPC and MC estimators in this example have comparable sample sizes, even when gPC completely loses its estimation accuracy, CVPC can still deliver the level of accuracy that is essentially identical to MC. However, if sub-optimal CV weights are used, CVPC can have noticeably higher RMSE than MC, for example between $t=2$ and $t=3.5$ times units for the mean estimation as shown in Figure \ref{f:Lorenz_fixed_pt_RMSE} (b).

Overall, the CVPC demonstrates excellent UQ performance for the Lorenz system with fixed-point attractors. Specifically, the CVPC estimator is capable of delivering significantly higher accuracy, especially in variance estimation, than MC and gPC estimators under the same computational budget.

\subsubsection{Lorenz System with Chaotic Dynamics}
\label{ch5:Lorenz_chaotic}
We now consider the Lorenz system with parameters that lead to chaotic dynamics:
\begin{equation} \label{eq:param_Lorenz_chaotic}
    \theta_1 = 10, \qquad \theta_2 = 28, \qquad \theta_3 = \frac{8}{3}
\end{equation}
The initial conditions are modeled in the same way as in section \ref{sec:Lorenz_stable}, but with half of the standard deviations. The goal is to accurately estimate the mean and variance of the QoI defined in \eqref{eq:J_Lorenz} under a constrained computational budget.

To illustrate the general trend and spread of the trajectories due to the initial condition uncertainties, we simulate the system behavior for $10$ time units. An ensemble of simulated trajectories is shown in Figure \ref{f:Lorenz_chaotic_traj} (a-b), where the deterministic solutions with initial conditions $[\mu_{x_0}, \, \mu_{y_0}, \, \mu_{z_0}]^T$ are plotted using solid orange curves. The trajectories fall into the strange attractor and form a geometry that resemble a butterfly. The resulting oscillating behavior can be clearly observed in the state trajectories shown in Figure \ref{f:Lorenz_chaotic_traj} (b).

\begin{figure}
    \centering
    \subfloat[\centering]{{\includegraphics[width=7.5cm]{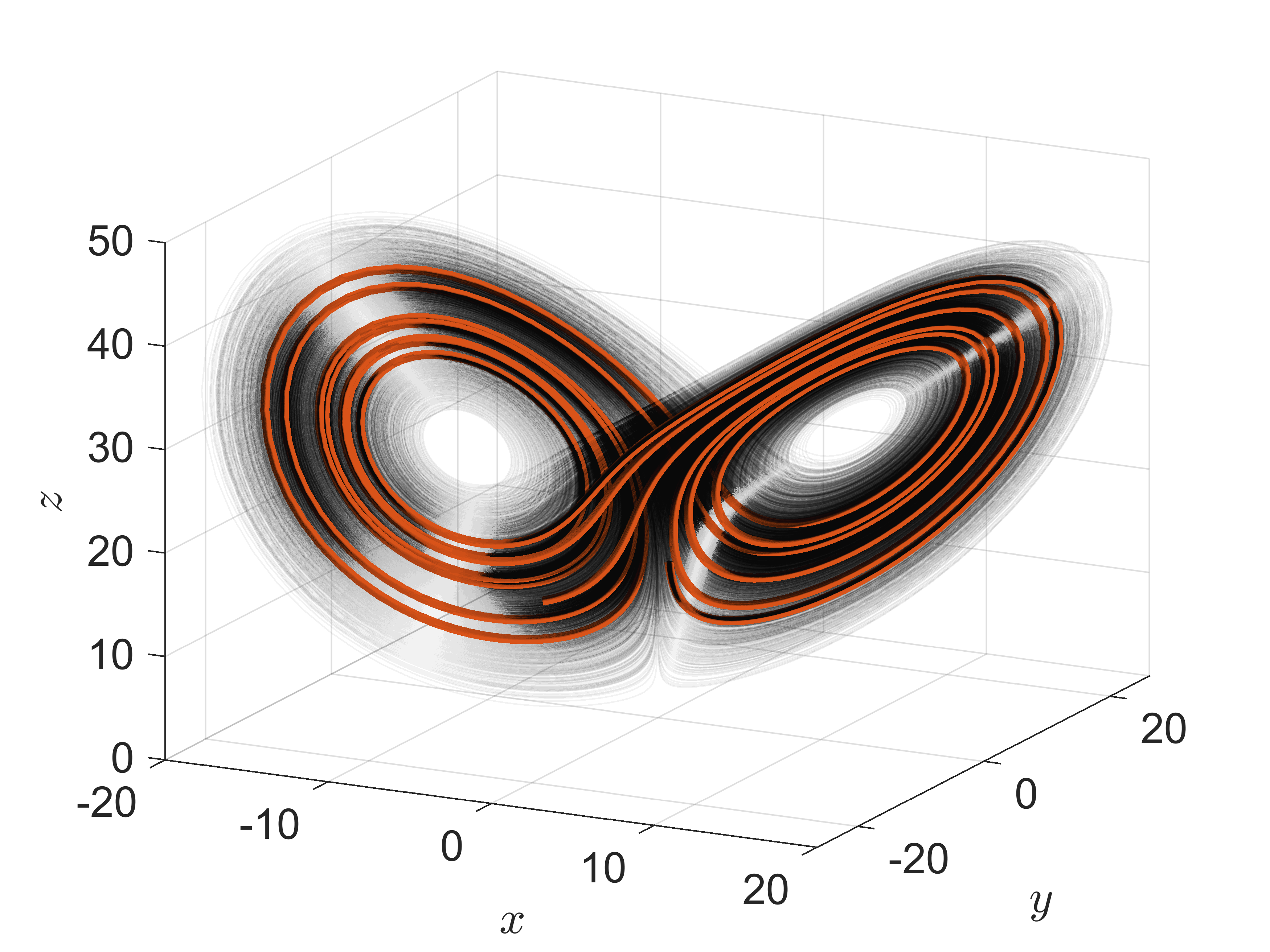}}}
    \qquad
    \subfloat[\centering]{{\includegraphics[width=7.5cm]{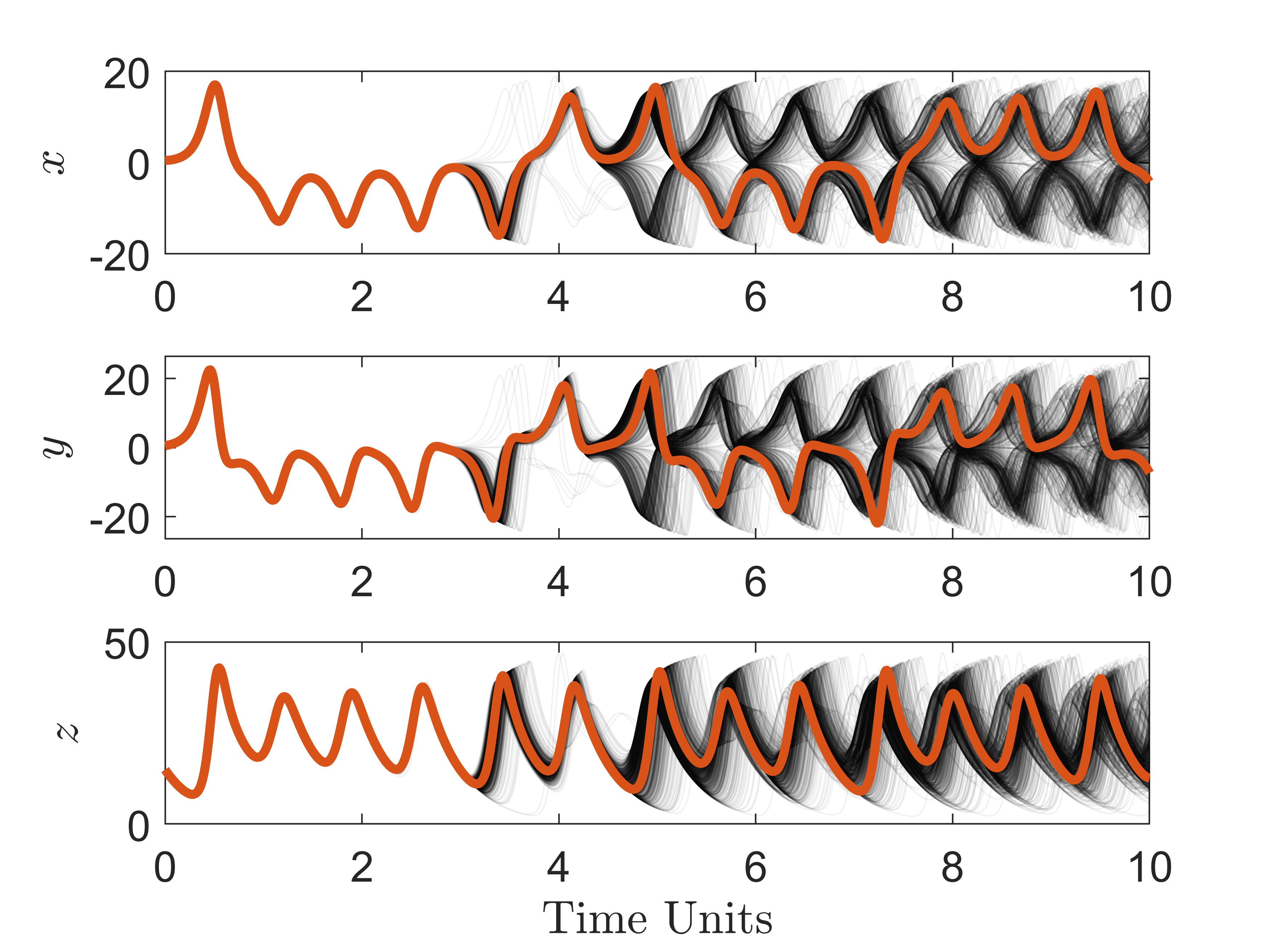}}}
    \caption{Trajectory realizations of the Lorenz system with chaotic dynamics under initial condition uncertainties. If the initial conditions are known exactly, the deterministic trajectories follow the orange curves. Density of the trajectory realizations is indicated through the darkness of the grey color. (a) Realizations of the system trajectory in the three-dimensional space shows very large variance in the trajectory realizations. (b) Snapshots of 1000 realizations of the system trajectory in $x$, $y$, and $z$ dimensions. The realizations initially follow the deterministic solution, but then diverge drastically after $t=4.5$ time units.}
    \label{f:Lorenz_chaotic_traj}
\end{figure}

In this example, we impose a computational budget that is approximately $600$ times the cost of evaluating a single realization of the high-fidelity model. Similar to section \ref{sec:Lorenz_stable}, we apply Algorithm \ref{alg:OptDesignCVPC} to construct the optimal CVPC estimator for estimating the mean of $Q$ at $t=3$ time units and then use the same CVPC estimator for variance estimation. Consequently, we arrive at the optimal CVPC estimator with a degree-3 gPC and a MC sample size of $587$. Next, we use pilot sampling to calculate the optimal CV weight for mean and variance estimations using \eqref{eq:optimal_CV_weight} and \eqref{eq:opt_CV_weight_for_var}, respectively.

The UQ results of the optimal CVPC estimator are compared to that of a MC estimator and a gPC estimator under the same computational budget. A one-million sample MC is used to obtain reference solutions. The optimal CV weights for mean and variance estimations are shown in Figure \ref{f:Lorenz_chaotic_CV_weights}. Both CV weights stay very close to $-1$ up until $t=3$ time units, indicating high correlations between the high- and low-fidelity components of the CVPC estimator. Then, this correction drops rapidly to zeros, causing the optimal CV weights to quickly go to zero, as shown in Figure \ref{f:Lorenz_chaotic_CV_weights}. The same correlation degradation in variance estimator can be observed in Figure \ref{f:Lorenz_chaotic_CV_weights}, which goes to zero even faster with less oscillation. This phenomenon is largely due to the loss of effectiveness in the polynomial chaos approximation by gPC. Because of the chaotic dynamics, it is very difficult for spectral decomposition methods such as gPC to achieve consistent and accurate estimations, especially at low polynomial degrees. Even with a small amount of initial condition uncertainty, gPC solutions rapidly diverge around $t=4$ time units as shown in Figure \ref{f:Lorenz_chaotic_mean_var} (a-b). Beyond this point, gPC is no longer useful for UQ. Therefore, the low-fidelity component of CVPC generates essentially no variance reduction after $t=4$ time units. This is reflected in Figure \ref{f:Lorenz_chaotic_RMSE} (a-b) where the optimal CVPC estimator does not show any performance advantage over the MC estimator. For mean estimation, the optimal CVPC delivers orders-of-magnitude accuracy improvement over MC before $t=4$ time units, while generating moderate accuracy improvement over gPC before $t=3$ time units. Similar to the previous example, CVPC does not suffer from the divergence of gPC approximation, maintaining an accuracy that is on par with the MC estimator even for very long time horizons. For variance estimation, the optimal CVPC is capable of delivering estimation accuracy that is close to an order of magnitude lower than that of the gPC estimator and multiple orders of magnitude lower than that of the MC estimator for the first $3.5$ time units. Compared to the previous example, the sub-optimal CVPC under-performs the optimal CVPC by a smaller margin. This is due to the smaller difference in the optimal CV weights for mean and variance estimations, which is observed in Figure \ref{f:Lorenz_chaotic_CV_weights}.

\begin{figure}[!t]
\centering
\includegraphics[width=7.5cm]{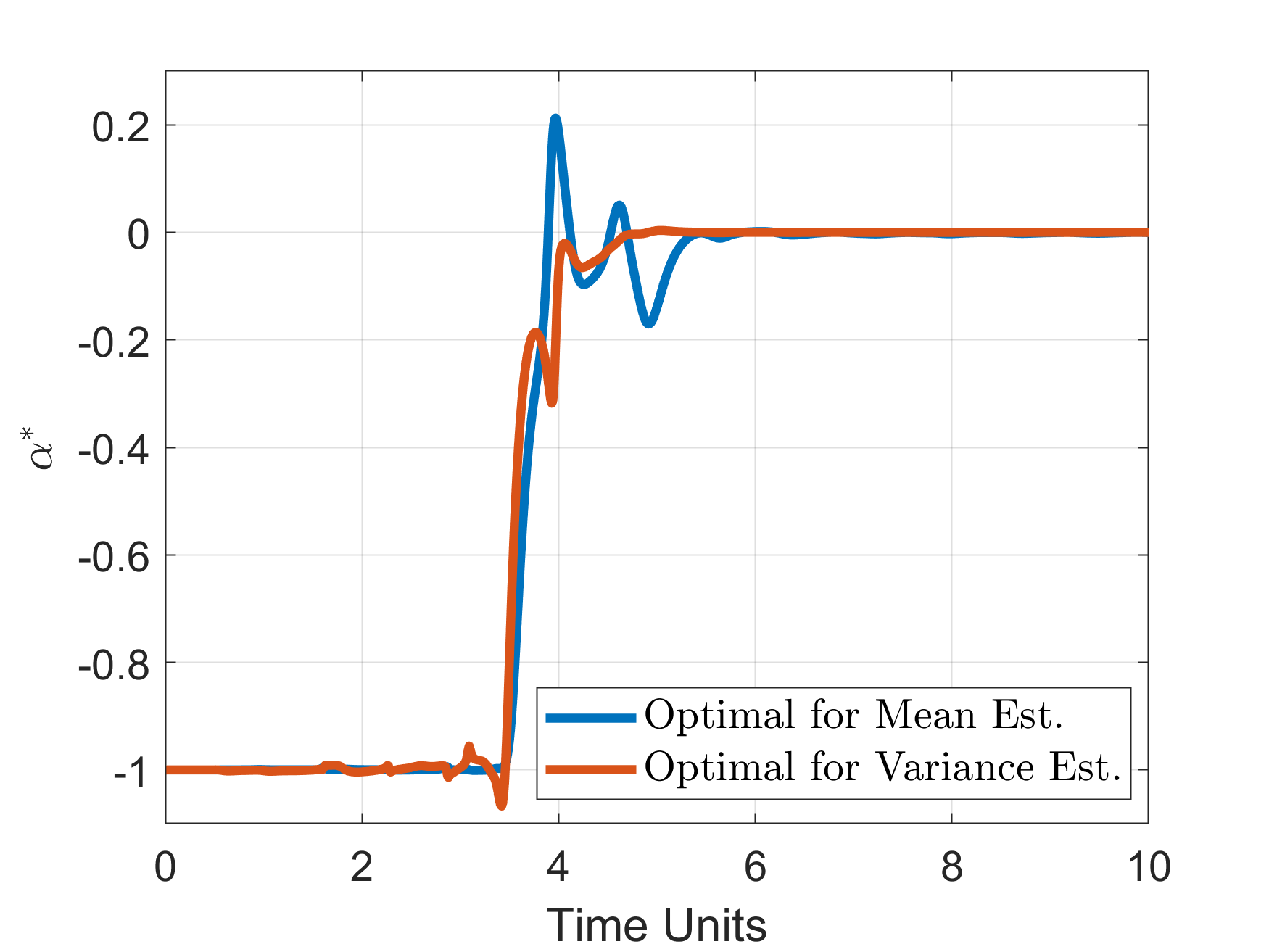}
\caption{The optimal CV weights for mean and variance estimations in the Lorenz system example with chaotic dynamics. Both CV weights stay close to $-1$ until $t=3.5$ time units, indicating high correlations between the low- and high-fidelity components of the CVPC estimator at the beginning. The correlations drop rapidly after $t=3.5$ time units for both the mean and variance estimations, driving both CV weights to zero.}
\label{f:Lorenz_chaotic_CV_weights}
\end{figure}

\begin{figure}[!t]
    \centering
    \subfloat[\centering]{{\includegraphics[width=7.8cm]{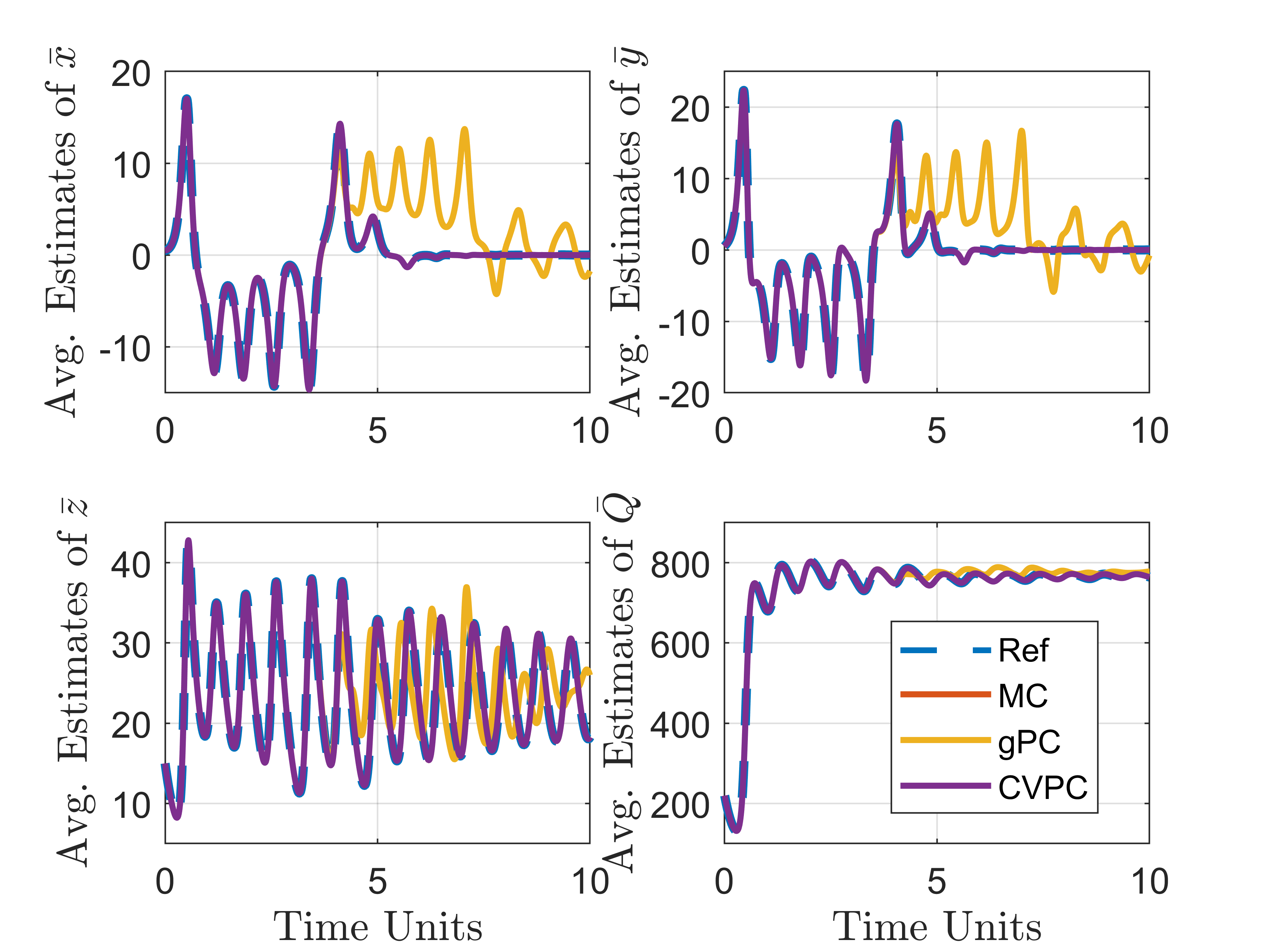}}}
    \subfloat[\centering]{{\includegraphics[width=7.8cm]{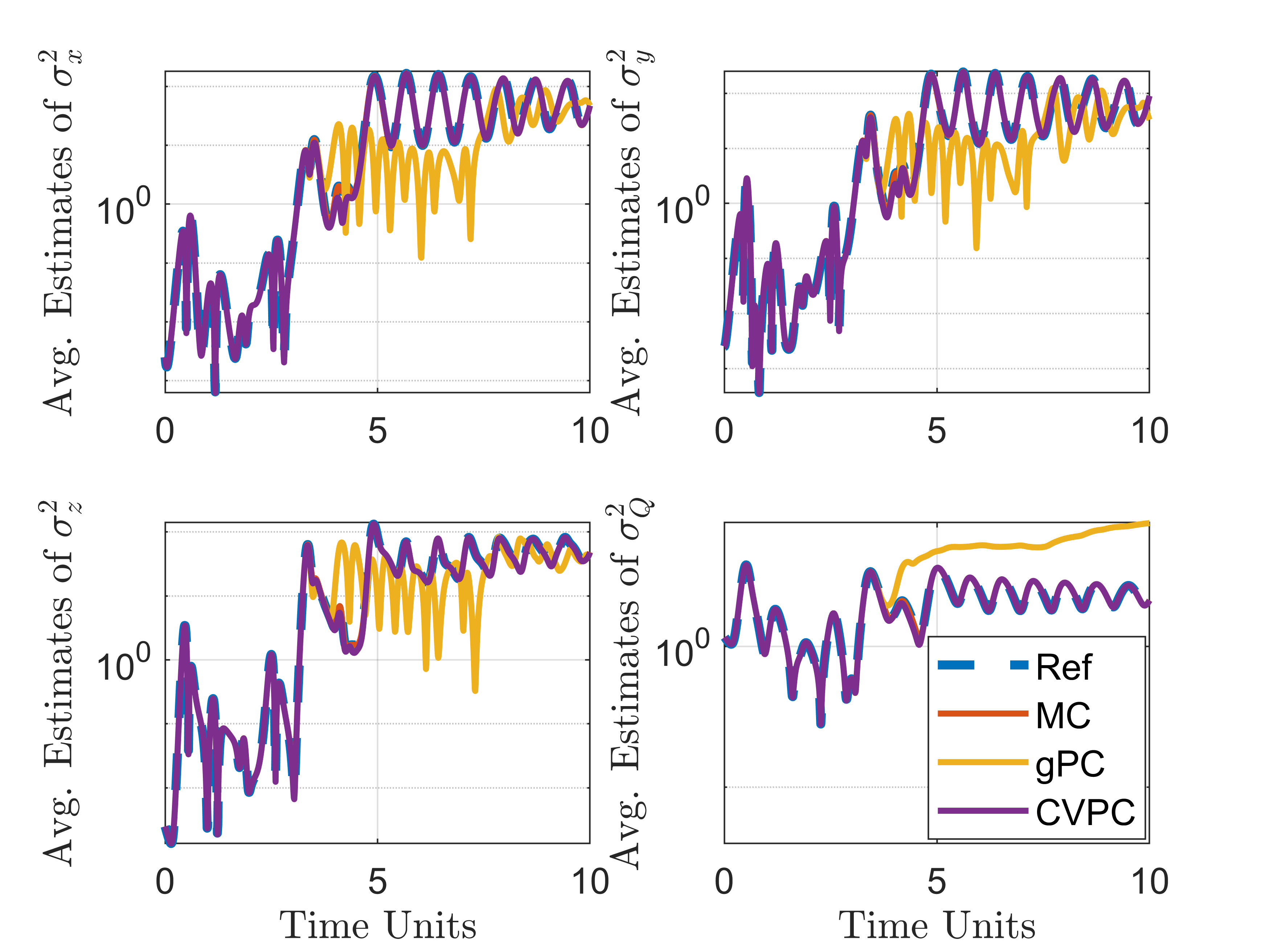}}}
    \caption{Average estimates obtained from an ensemble of $10^4$ simulations. The MC, gPC, and CVPC estimators are under the same computational cost constraint. Results show that gPC estimates become significantly biased about four time units into the simulation. The accuracy degradation of gPC estimates is more significant in variance estimation. (a) Average mean estimates for $x$, $y$, $z$, and $Q$. The gPC estimates quickly diverge from the reference solution after $t=2$ time units. (b) Average variance estimates in log scale for $x$, $y$, $z$, and $Q$. The gPC estimates quickly diverge from the reference solution after $t=4.5$ time units.}
    \label{f:Lorenz_chaotic_mean_var}
\end{figure}

\begin{figure}
    \centering
    \subfloat[\centering]{{\includegraphics[width=7.5cm]{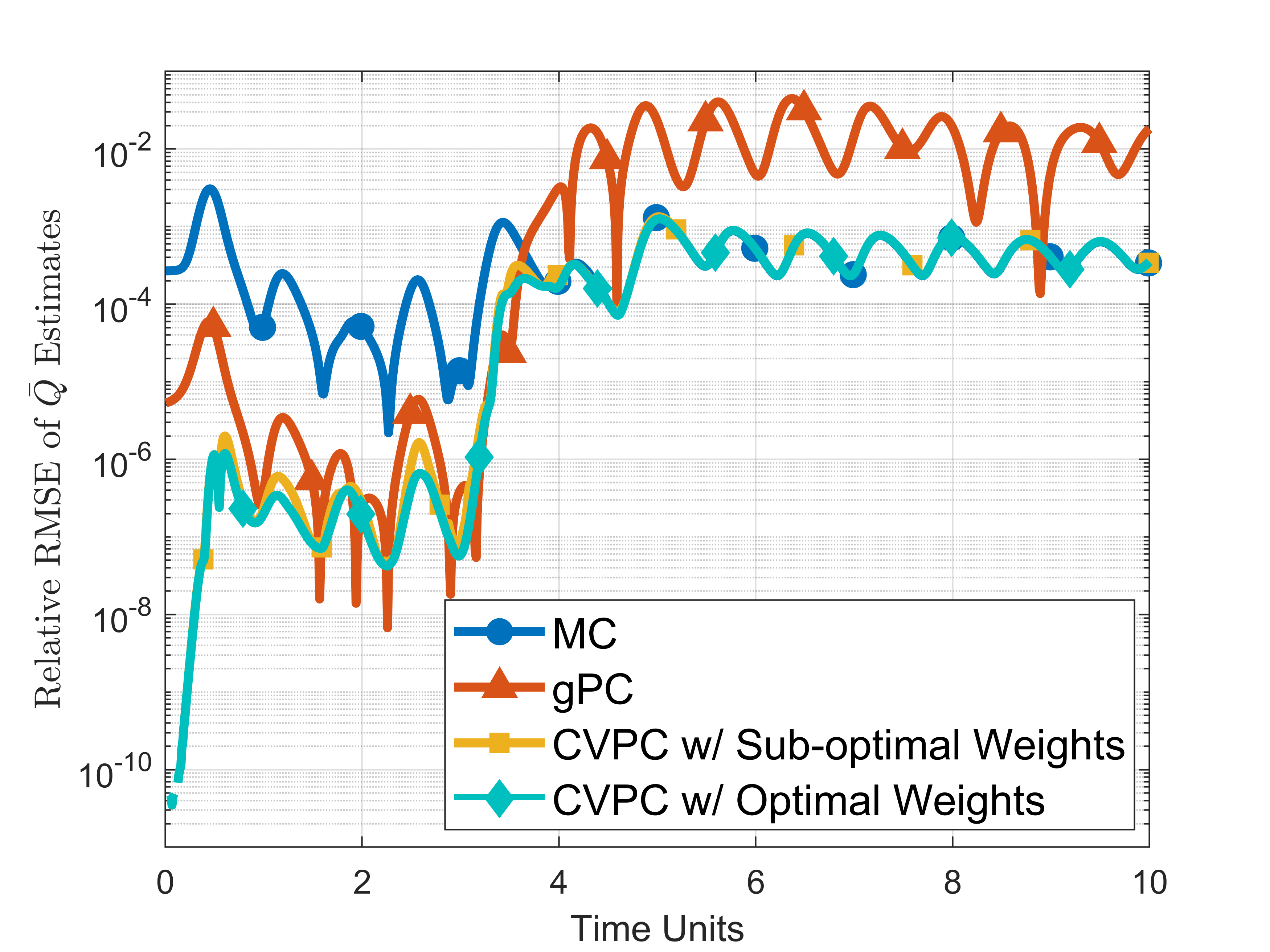}}}
    \qquad
    \subfloat[\centering]{{\includegraphics[width=7.5cm]{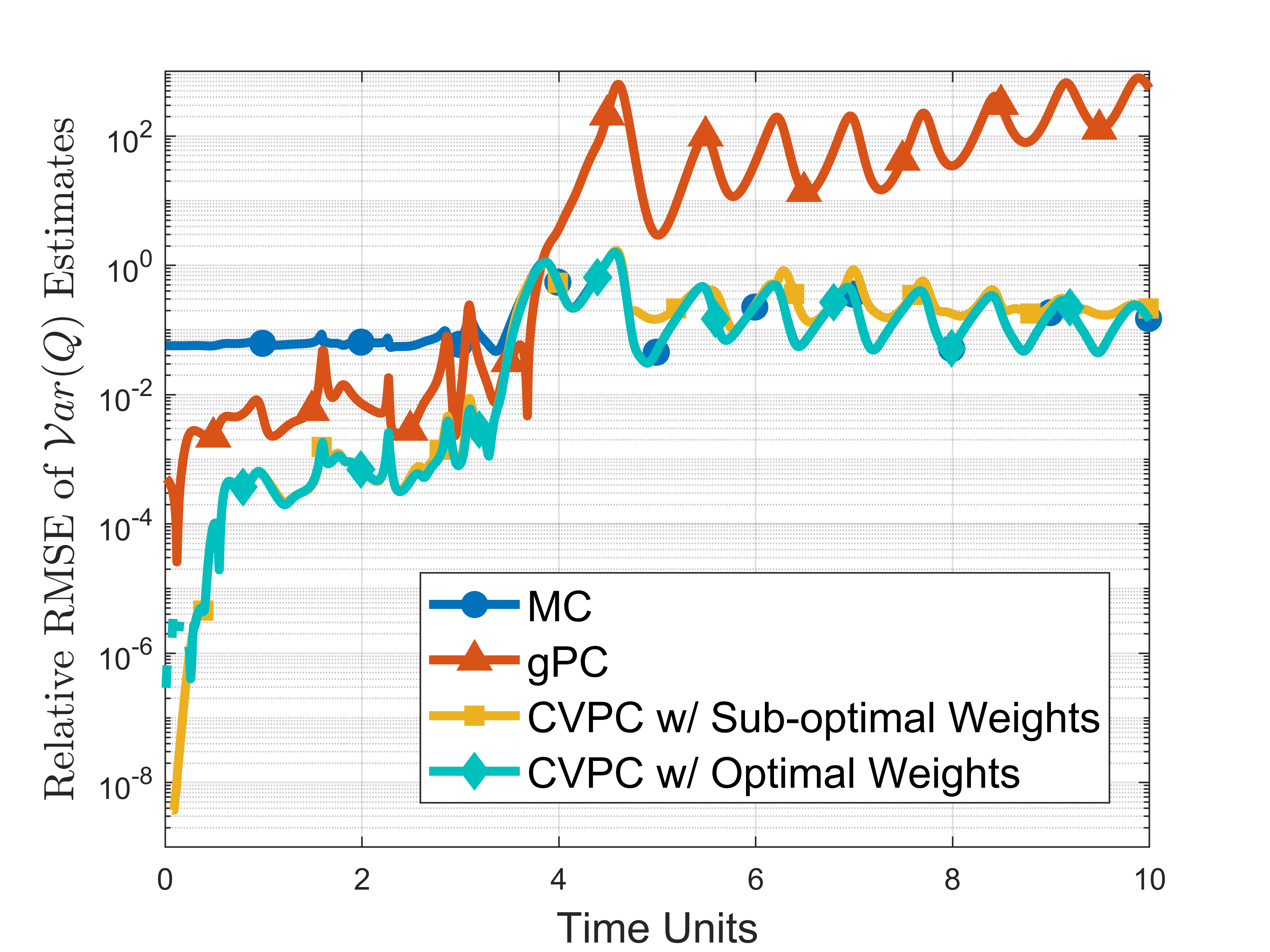}}}
    \caption{Relative RMSE computed for the MC, gPC, and CVPC estimators. The RMSE are based on the reference solutions obtained by a MC estimator using one million samples. The three estimators are imposed the same computational cost constraint. CVPC is implemented with the optimal and sub-optimal CV weights. (a) Both optimal and sub-optimal CVPC estimators outperform the gPC before $t=3$ and after $t=3.5$ time units. Both the optimal and sub-optimal CVPC estimators outperform the MC before $t=4$ time units. (b) The optimal CVPC estimator outperforms the gPC for majority of the simulation. Both optimal and sub-optimal CVPC estimators outperform the MC before $t=2.8$ time units. The sub-optimal CVPC estimator outperforms noticeably worse than the MC after $t=4.5$ time units.}
    \label{f:Lorenz_chaotic_RMSE}
\end{figure}

Overall, the proposed CVPC estimator demonstrates highly accurate UQ for the chaotic Lorenz system under a constrained computational budget. In this specific example, even though the gPC estimator losses effectiveness relative quickly, the optimal CVPC estimator is still able to achieve significant accuracy improvement under a fixed cost. For UQ with long time horizons, one could monitor the Pearson correlation coefficient $\rho$ to keep track of the usability of the gPC-based low-fidelity component of CVPC. If $\rho$ goes to zero, the user can drop the low-fidelity component, effectively reducing CVPC to regular MC. With high flexibility in the design and implementation of CVPC, it is shown to be very promising in substantially improve the computational efficiency of UQ in many nonlinear systems.

\subsection{Gasoline-Powered Automotive Propulsion Systems}
\label{sec:auto_conventional}
The increasing availability of information brings many opportunities to the field of decision-making and control for automotive propulsion systems. For example, the large amount of data available to future vehicles will enable the characterization of uncertainties in vehicle models, driver behaviors, and the driving environment. However, many challenges still exist. On one hand, automotive propulsion systems are often nonlinear and very complex \cite{yang2019quantifying}. Furthermore, probabilistic uncertainties in real-world conditions render propulsion system simulations stochastic. As a result, predictive simulations of automotive propulsion systems are often too expensive to run in real-time using traditional approaches, which prevent their applicability to decision-making and control purposes. Hence, the aim of this section is to demonstrate the efficacy and efficiency of the proposed CVPC method for alleviating the computational bottlenecks facing online simulations of automotive propulsion systems \cite{yang2020uncertainty, yang2021multifidelity}. Specifically, we show that the optimal CVPC estimator can significantly reduce RMSE for the mean and variance estimations of the axle shaft torque of an gasoline-powered automotive propulsion system \cite{yang2019quantifying}.

\subsubsection{System Model}
\label{sec:conventional_auto_mdl}
We consider the propulsion system of a light-duty pickup truck with a state-of-the-art 10-speed automatic transmission (AT). The schematic of the system is shown in Figure \ref{fig:drivetrain_schematic}, and is a simplified description of the model provided in~\cite{yang2019quantifying}. In this paper, our operating scenario is the launch of the vehicle from first gear to the onset of second gear. This maneuver is selected because that robust predictions of system behaviors is particularly important during launch due to large accelerations.

The dynamics of the engine are obtained through rigid body assumptions, yielding:
\begin{equation} \label{eq:engine_dyn}
    I_e\dot{\omega}_e = (1-\gamma_e)\tau_e - \tau_{im}
\end{equation}
where $\tau_{im}$ is the torque converter impeller torque, $\gamma_e$ is an engine torque reduction ratio, $I_e$, $\tau_e$, $\omega_e$ are the engine lumped inertia, torque demand, and rotational speed, respectively. The engine torque demand $\tau_e$ in \eqref{eq:engine_dyn} is modeled as a linear time-varying function:
\begin{equation} \label{eq:engine_torque}
    \tau_e = ut + u_0
\end{equation}
where $u_0$ is the initial engine torque demand and $u$ is the rate of change of engine torque demand. The driver behavior $u$ is modeled as an uncertain parameter whose distribution is Gaussian $u \sim \mathcal{N}(\mu_u, \sigma_u)$, where $\mu_u$ and $\sigma_u$ are the mean and standard deviation of $u$.

The engine output shaft is directly connected to the torque converter (TC) shown in Figure \ref{fig:drivetrain_schematic}. TC is an essential component of the propulsion system as it enables hydrodynamic torque multiplication during vehicle launch and provides isolation of torsional vibrations between the drivetrain and the engine \cite{hrovat1985bond}. In normal operations, the impeller and turbine act as centrifugal pumps in opposite directions to transmit and multiply torque via the oil flow. The nonlinear steady-state dynamics of the TC with a locked stator is approximated using techniques proposed in \cite{kotwicki1982dynamic}. This results in a set of equations that gives impeller and turbine torques as a pair of functions that are quadratic in their speeds:
\begin{equation} \label{eq:impeller_torque}
    \tau_{im} = a_0\omega_{e}^2 + a_1\omega_{e}\omega_t + a_2\omega_{t}^2
\end{equation}
\begin{equation} \label{eq:turbine_torque}
    \tau_t = b_0\omega_{e}^2 + b_1\omega_{e}\omega_t + b_2\omega_t^2
\end{equation}
where $\omega_t$ is the torque converter turbine angular speed, $\tau_t$ is the turbine torque; the angular speed of the torque converter impeller is represented using the engine angular speed $\omega_e$ as the two components are assumed to be in rigid connection; $a_0$, $a_1$, $a_2$, $b_0$, $b_1$, and $b_2$ are constant coefficients that can be estimated using TC test data \cite{yang2019quantifying}.

\begin{figure}
\centering
\includegraphics[width=15.5cm]{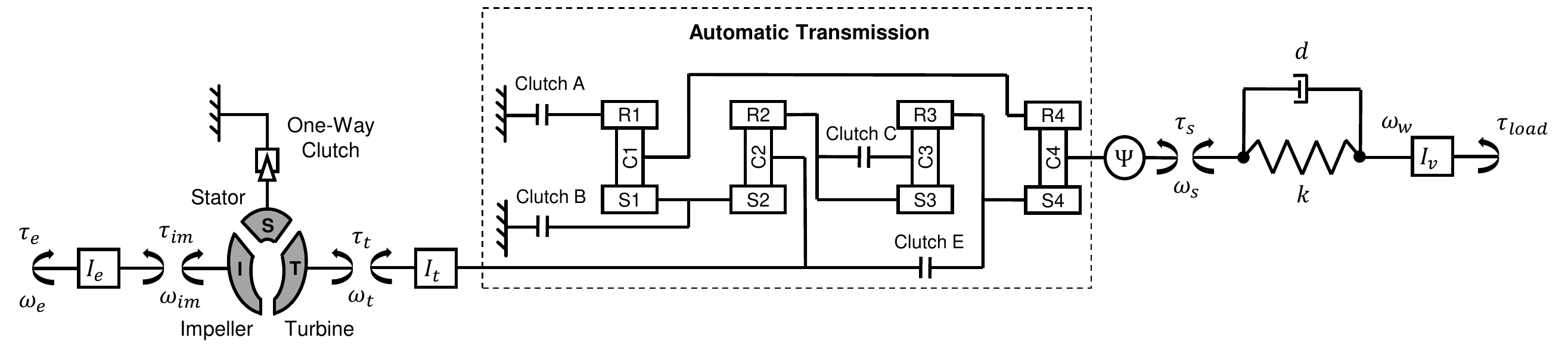}
\caption{Schematic of an automotive propulsion system with a TC and a 10-speed
step-ratio automatic transmission \cite{yang2020uncertainty}.}
\label{fig:drivetrain_schematic}
\end{figure}

When launching the vehicle from standstill, the AT performs a series of actions to shift from first gear to second gear. At the beginning of a launch, wet clutches A, B, and E in Figure \ref{fig:drivetrain_schematic} are firmly engaged. As the vehicle gains speed, the wet clutches are carefully controlled to achieve a smooth swap of path of the propulsion torque from clutch E to clutch C. Specifically, we consider the first phase of the this torque swapping procedure, which is often referred to as the torque phase. Downstream from the AT, propulsion torque is transmitted to drive wheels via the driveline subsystem:
\begin{equation} \label{eq:vehicle_dyn}
    I_v\dot{\omega}_w = \tau_s - \tau_{load}
\end{equation}
where $I_v$ is the effective moment of inertia of the vehicle at drive wheels; $\omega_w$ is the drive wheel angular speed; $\tau_s$ is the axle shaft torque; and $\tau_{load}$ is the effective load torque at drive wheels. Because propulsion torque transmitted by the driveline typically does not change direction during a launch, we can assume zero backlash inside the subsystem. Then, the axle shaft torque can be modeled as:
\begin{equation} \label{eq:shaft_torque}
    \tau_s = k(\theta_s - \theta_w) + d(\omega_s - \omega_w)
\end{equation}
where $k$ and $d$ are the lumped stiffness and damping in the system modeled as Gaussian random variables to capture the model parametric uncertainty such that $k \sim \mathcal{N}(\mu_k,\sigma_k)$ and $c \sim \mathcal{N}(\mu_d,\sigma_d)$; $\theta_{s}$ and $\omega_{s}$ are angular displacement and angular speed of the axle shaft; $\theta_w $ and $\omega_{w}$ are angular displacement and angular speed of the drive wheels.
                             
The overall propulsion system dynamics can be summarized as follows:
\begin{align}
  &\begin{aligned} \label{eq:engine_spd_dyn}
    \mathllap{\dot{\omega}_e} &= C_{11}\tau_e + C_{12} \omega_e^2 + C_{13} \omega_e\omega_{s} + C_{14} \omega_{s}^2 \\
  \end{aligned}\\
  &\begin{aligned} \label{eq:shaft_spd_dyn}
    \mathllap{\dot{\omega}_{s}} &= C_{21} \omega_e^2 + C_{22} \omega_e\omega_{s} + C_{23} \omega_{s}^2 + C_{24}\tau_C + C_{25} \tau_{s}\\
  \end{aligned}\\
  &\begin{aligned} \label{eq:wheel_spd_dyn}
    \mathllap{\dot{\omega}_w} &= C_{31} \tau_{s} + C_{32} \omega_w^2 + C_{33}\\
  \end{aligned}\\
  &\begin{aligned} \label{eq:shaft_torque_dyn}
    \mathllap{\dot{\tau}_{s}} &= C_{41} \omega_{s} + C_{42} \omega_w + C_{43} \omega_{s}^2 + C_{44} \omega_e\omega_{s} + C_{45} \omega_{e}^2 + C_{46} \tau_C + C_{47} \tau_{s} + C_{48}\omega^2_w + C_{49}
  \end{aligned}
\end{align}
where $\tau_C$ is the torque capacity of clutch C; $C_{ij}$ are lumped constant coefficients similar to the ones described in \cite{yang2019quantifying}. Wet clutches are complicated hydro-mechanical devices, whose dynamics is highly complex. Additionally, torque generation of wet clutches is affected by many factors that depend on specific operating conditions \cite{fujii2014clutch, cao2005development}. Therefore, clutch torque capacity predictions of $\tau_C$ given by control-oriented models often come with considerable uncertainty. To account for this uncertainty, we model the torque capacity in a probabilistic setting:
\begin{equation} \label{eq:clutch_cap_uncertainty}
    \tau_C = \bar{\tau}_C + \lambda_C
\end{equation}
where $\bar{\tau}_C$ is the torque capacity predicted by the control-oriented model and $\lambda_C \sim \mathcal{N}(\mu_{\lambda_C},\sigma_{\lambda_C})$ is an uncertain offset from the baseline prediction.

\subsubsection{Implementation of CVPC}
\label{sec:conventional_auto_UQ}
We have implemented the CVPC for this model and provided initial results in~\cite{yang2022a}, the reproduced results are in Figure~\ref{f:CVPC_RMSE_drivetrain}. In this paper, we explore the optimal design, and provide new results in Figures~\ref{f:opt_design_drivetrain} and~\ref{f:opt_CV_weights_drivetrain}.

We consider the simulations of a gasoline-powered automotive propulsion system from a launch of the vehicle in first gear to the onset of second gear. The goal is to estimate the mean and variance of the axle shaft torque with the following uncertainties accounted for: uncertainties in (1) lumped stiffness coefficient; (2) lumped damping coefficient; (3) rate of change in engine torque demand; and (4) on-coming clutch torque capacity. Specifically, the mean values of lumped stiffness coefficient $\mu_k$ and lumped damping coefficient $\mu_c$ are set to their nominal values per the test vehicle's hardware design. The mean value of rate of change in engine torque demand $\mu_u$ is determined based on vehicle test data. The standard deviations of lumped stiffness coefficient $\sigma_k$, lumped damping coefficient $\sigma_c$, and rate of change in engine torque demand $\sigma_u$ are $20\%$ of their respective mean values to represent model parametric and system input uncertainties. The on-coming clutch torque capacity uncertainty is modeled using a zero-mean Gaussian random variable $\lambda_C \sim \mathcal{N}(0, \, 75Nm)$.

Similar to Section \ref{sec:Lorenz}, we implement Algorithm \ref{alg:OptDesignCVPC} to find the optimal design for a CVPC estimator under a given computational budget. The solutions to a range of computational budgets are shown in \ref{f:opt_design_drivetrain} (a) with the corresponding minimal CVPC estimator variances shown in \ref{f:opt_design_drivetrain} (b).

\begin{figure}[!ht]
    \centering
    \subfloat[\centering]{{\includegraphics[width=7.5cm]{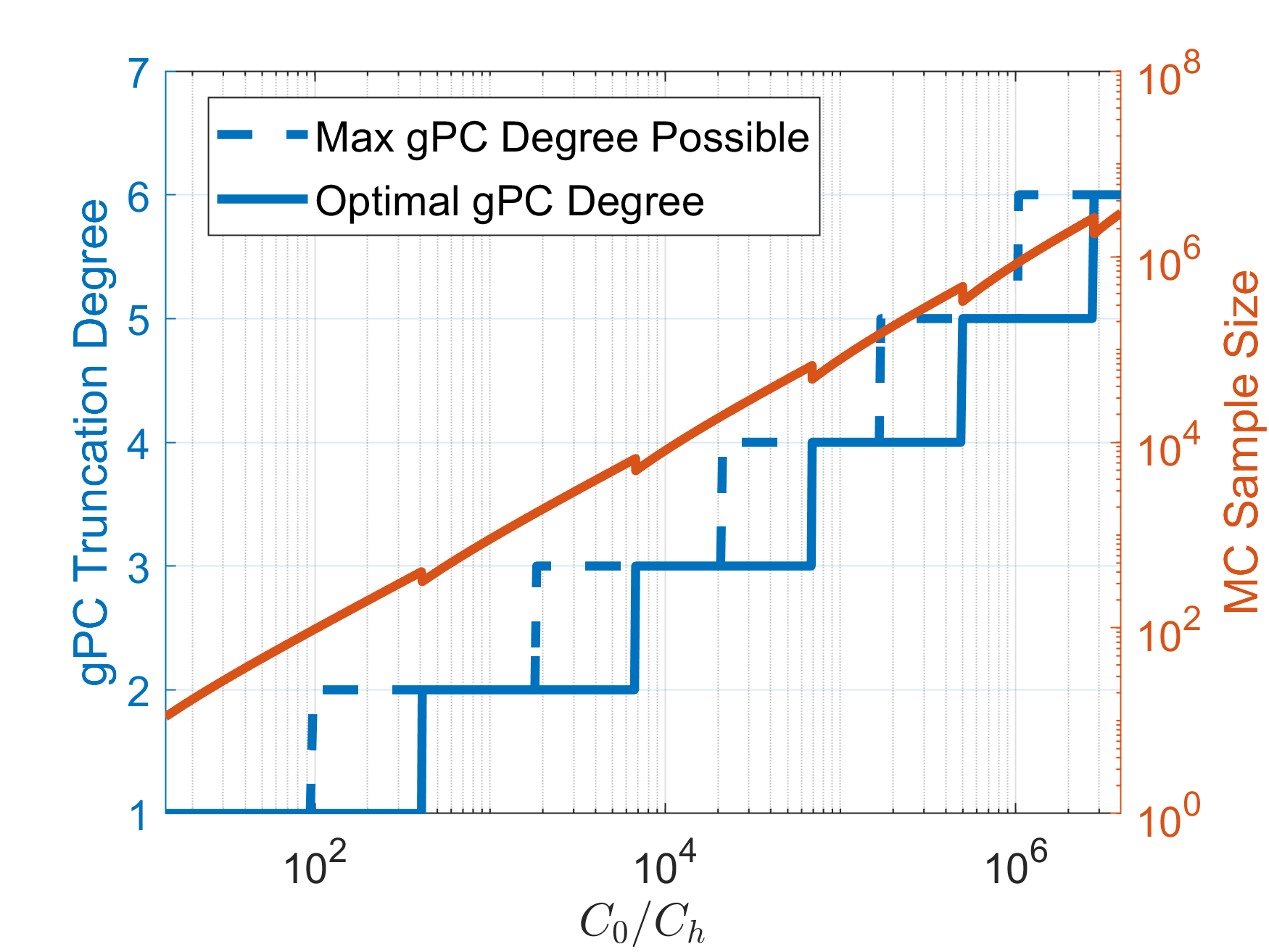}}}
    \qquad
    \subfloat[\centering]{{\includegraphics[width=7.5cm]{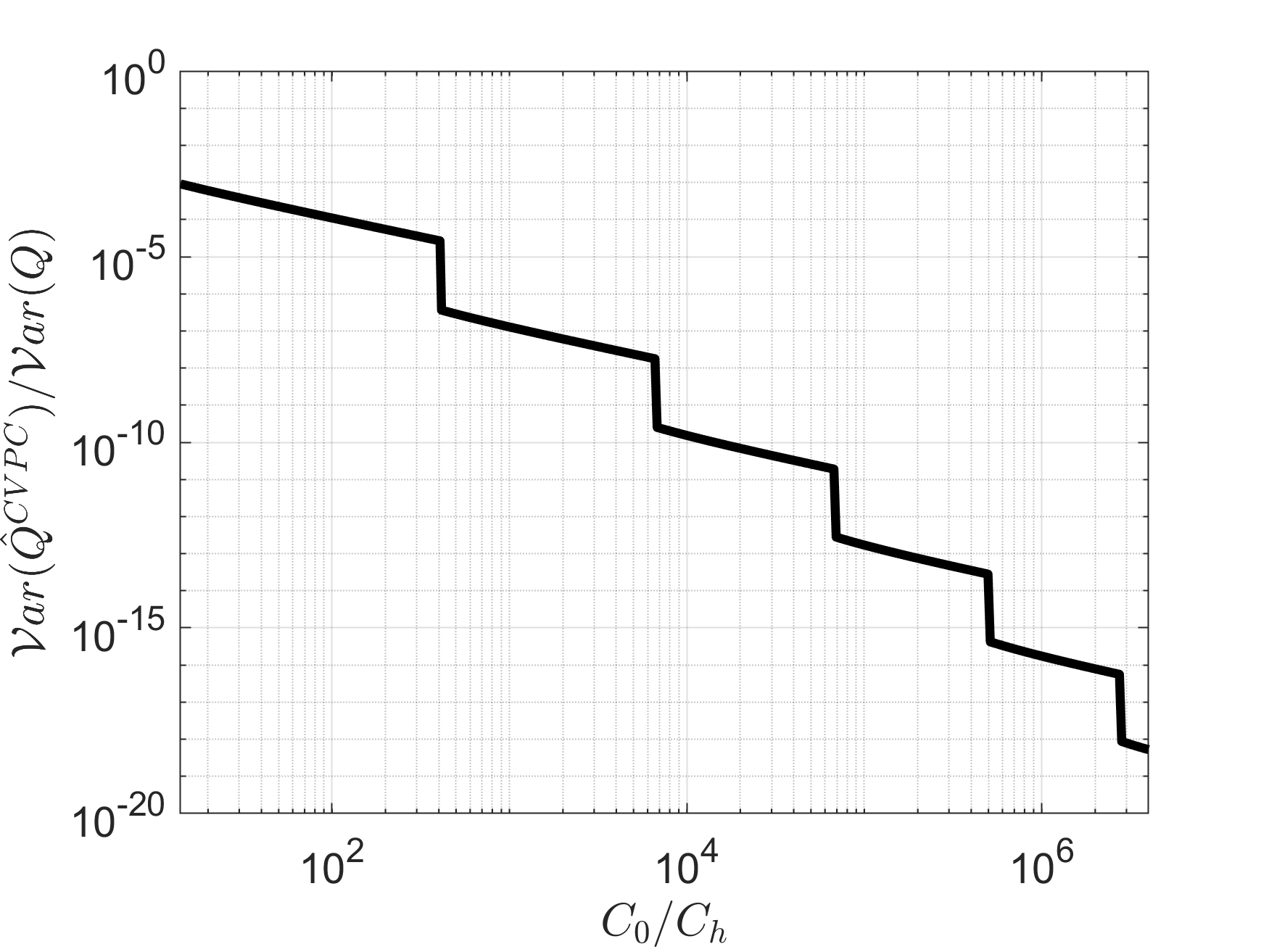}}}
    \caption{(a) The optimal CVPC estimator design at various computational budgets. The computational budgets are expressed as the ratio of the budget to the cost of evaluating a single realization of the high-fidelity model. Algorithm \ref{alg:OptDesignCVPC} balances the utilization of gPC and MC such that the estimator variance of the resulting CVPC is minimized. (b) The minimal normalized estimator variance of the resulting CVPC at various computational budgets.}
    \label{f:opt_design_drivetrain}
\end{figure}

In this example, we impose a computational budget that is $6500$ times the cost of evaluating the set of deterministic ODEs \eqref{eq:engine_dyn}-\eqref{eq:shaft_torque_dyn}. Based on the results in Figure \ref{f:opt_design_drivetrain}, we select an optimal CVPC configuration with degree-2 gPC and $6400$ MC samples. Optimal CVPC estimators are implemented to estimate the mean and variance of vehicle axle shaft torque $\tau_s$. Two sets of optimal CV weights are computed for mean and variance estimations using \eqref{eq:optimal_CV_weight} and \eqref{eq:opt_CV_weight_for_var}, respectively. The optimal CV weights for the automotive propulsion system example is shown in Figure \ref{f:opt_CV_weights_drivetrain}. Both CV weights hover very closely around $-1$, indicating that the correlations between the high- and low-fidelity components of the CVPC estimator are very high throughout the simulation. 

\begin{figure}[!ht]
\centering
\includegraphics[width=7.5cm]{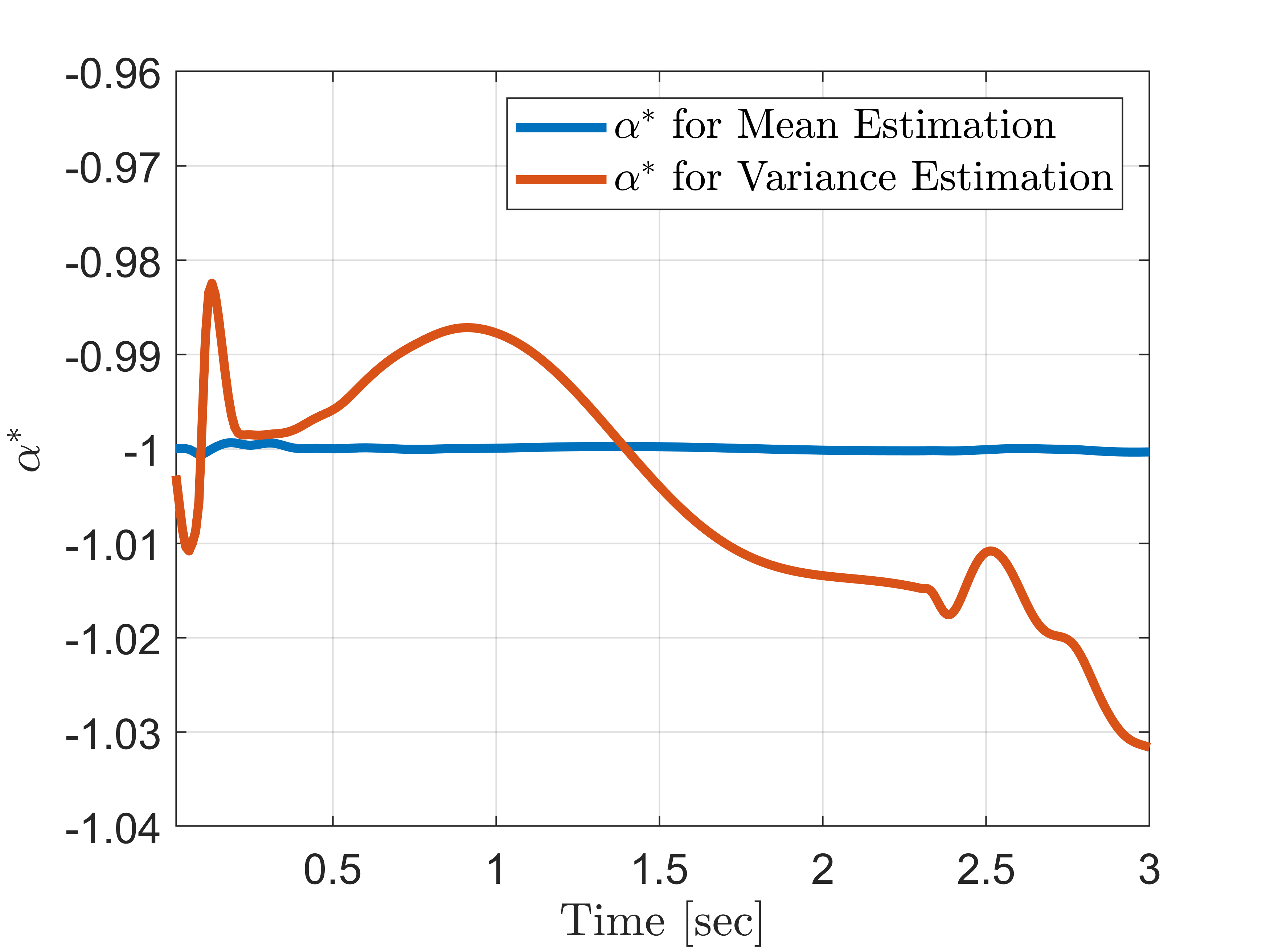}
\caption{The optimal CV weights for mean and variance estimations in the gasoline-powered automotive propulsion system example. Both CV weights stay close to $-1$ throughout the simulation, indicating high correlations between the low- and high-fidelity components of the CVPC estimator.}
\label{f:opt_CV_weights_drivetrain}
\end{figure}

Next, we benchmark the performance of CVPC against a MC estimator and a gPC estimator under the same computational budget constraint. To quantify UQ performance, RMSE of mean and variance estimates of vehicle axle shaft torque $\tau_s$ are computed for each estimator. The RMSE values are obtained based on a reference solution obtained using a one-million-sample MC estimator. The results are shown in Figure \ref{f:CVPC_RMSE_drivetrain} (a-b). For the mean estimation of $\tau_s$, CVPC delivers close to an order of magnitude accuracy improvement over gPC in the middle of the simulation, while delivering comparable performance to gPC at the beginning and at the end. For the variance estimation, CVPC provides a more significantly performance improvement over gPC, offering a reduction of RMSE well over a order of magnitude in the middle of the simulation. When compared to traditional MC, CVPC offers RMSE reductions that are of multiple orders of magnitude for both mean and variance estimations. 

\begin{figure}
    \centering
    \subfloat[\centering]{{\includegraphics[width=7.5cm]{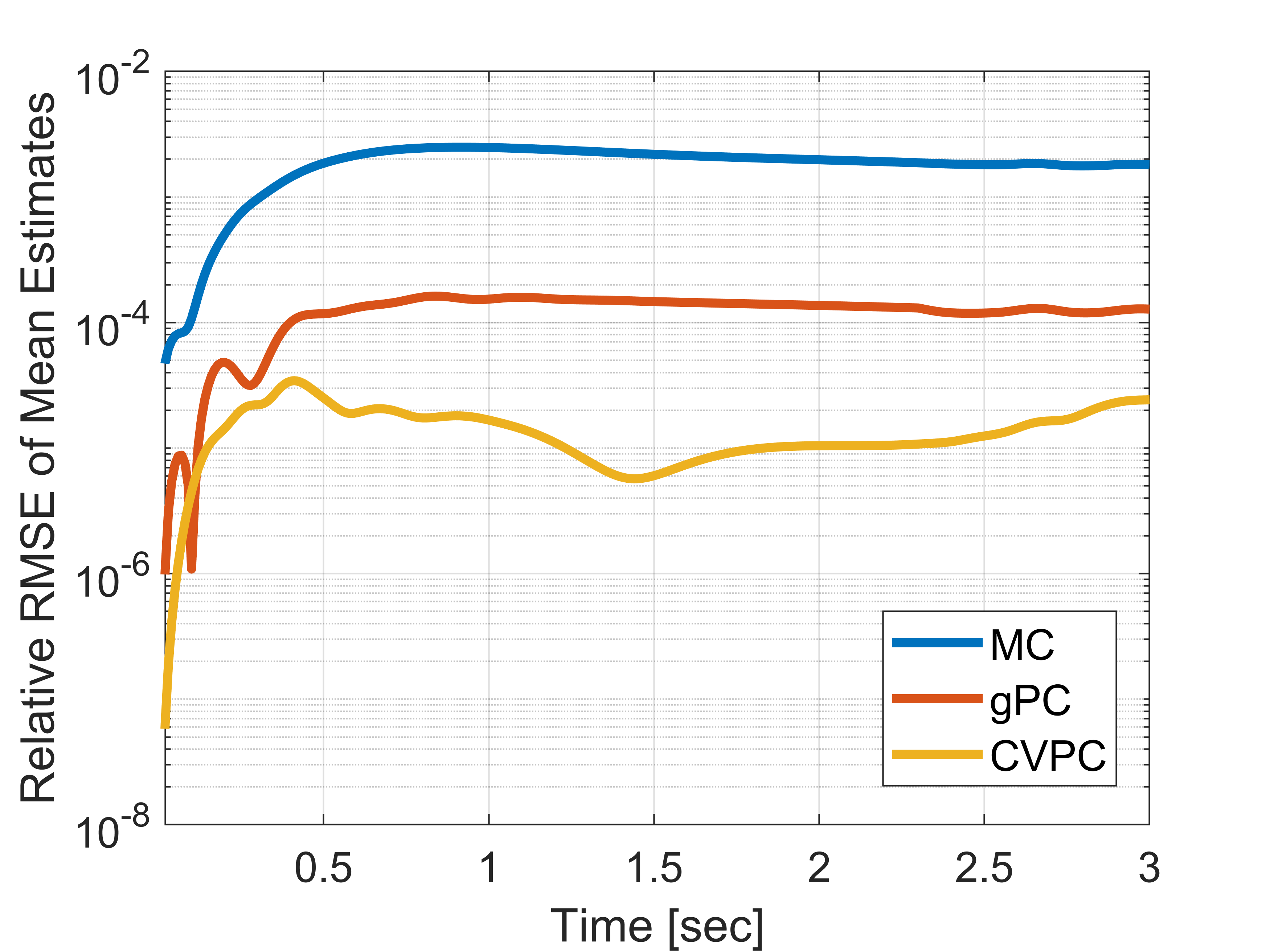}}}
    \qquad
    \subfloat[\centering]{{\includegraphics[width=7.5cm]{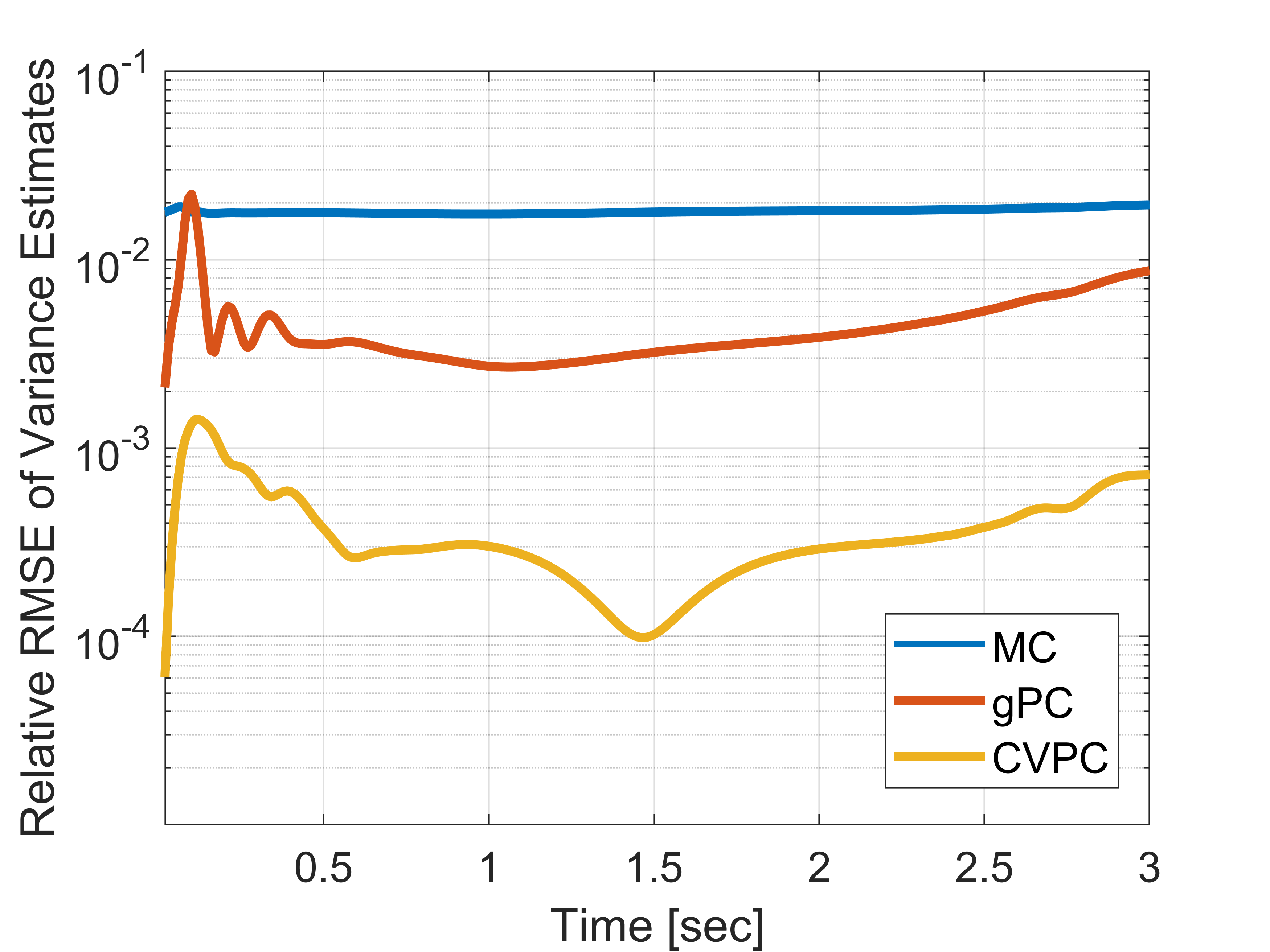}}}
    \caption{(a) Relative RMSE of estimates of the mean axle shaft torque $\tau_s$ calculated at various time steps. CVPC outperforms MC by multiple orders of magnitude while offering a less significant RMSE reduction when compared to gPC. (b) Relative RMSE of estimates of the variance of axles shaft torque. CVPC outperforms MC by multiple orders of magnitude while offering close to an order of magnitude RMSE reduction over gPC for the majority portion of the simulation. All results are obtained under the same computational cost constraint. Results reproduced from~\cite{yang2022a}.}
    \label{f:CVPC_RMSE_drivetrain}
\end{figure}

Overall, the proposed CVPC demonstrates excellent UQ performance for the automotive propulsion system in this example. Specifically, the proposed CVPC estimator delivers UQ accuracy that is orders of magnitude better than the MC estimator under the same computational budget. When compared to the gPC estimator under the same budget, CVPC's performance advantage is more significant in variance estimation, delivering over an order of magnitude RMSE reduction. The RMSE reduction over gPC is less significant in mean estimation. This is largely due to the fact that the gPC convergence rate for the system of interest is very high for mean estimation. For this type of systems, one may choose to use convention gPC instead of CVPC for mean estimation. However, for applications that require unbiased mean estimations, CVPC is recommended over conventional gPC.

\subsection{Hybrid Electric Automotive Propulsion Systems}
\label{ch:MHT}
We now consider the UQ application of CVPC to the latest HEV systems. Our goal is to accurately estimate the mean and variance of the axle shaft torque during an electric vehicle (EV) to HEV mode switch where part of the electric motor torque needs to be diverted to crank-start the internal combustion engine (ICE) through the careful control of a wet clutch. To this end, we start the section by introducing the HEV system, followed by discussions on the simulation model.

HEVs play an very important role in the global move towards electrification in the automotive industry \cite{gersdorf2020mckinsey, hertzke2017dynamics, hertzke2018global}. Generally, a HEV has two power sources, i.e. an ICE along with one or multiple electric machine(s). This type of propulsion configurations allows HEVs to achieve different driving modes such as pure electric mode, regenerative braking mode, pure ICE mode, hybrid electric mode, and, in some applications, ICE-powered battery charging mode. One particular HEV architecture, the \textit{P2 hybrid} has gained importance in the industry. The term ``P2'' represents that: (1) the architecture uses the two power sources in a parallel configuration; (2) the electric machine, which acts as a motor or generator according to the operating scenario, is positioned downstream from the ICE and often in between a disconnect clutch and the transmission \cite{ulsoy2012automotive}. Within the P2 hybrid domain, dedicated hybrid transmissions that have integrated electric machines, such as Toyota Hybrid System, GM Voltec System, and Ford Modular Hybrid Transmissions (MHT) system, are particularly efficient and cost-effective. However, they may exhibit vibration and harshness (NVH) issues under the presence of uncertainties during the EV-HEV mode switch, as the electric motor needs to crank start the ICE \cite{xu2019optimized}. Due to the complexity of the P2 hybrid system and its high sensitivity to environmental conditions \cite{shui2021machine}, design and control of such dedicated hybrid transmissions must consider the significant uncertainties that are unavoidable in both model parameters and model inputs. This pressing need necessitates the adoption of UQ techniques in the design and control processes of such hybrid propulsion systems. 

To this end, we aim to implement the proposed CVPC method for forward UQ of EV-HEV mode switching simulation in a P2 hybrid system. Specifically, we consider the engine start simulation during a EV-HEV mode switch of Ford's MHT system \cite{ortmann2019modular, nedorezov2016method}. The dedicated hybrid transmission of the MHT system is shown in Figure \ref{f:MHT_pic}, where a wet clutch is utilized for cranking the engine during an EV-HEV mode switch. It is critical that a controller is capable of stroking the hydro-mechanical clutch actuator as quickly and as consistently as possible especially under significant uncertainties. Therefore, UQ methods that enable robust simulations that can predict system behaviors to control actions under uncertainties are of particular interest. Furthermore, high computational efficiency is required due to the limited computational resource on-board a vehicle. Next, we develop a simulation model to aid the implementation of CVPC for the system.

\begin{figure}
\centering
\includegraphics[width=11cm]{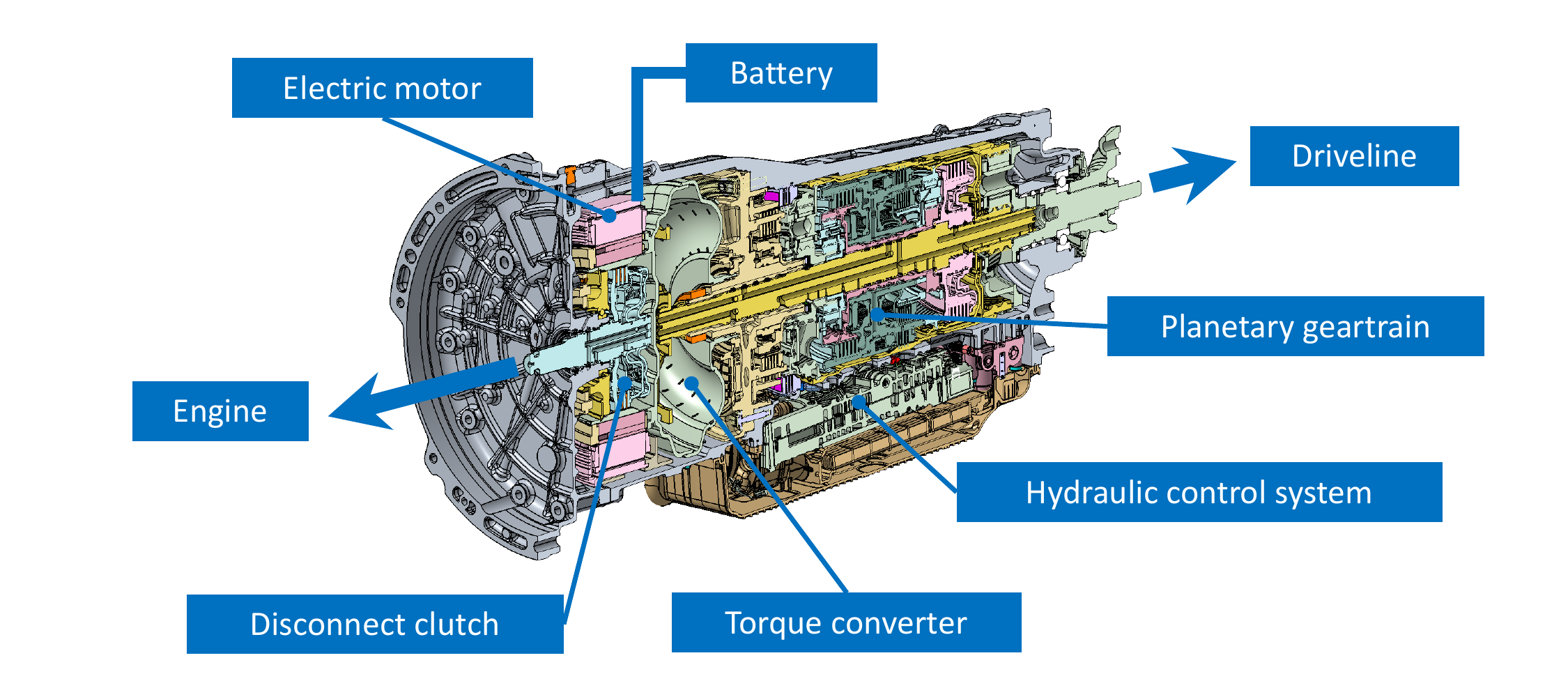}
\caption{MHT system.}
\label{f:MHT_pic}
\end{figure}

\subsubsection{System Model}
\label{sec:MHT_mdl}
In this section, we develop a model that describes the dynamics of the hybrid propulsion system, which consists of the following components and subsystems: (1) a simplified engine model with an ideal torque source; (2) a disconnect clutch (wet clutch) that modulates the torque flow
between the electric motor and the engine; (3) a hydraulic torque converter; (4) a planetary-gear-based automatic transmission; (5) a vehicle driveline subsystem with lumped compliance; (6) a simplified vehicle chassis model. A schematic of this hybrid system is shown in Figure \ref{f:MHT_schematic}, where $\tau_e$, $\tau_{wet}$, $\tau_{mtr}$, $\tau_{im}$, $\tau_t$, $\tau_s$, $\tau_{load}$ are the engine torque, wet clutch torque, electric motor torque, torque converter impeller torque, torque converter turbine torque, shaft torque, and load torque, respectively; $\omega_e$, $\omega_{mtr}$, $\omega_{t}$, $\omega_s$, $\omega_w$ are the engine speed, electric motor speed, torque converter turbine speed, shaft speed, and drive wheel speed, respectively; $I_e$, $I_{mtr}$, $I_t$, $I_v$ are the effective engine inertia, electric motor inertia, torque converter turbine inertia, and vehicle inertia, respective; and $d_s$, $c_s$, $\Psi_{fd}$ are the lumped damping ratio, lumped stiffness, and final drive ratio, respectively. The transmission inertias are determined by gear position, thus not being described in detail in Figure \ref{f:MHT_schematic}. The steady-state dynamics of the torque converter is modeled by the same approximation technique as in section \ref{sec:auto_conventional}.

\begin{figure}
\centering
\includegraphics[width=12cm]{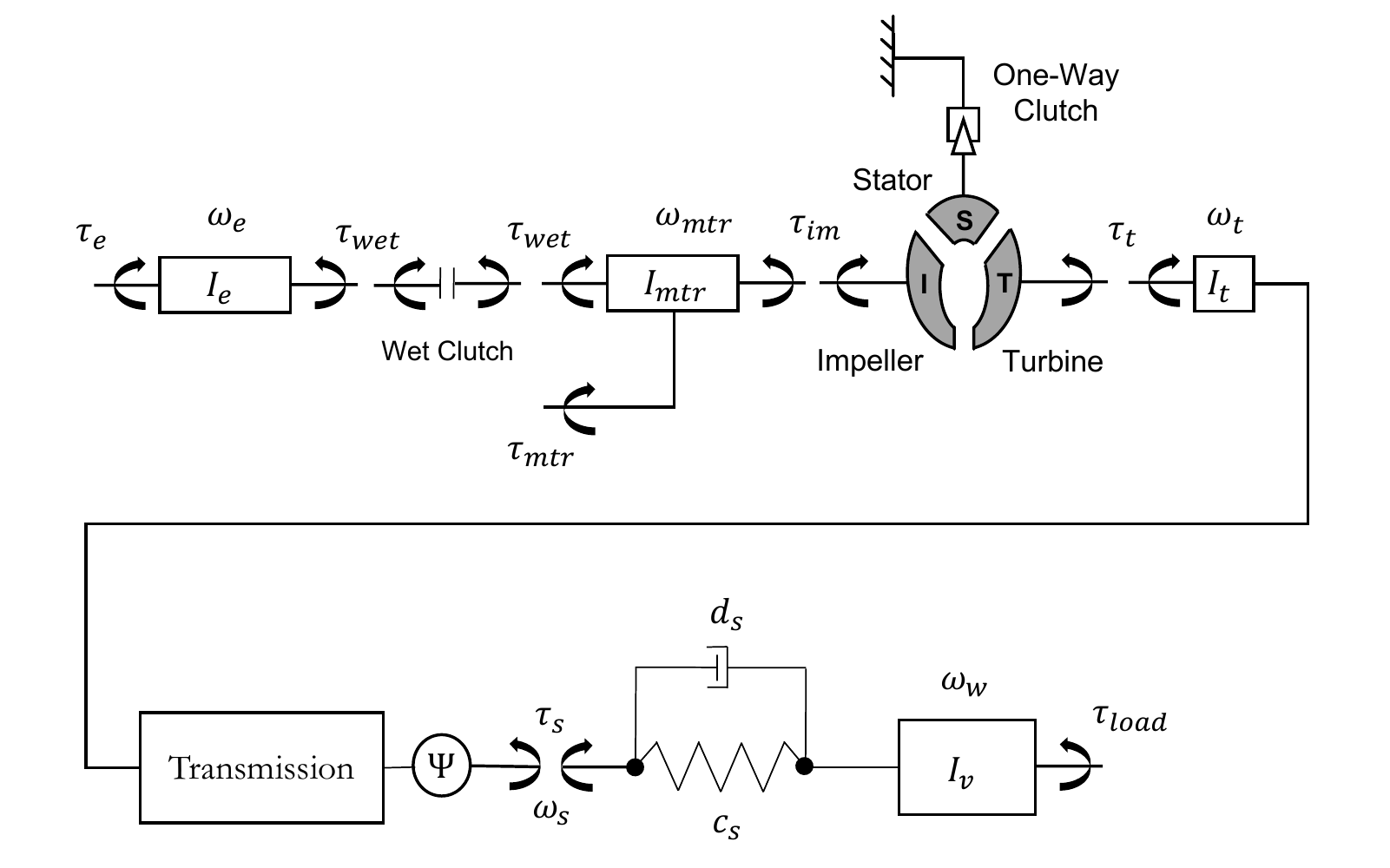}
\caption{Schematic of the hybrid propulsion system equipped with MHT \cite{shui2021machine}.}
\label{f:MHT_schematic}
\end{figure}

The overall hybrid propulsion system model can be summarized as follows \cite{yang2022phdthesis}:
\allowdisplaybreaks
\begin{align}
    \dot{\omega}_e &=
    \frac{1}{I_e}\big((1-\alpha_e)\tau_e - \tau_{wet}\big) \label{eq:MHT_engine_spd_dyn} \\
    \dot{\omega}_{mtr} &=
    \frac{1}{I_{im}+\frac{1}{2}I_{wet}}(\tau_{wet} + \tau_{mtr} - \tau_{im}) \label{eq:MHT_mtr_spd_dyn} \\
    \dot{\omega}_{s} &=
    \frac{1}{(I_t+I_{txm})\Psi_{fd}}\Big(\tau_tR_{txm}\frac{1}{\Psi_{fd}}\tau_s\Big) \label{eq:MHT_shaft_spd_dyn} \\
    \dot{\omega}_w &=
    \frac{1}{I_v}(\tau_s - \tau_{load}) \label{eq:MHT_wheel_spd_dyn} \\
    \dot{\tau}_{s} &=
    c_s(\omega_s - \omega_w) + \frac{d_s}{(I_t + I_{txm})\Psi_{fd}}\Big(\tau_tR_{txm}\frac{1}{\Psi_{fd}}\tau_s\Big) - \frac{d_s}{I_v}(\tau_s - \tau_{load}) \label{eq:MHT_shaft_torque_dyn}
\end{align}
where $\alpha_e$ is the engine torque reduction ratio; $R_{txm}$ is the effective transmission ratio; $I_{txm}$ is the effective transmission inertia; $\tau_{im}$ and $\tau_t$ are governed by the nonlinear dynamics of steady-state torque converter.

\subsubsection{Implementation of CVPC}
\label{sec:MHT_UQ}
Prior to an engine start operation, the electric motor propels the vehicle while the ICE is shut off. In scenarios where the torque supply from the engine is required, the clutch must be carefully controlled to divert part of the motor torque to crank the engine without causing abrupt drops in drive torque. During this process, clutch actuator pressure in engine start operation may be heavily modulated, causing the clutch to exhibit pronounced hysteresis and uncertainty \cite{shui2021machine}. Furthermore, the actuator torques generated by both ICE and electric motor have uncertainties. Crucial model parameters such as the lumped stiffness and damping coefficients cannot be captured exactly with deterministic values. Therefore, the following five sources of uncertainty are accounted for when we apply CVPC for the UQ of engine start simulations: (1) torque transmitted by the disconnect wet clutch, which is modeled using a zero-mean Gaussian offset with a standard deviation of $10Nm$; (2) torque generated by ICE, which is modeled using a zero-mean Guassian random variable with a standard deviation of $5Nm$; (3) torque generated by electric motor, which is modeled using a zero-mean Gaussian random variable with a standard deviation of $5Nm$; (4) lumped stiffness, which is modeled using a Gaussian random variable with a mean equal to the nominal value estimated based experimental data and a standard deviation that is $20\%$ of the mean value; and (5) lumped damping, which is modeled using a Gaussian random variable with a mean equal to the nominal value estimated based experimental data and a standard deviation that is $20\%$ of the mean value. Similar to the gasoline-powered propulsion system example, here we utilize Algorithm \ref{alg:OptDesignCVPC} to generate the optimal configuration for the CVPC estimator in order to achieve the best UQ performance for the application. The optimal designs over a range of computational budget are shown in Figure \ref{f:opt_design_MHT}, where the computational budget is measured by multiples of a single evaluation of the high-fidelity model given by \eqref{eq:MHT_engine_spd_dyn} - \eqref{eq:MHT_shaft_torque_dyn}. 

\begin{figure}[t]
    \centering
    \subfloat[\centering]{{\includegraphics[width=7.5cm]{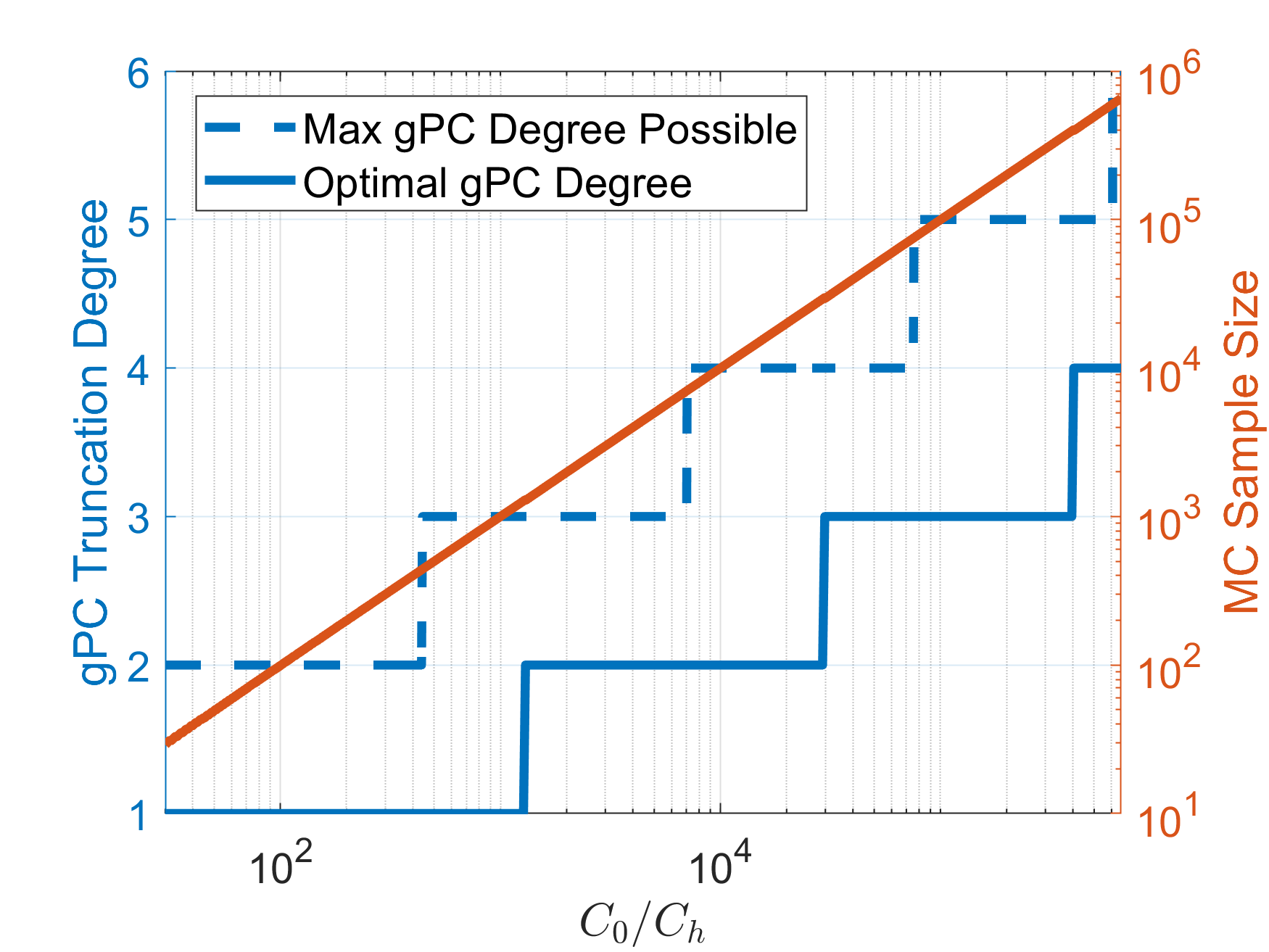}}}
    \qquad
    \subfloat[\centering]{{\includegraphics[width=7.5cm]{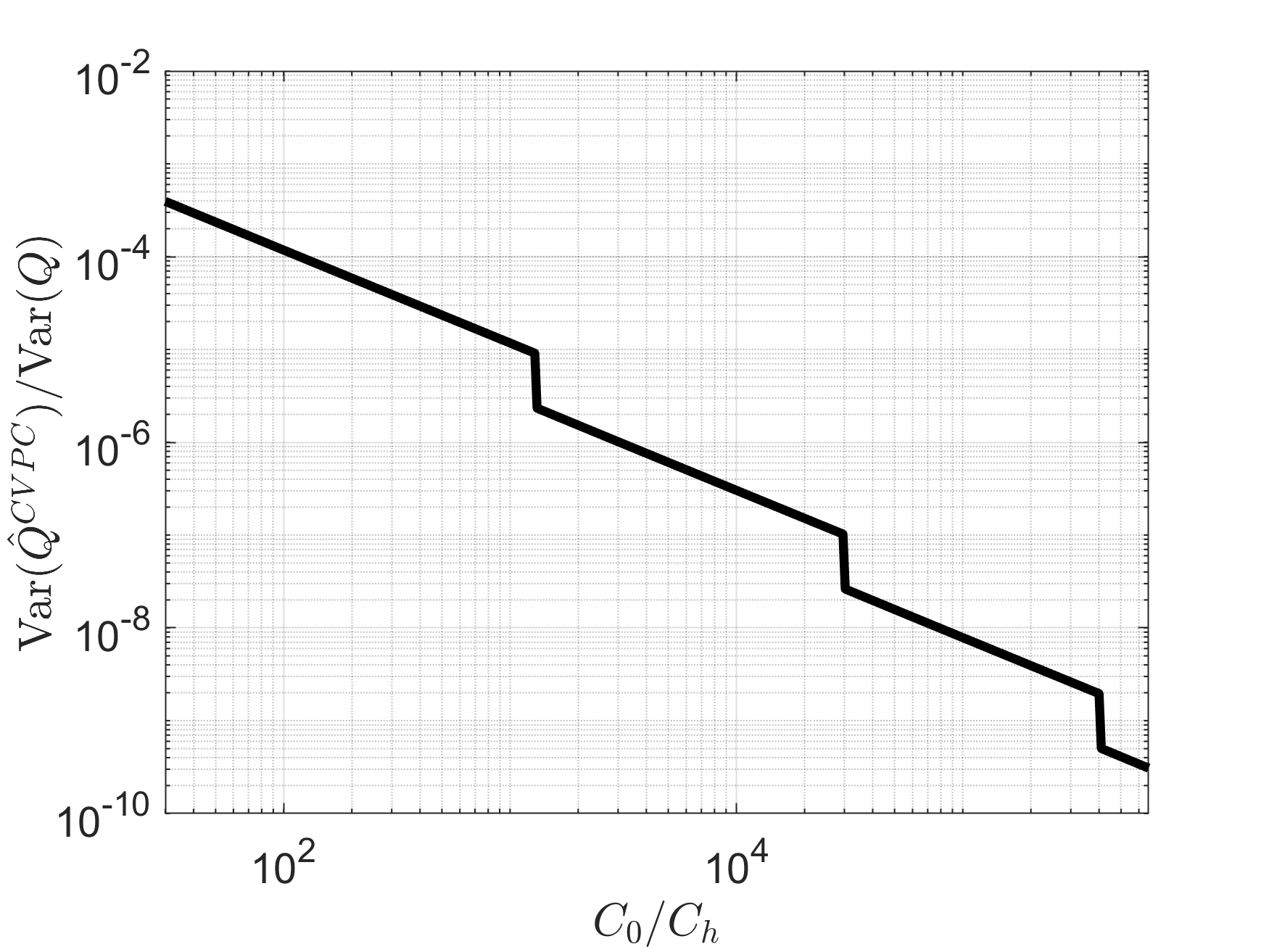}}}
    \caption{(a) The optimal CVPC estimator design at various computational budgets. The computational budgets are expressed as the ratio of the budget to the cost of evaluating a single realization of the high-fidelity model. Algorithm \ref{alg:OptDesignCVPC} balances the utilization of gPC and MC such that the estimator variance of the resulting CVPC is minimized; (b) The minimal normalized estimator variance of the resulting CVPC at various computational budgets.}
    \label{f:opt_design_MHT}
\end{figure}

In this example, we are given a computational budget that is equivalent to the cost of running 5000 evaluations of the high-fidelity model. Based on the given computational budget, we select an optimal CVPC configuration with degree-2 gPC and 4980 MC samples. Optimal CVPC estimators are implemented to estimate the mean and variance of vehicle axle shaft torque $\tau_s$. Two sets of optimal CV weights are computed for mean and variance estimations, respectively. The optimal CV weights for mean and variance estimations are given in \eqref{eq:optimal_CV_weight} and \eqref{eq:opt_CV_weight_for_var}, respectively, and are shown in Figure \ref{f:opt_CV_weights_MHT}. Both CV weights hover very closely around $-1$ indicating that the correlations between the high- and low-fidelity components of the CVPC estimator are very high throughout the simulation. 

\begin{figure}
\centering
\includegraphics[width=7.5cm]{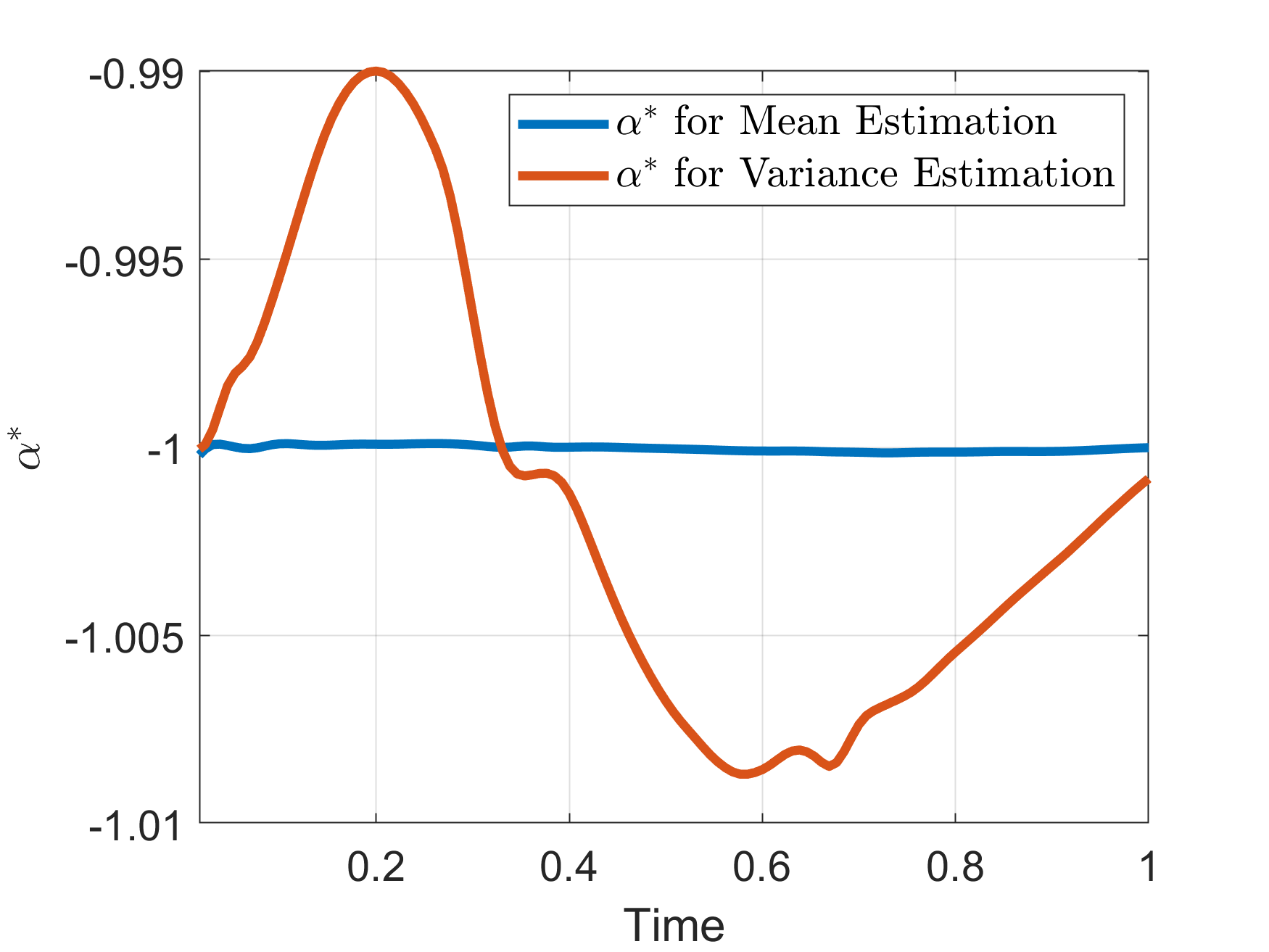}
\caption{The optimal CV weights for mean and variance estimations in the MHT hybrid propulsion system example. Both CV weights stay close to $-1$, indicating high correlations between the low- and high-fidelity components of the CVPC estimator.}
\label{f:opt_CV_weights_MHT}
\end{figure}

Next, we benchmark the performance of CVPC against a MC estimator and a gPC estimator under the same computational budget constraint. The RMSE values are obtained based on a reference solution obtained using a one-million-sample MC estimator. The results are shown in Figure \ref{f:CVPC_RMSE_MHT}. For the mean estimation of $\tau_s$, CVPC consistently delivers over to an order of magnitude accuracy improvement over gPC and over two orders of magnitude improvement over MC. For the variance estimation, CVPC provides close to an order of magnitude improvement over gPC and close to two orders of magnitude improvement over MC. 

\begin{figure} [t]
    \centering
    \subfloat[\centering]{{\includegraphics[width=7.5cm]{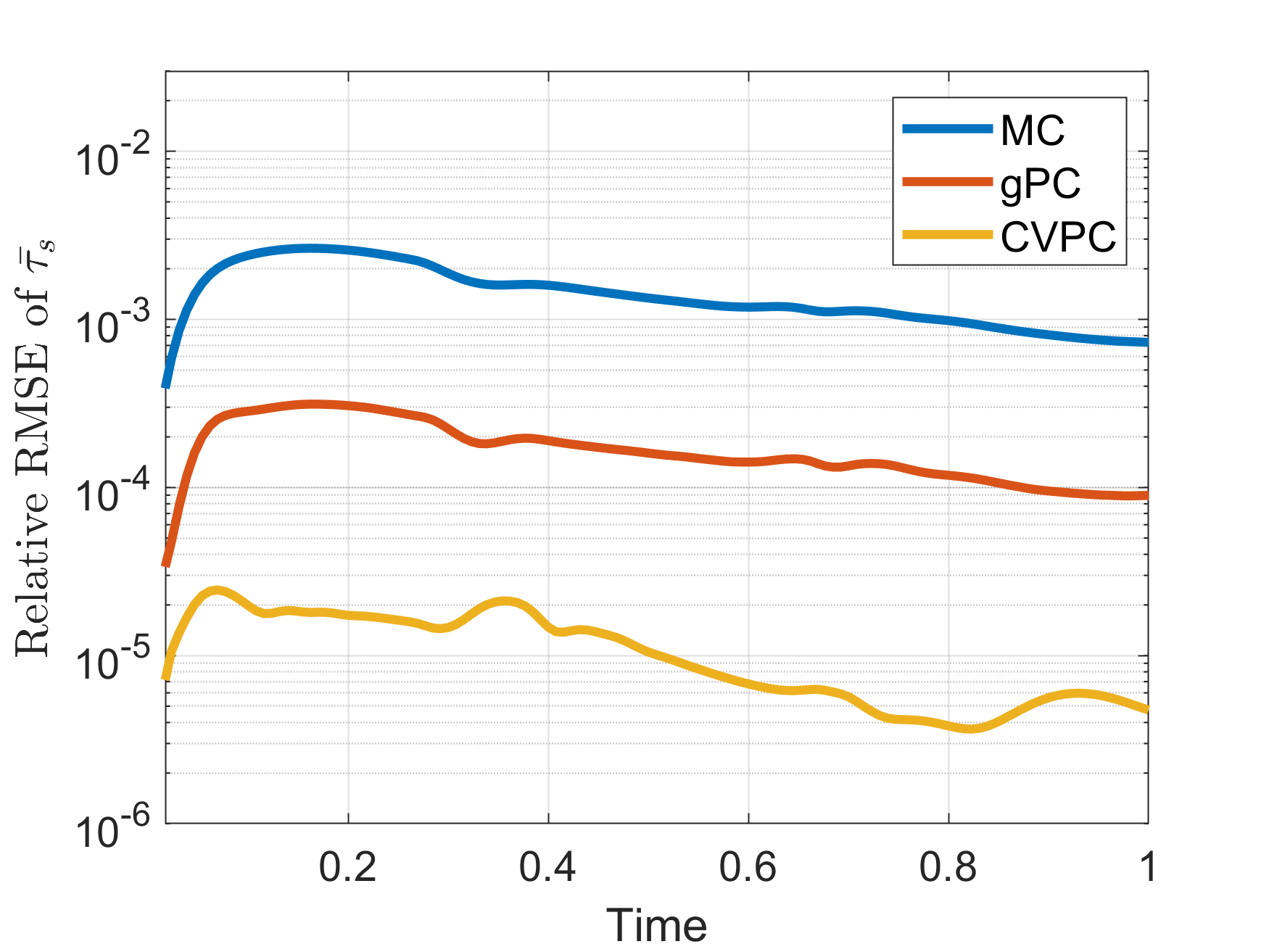}}}
    \qquad
    \subfloat[\centering]{{\includegraphics[width=7.5cm]{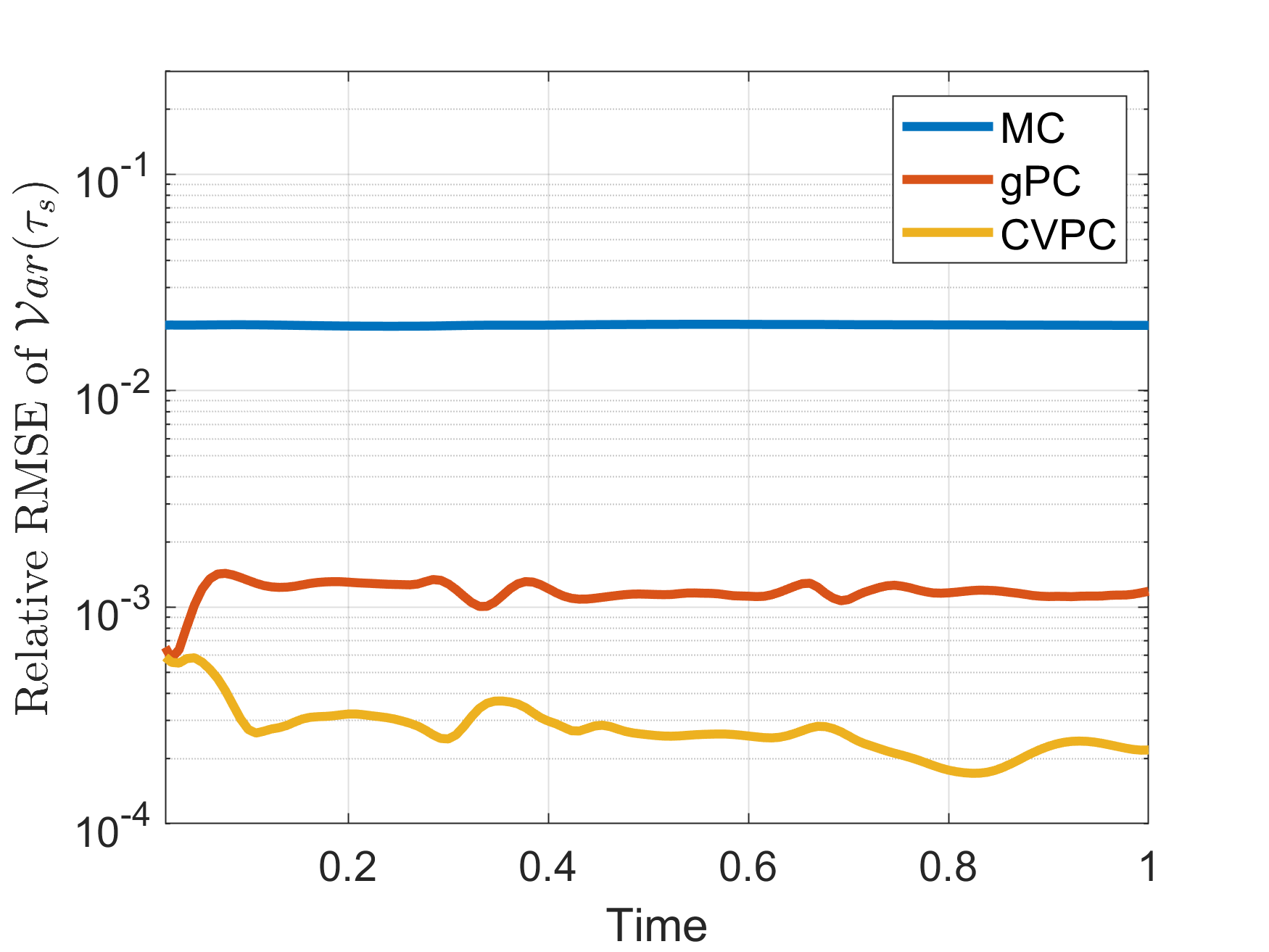}}}
    \caption{(a) Relative RMSE of estimates of the mean axle shaft torque $\tau_s$ calculated at various time instances. CVPC outperforms MC by multiple orders of magnitude while offering a less significant RMSE reduction when compared to gPC. (b) Relative RMSE of estimates of the variance of axles shaft torque. CVPC outperforms MC by multiple orders of magnitude while offering close to an order of magnitude RMSE reduction over gPC for the majority portion of the simulation. All results are obtained under the same computational cost constraint. And all axes are normalized to protect proprietary information.}
    \label{f:CVPC_RMSE_MHT}
\end{figure}

Therefore, for a MHT-based hybrid automotive propulsion system, the proposed CVPC demonstrates excellent UQ performance for the engine start simulation during a EV-HEV transition. Specifically, given the same computational budget, the proposed CVPC estimator delivers UQ accuracy that is about an order of magnitude better than gPC and about two orders of magnitude over MC. The RMSE reduction over gPC is less significant in mean estimation. Compared to the torque phase simulation of gasoline-powered propulsion system presented in section \ref{sec:auto_conventional}, the UQ of engine start simulation of a hybrid propulsion system is a problem of higher dimension, which negatively impact the performance of the gPC approximation. Therefore, by comparing the results in this section and that of section \ref{sec:auto_conventional}, we can see that CVPC's performance advantage over gPC is more significant and more consistent in the higher-dimensional hybrid propulsion system problem.

\section{Conclusions}
\label{sec:conclusions}
In this work, we establish the theoretical foundation of the optimal estimator design for CVPC -- a highly efficient multifidelity UQ method that combines the spectral decomposition technique of gPC with the sampling technique of MC via the use of CV. Specifically, we develop a rigorous method to balance the computational resource allocations between the gPC-based low-fidelity surrogate and the MC-based high-fidelity components within a CVPC estimator, which is done by minimizing the bias introduced by gPC and the statistical error caused by MC sampling. We prove the optimality of the estimator design for several representative use cases. Furthermore, we provide detailed algorithms as guidelines to optimally design and implement a CVPC estimator with a prescribed computational budget. The proposed method is simple in its construction and flexible to suit the needs of different applications. Multiple numerical examples are presented to demonstrate the performance of CVPC for uncertainty quantification of nonlinear systems. The proposed estimator is benchmarked against conventional MC and gPC estimators for estimation accuracy under the same computational budget. Specifically, numerical studies are conducted on two types of automotive propulsion systems and the classic Lorenz system in two parametric configurations. Results show that the proposed method outperforms both benchmarks, in some cases by multiple orders of magnitude. We remark that CVPC does have limitations in the sense that, in certain scenarios, the inaccuracy of the low-fidelity gPC component could cause the multifidelity estimator to be less efficient than its single-fidelity gPC counterpart. This can be addressed by future work on adaptive CVPC where the inefficient component can be dropped automatically. Additional future work will be to develop CVPC estimators to employ non-intrusive gPC techniques for increased applicability to a wide range of highly nonlinear systems. Furthermore, we conjecture that predictive control and reinforcement learning algorithms can benefit substantially from the adoption of CVPC in terms of computational efficiency. 

\paragraph{Acknowledgements}
The authors gratefully acknowledge the financial support of the Ford Motor Company.

\newpage
\bibliographystyle{plainnat}
\bibliography{references}

\newpage
\appendix
\section{Derivation of the Optimal Control Variate Weight for Variance Estimation}
\label{app:derivation_opt_CV_weight_var}
In this appendix, we derive the optimal CV weight for variance estimation, assuming the mean $\mu$ of the QoI is known.

Based on \eqref{eq:optimal_CV_weight}, the optimal CV weight for variance estimation is:
\begin{equation} \label{app:eq:opt_CV_weight_var}
    \alpha^*_{\mathbb{V}\text{ar}} = \frac{\mathbb{C}\text{ov}\Big[\hat{Q}^{MC}_{\mathbb{V}\text{ar}}, \, \hat{Q}^{MC\mh PC}_{\mathbb{V}\text{ar}}\Big]}{\mathbb{V}\text{ar}\Big[\hat{Q}^{MC\mh PC}_{\mathbb{V}\text{ar}}\Big]}
\end{equation}
where $\hat{Q}^{MC}_{\mathbb{V}\text{ar}}$ is the high-fidelity MC estimator of the variance and $\hat{Q}^{MC\mh PC}_{\mathbb{V}\text{ar}}$ is the CME of the variance.

Next, we derive the expression for $\mathbb{C}\text{ov}\Big[\hat{Q}^{MC}_{\mathbb{V}\text{ar}}, \, \hat{Q}^{MC\mh PC}_{\mathbb{V}\text{ar}}\Big]$:
\begin{align}
    \mathbb{C}\text{ov}\Big[\hat{Q}^{MC}_{\mathbb{V}\text{ar}}, \, \hat{Q}^{MC\mh PC}_{\mathbb{V}\text{ar}}\Big] &=
    \mathbb{C}\text{ov}\Big[ \frac{1}{N}\sum^N_{i=1}\big( Q(\zeta^{(i)}) - \mu \big)^2, \, \frac{1}{N}\sum^N_{i=1}\big( Q^{PC}(\zeta^{(i)}) - \mu^{PC} \big)^2 \Big] \label{app:eq:cov_line1} \\
    &= \frac{1}{N^2}\sum^N_{i=1}\sum^N_{j=1}\mathbb{C}\text{ov}\Big[ \big( Q(\zeta^{(i)}) -\mu \big)^2, \, \big( Q^{PC}(\zeta^{(j)}) - \mu^{PC} \big)^2 \Big] \label{app:eq:cov_line2}
\end{align}
where \eqref{app:eq:cov_line1} is obtained from the definition of sample variance and the standard formula of MC and \eqref{app:eq:cov_line2} is obtained using the following fact from the rule of covariance:
\begin{equation} \label{app:eq:rule_of_cov}
    \mathbb{C}\text{ov}\bigg[ \sum^N_{i=1} a_iX_i, \, \sum^M_{j=1}b_jW_j \bigg] = \sum^N_{i=1}\sum^M_{j=1}a_ib_j\mathbb{C}\text{ov}[X_i, \, W_j]
\end{equation}
Then, noting that $\xi^{(i)}$ and $\xi^{(j)}$ are independent when $i \neq j$, we can further derive the following:
\begin{align}
    \mathbb{C}\text{ov}\Big[\hat{Q}^{MC}_{\mathbb{V}\text{ar}}, \, \hat{Q}^{MC\mh PC}_{\mathbb{V}\text{ar}}\Big] &=
    \frac{1}{N^2}\sum^N_{i=1}\mathbb{C}\text{ov}\Big[ \big( Q(\zeta^{(i)}) -\mu \big)^2, \, \big( Q^{PC}(\zeta^{(i)}) - \mu^{PC} \big)^2 \Big] \label{app:eq:cov_line3} \\
    &= \frac{1}{N}\mathbb{C}\text{ov}\big[ (Q-\mu)^2, \, (Q^{PC} - \mu^{PC})^2 \big] \label{app:eq:cov_line4}
\end{align}
where \eqref{app:eq:cov_line4} is based on the definition of sample covariance. 

In pilot sampling, we can estimate the covariance between different $Q$ and $Q^{PC}$, as well as the covariances among $Q$, $Q^2$, $Q^{PC}$, and $(Q^{PC})^2$, for example $\mathbb{C}\text{ov}\Big[ Q^2, \, \big(Q^{PC}\big)^2 \Big]$. Therefore, next we seek to express \eqref{app:eq:cov_line4} as a function of the covariances between various orders of $Q$ and $Q^{PC}$:
\begin{align}
    \mathbb{C}\text{ov}\Big[\hat{Q}^{MC}_{\mathbb{V}\text{ar}}, \, \hat{Q}^{MC\mh PC}_{\mathbb{V}\text{ar}}\Big] &=
    \frac{1}{N}\mathbb{C}\text{ov}\Big[ \big( Q^2 + \mu^2 - 2\mu Q \big), \, \big( (Q^{PC})^2 + (\mu^{PC})^2 - 2\mu^{PC}Q^{PC} \big) \Big] \label{app:eq:cov_line5} \\
    &= \frac{1}{N}\mathbb{C}\text{ov}\Big[ \big( Q^2-2\mu Q \big), \, \big( (Q^{PC})^2 - 2\mu^{PC} Q^{PC} \big) \Big] \label{app:eq:cov_line6} \\
    &= \frac{1}{N}\Big[ \mathbb{C}\text{ov}\big[ (Q^2 - 2\mu Q), \, (Q^{PC})^2 \big] + \mathbb{C}\text{ov}\big[(Q^2 -2\mu Q), \, 2\mu^{PC}Q^{PC} \big] \Big] \label{app:eq:cov_line7} \\
    &= \frac{1}{N}\Big( \mathbb{C}\text{ov}\big[Q^2, \, (Q^{PC})^2\big] - \mathbb{C}\text{ov}\big[ 2\mu Q, \, (Q^{PC})^2 \big] \nonumber \\
    & \qquad + \mathbb{C}\text{ov}\big[ Q^2, \, 2\mu^{PC}Q^{PC} \big] - \mathbb{C}\text{ov}\big[ 2\mu Q, \, 2\mu^{PC}Q^{PC} \big] \Big) \label{ap:eq:cov_line8} \\
    &= \frac{1}{N}\Big( \mathbb{C}\text{ov}\big[ Q^2, \, (Q^{PC})^2 \big] - 2\mu \mathbb{C}\text{ov}\big[ Q, \, (Q^{PC})^2 \big] \nonumber \\
    & \qquad + 2\mu^{PC}\mathbb{C}\text{ov}\big[ Q^2, \, Q^{PC} \big] - 4\mu\mu^{PC}\mathbb{C}\text{ov}\big[ Q, \, Q^{PC} \big] \Big) \label{app:eq:cov_line9}
\end{align}
where \eqref{app:eq:cov_line5} expands the squares in \eqref{app:eq:cov_line4}, \eqref{app:eq:cov_line6} uses the fact that the covariance with respect to a deterministic variable is zero, and \eqref{app:eq:cov_line7}-\eqref{app:eq:cov_line9} use the rule of covariance to further expand the terms.

Next, we derive the expression for $\mathbb{V}\text{ar}\Big[\hat{Q}^{MC\mh PC}_{\mathbb{V}\text{ar}}\Big]$:
\begin{align}
    \mathbb{V}\text{ar}\Big[\hat{Q}^{MC\mh PC}_{\mathbb{V}\text{ar}}\Big] &=
    \mathbb{V}\text{ar}\bigg[ \frac{1}{N}\sum^N_{i=1} \big( Q^{PC}(\zeta^{(i)}) - \mu^{PC} \big)^2 \bigg] \label{app:eq:var_line1} \\
    &= \frac{1}{N}\bigg( \frac{1}{N}\sum^N_{i=1}\mathbb{V}\text{ar}\Big[ \big( Q^{PC}(\zeta^{(i)}) - \mu^{PC} \big)^2 \Big] \bigg) \label{app:eq:var_line2} \\
    &= \frac{1}{N}\mathbb{V}\text{ar}\Big[ \big( Q^{PC} - \mu^{PC} \big)^2 \Big] \label{app:eq:var_line3}
\end{align}
where \eqref{app:eq:var_line1} is based on the definition of sample variance and the standard formula of MC, \eqref{app:eq:var_line2} is obtained from the fact that the term $\big( Q^{PC}(\xi^{(i)}) - \mu^{PC} \big)$ is zero-mean, and \eqref{app:eq:var_line3} is obtained from the definition of sample variance. 

In pilot sampling, we can estimate the variances of various orders of $Q$ and $Q^{PC}$. Therefore, next we seek to express \eqref{app:eq:var_line3} as a function of variances and covariances of $Q$, $Q^{PC}$, and $(Q^{PC})^2$:
\begin{align}
    \mathbb{V}\text{ar}\Big[\hat{Q}^{MC\mh PC}_{\mathbb{V}\text{ar}}\Big] &=
    \frac{1}{N}\Big( \mathbb{V}\text{ar}\big[ (Q^{PC})^2 \big] + 4(\mu^{PC})^2\mathbb{V}\text{ar}\big[ Q^{PC}\big] - 2\mathbb{C}\text{ov}\big[ (Q^{PC})^2, \, 2\mu^{PC}Q^{PC} \big] \Big) \label{app:eq:var_line4} \\
    &= \frac{1}{N}\Big( \mathbb{V}\text{ar}\big[ (Q^{PC})^2 \big] + 4(\mu^{PC})^2\mathbb{V}\text{ar}\big[ Q^{PC} \big] - 4\mu^{PC}\mathbb{C}\text{ov}\big[ (Q^{PC})^2, \, Q^{PC} \big] \Big) \label{app:eq:var_line5}
\end{align}
where \eqref{app:eq:var_line4} uses the rule of variance and \eqref{app:eq:var_line5} uses the rule of covariance to extract the deterministic term $\mu^{PC}$.

Finally, we substitute \eqref{app:eq:cov_line9} and \eqref{app:eq:var_line5} into \eqref{app:eq:opt_CV_weight_var} to obtain the expression for the optimal CV weight for variance estimation as a function of variances and covariances of $Q$, $Q^2$, $Q^{PC}$, and $(Q^{PC})^2$:
\begin{align}
    \alpha^*_{\mathbb{V}\text{ar}} &=
    \frac{\mathbb{C}\text{ov}\big[Q^2, \, (Q^{PC})^2\big] - 2\mu\mathbb{C}\text{ov}\big[Q, \, (Q^{PC})^2\big] + 2\mu^{PC}\mathbb{C}\text{ov}\big[Q^2, \, Q^{PC}\big]}{\mathbb{V}\text{ar}\big[(Q^{PC})^2\big] + 4(\mu^{PC})^2\mathbb{V}\text{ar}\big[Q^{PC}\big] - 4\mu^{PC}\mathbb{C}\text{ov}\big[(Q^{PC})^2, \, Q^{PC}\big]} \nonumber \\
    & \qquad - \frac{4\mu\mu^{PC}\mathbb{C}\text{ov}\big[Q, \, Q^{PC}\big]}{\mathbb{V}\text{ar}\big[(Q^{PC})^2\big] + 4(\mu^{PC})^2\mathbb{V}\text{ar}\big[Q^{PC}\big] - 4\mu^{PC}\mathbb{C}\text{ov}\big[(Q^{PC})^2, \, Q^{PC}\big]} \label{eq:opt_CV_weight_for_var_sol}
\end{align}
\end{document}